\newcommand{\defemph}{\textbf}
\newcommand{\dom}{{\sf dom}}
\newcommand{\rng}{{\sf rng}}
\newcommand{\restr}{\upharpoonright}
\newcommand{\prot}{\mathit{Pr}}
\newcommand{\recsend}[2]{{#1}\!\Rightarrow\!{#2}}
\renewcommand{\phi}{\varphi}
\renewcommand{\epsilon}{\varepsilon}
\newcommand{\knowfunc}{{\sf k}}
\newcommand{\knowupd}{{\it update}}
\newcommand{\names}{\mathscr{N}}
\newcommand{\Tterms}{\mathscr{T}}
\newcommand{\vars}{\mathscr{V}}
\newcommand{\qvar}{\mathscr{V}_{q}}
\newcommand{\ivar}{\mathscr{V}_{i}}
\newcommand{\ag}{\mathscr{A}}
\newcommand{\varsof}{{\sf vars}}
\newcommand{\freevars}{{\sf fv}}
\newcommand{\boundvars}{{\sf bv}}
\newcommand{\subterms}{{\sf st}}
\newcommand{\func}{{\sf f}}
\newcommand{\gunc}{{\sf g}}
\newcommand{\sk}{\mathit{sk}}
\newcommand{\pk}{\mathit{pk}}
\newcommand{\pair}{{\sf pair}}
\newcommand{\senc}{{\sf senc}}
\newcommand{\sign}{{\sf sign}}
\newcommand{\blind}{{\sf blind}}
\newcommand{\positions}[1]{\mathbb{P}(#1)}
\newcommand{\posof}[2]{\mathbb{P}_{#1}({#2})}
\newcommand{\subtermat}[2]{{#1}|_{#2}}
\newcommand{\replsubtermat}[3]{{#1}[{#3}]_{#2}}
\newcommand{\abstractable}{\mathbb{A}}
\newcommand{\nat}{\mathbb{N}}
\newcommand{\subformulas}{{\sf sf}}
\newcommand{\equals}[2]{{{#1}\bowtie{#2}}}
\newcommand{\listmemb}{\twoheadleftarrow}
\newcommand{\disj}{\vee}
\newcommand{\conj}{\wedge}
\newcommand{\says}{\ \mathit{says}\ }
\newcommand{\assertact}{{\sf assert}}
\newcommand{\denyact}{{\sf deny}}
\newcommand{\elg}{{\sf el}}
\newcommand{\publics}{{\sf pubs}}
\newcommand{\rnrule}{{\sf r}}
\newcommand{\rnpk}{{\sf pk}}
\newcommand{\rnpair}{{\sf pair}}
\newcommand{\rnsplit}{{\sf split}}
\newcommand{\rnsenc}{{\sf senc}}
\newcommand{\rnsdec}{{\sf sdec}}
\newcommand{\rnaenc}{{\sf aenc}}
\newcommand{\rnadec}{{\sf adec}}
\newcommand{\introd}{\sf i}
\newcommand{\elimin}{\sf e}
\newcommand{\rnax}{{\sf ax}}
\newcommand{\rneq}{{\sf eq}}
\newcommand{\rnsubst}{{\sf subst}}
\newcommand{\rncons}{{\sf cons}}
\newcommand{\rntrans}{{\sf trans}}
\newcommand{\rnsymm}{{\sf sym}}
\newcommand{\rnsays}{{\sf say}}
\newcommand{\rnweak}{{\sf wk}}
\newcommand{\rnproj}{{\sf proj}}
\newcommand{\rncut}{{\sf cut}}
\newcommand{\rnconji}{\wedge\introd}
\newcommand{\rnconje}{\wedge\elimin}
\newcommand{\rnexintro}{\exists\introd}
\newcommand{\rnexelim}{\exists\elimin}
\newcommand{\rnlint}{{\sf int}}
\newcommand{\rnlwk}{{\sf wk}}
\newcommand{\rnprom}{{\sf prom}}
\newcommand{\derives}{\vdash}
\newcommand{\DYderives}{\vdash_{\mathit{dy}}}
\newcommand{\DYnderives}{\nvdash_{\mathit{dy}}}
\newcommand{\assderives}{\vdash_{\mathit{a}}}
\newcommand{\eqderives}{\vdash_{\mathit{eq}}}
\newcommand{\measure}{\delta}
\newcommand{\newconseq}[3]{\mathscr{E}^{{#3}}_{{#1},{#2}}}
\newcommand{\dc}{\mathit{ker}}
\newcommand{\honterms}{\mathit{HT}}
\newcommand{\inteqs}{\mathit{IE}}
\newcommand{\intterms}{\mathit{IT}}
\newcommand{\intsub}{\sigma}
\newcommand{\vintsub}{\sigma^{\!*}}
\newcommand{\hwsub}{\theta}
\newcommand{\iwsub}{\mu}
\newcommand{\viwsub}{\mu^{\!*}}
\newcommand{\bigsub}{\omega}
\newcommand{\vbigsub}{\omega^{\!*}}
\newcommand{\vlambda}{\lambda^{\!*}}
\newcommand{\wh}{\widehat}
\newcommand{\dommu}{Z}
\newcommand{\expplus}[1]{{#1}^{*}}
\newcommand{\zap}[1]{\overline{\mspace{1mu}{#1}\mspace{1mu}}}
\newcommand{\dagsize}[1]{|\subterms({#1})|}
\newcommand{\fixedname}{{\sf m}}
\newcommand{\constst}{\mathscr{C}}
\newcommand{\stnonvars}{\mathscr{D}}
\newcommand{\hatst}{\wh{\constst}}
\newcommand{\rank}{\mathit{rank}}
\newcommand{\SF}{\subformulas}
\newcommand{\subterm}{\subterms}
\newcommand{\axiomsof}{{\sf axioms}}
\newcommand{\concof}{{\sf conc}}
\newcommand{\termsof}{{\sf terms}}
\newcommand{\listsof}{{\sf lists}}
\newcommand{\atomsof}{{\sf at}}
\newcommand{\eatomsof}{{\sf hat}}
\newtheorem{theorem}{Theorem}
\newtheorem{example}[theorem]{Example}
\newtheorem{obs}[theorem]{Observation}
\newtheorem{lemma}[theorem]{Lemma}
\newtheorem{definition}[theorem]{Definition}
\title{Solving the insecurity problem for assertions}
\author{
    R Ramanujam\\
    \textit{The Institute of Mathematical Sciences, Chennai (Retd.)} \\
    \textit{Homi Bhabha National Institute, Mumbai (Retd.)} \\
    \textit{Azim Premji University, Bengaluru (Visiting)} \\
    Bengaluru, India \\
    Email: \texttt{jam@imsc.res.in}
    \and
    Vaishnavi Sundararajan \\
    \textit{Dept of Computer Science \& Engineering} \\
    \textit{Indian Institute of Technology Delhi}\\
    New Delhi, India \\
    Email: \texttt{vaishnavi@cse.iitd.ac.in}
    \and
    S P Suresh\thanks{Partially supported by a grant from the Infosys Foundation.}\\
    \textit{Chennai Mathematical Institute} \\
    \textit{CNRS UMI 2000 ReLaX} \\
    Chennai, India \\
    Email: \texttt{spsuresh@cmi.ac.in}
}
\date{}
\begin{document}
\maketitle

\begin{abstract}
    In the symbolic verification of cryptographic protocols, a central problem is deciding whether a protocol admits an execution which leaks a designated secret to the malicious intruder. In~\cite{RT03}, it is shown that, when considering finitely many sessions, this ``insecurity problem'' is NP-complete. Central to their proof strategy is the observation that any execution of a protocol can be simulated by one where the intruder only communicates terms of bounded size.  However, when we consider models where, in addition to terms, one can also communicate logical statements about terms, the analysis of the insecurity problem becomes tricky when both these inference systems are considered together. In this paper we consider the insecurity problem for protocols with logical statements that include {\em equality on terms} and {\em existential quantification}. Witnesses for existential quantifiers may be unbounded, and obtaining small witness terms while maintaining equality proofs complicates the analysis considerably. We extend techniques from~\cite{RT03} to show that this problem is also in NP.
\end{abstract}

\section{Introduction}
\subsection{Symbolic analysis of cryptographic protocols}
Symbolic analysis of security protocols is a long-standing field of study, with the Dolev-Yao model~\cite{DY83} being the standard. In this model, cryptographic operations are abstracted as operators in a term algebra, and the ability to build new messages from old ones is specified by rewrite rules or a proof system. The model includes an intruder who controls the network, and can see, block, inject, redirect, as well as derive terms, but cannot break cryptography. Informally, protocols are specified as a finite sequence of \emph{communications} between \emph{principals/agents}. We now illustrate this model using an example.

\begin{example}\label{ex:exprot}
Alice sends to Bob her public key as well as a randomly-chosen value encrypted in Bob's public key. Bob receives it, decrypts it using his private key, encrypts it in Alice's public key, and sends it back to her.  We split each communication into a send and a receive. We formalize the protocol as two \emph{roles}: an \emph{initiator} role ${{\sf init}}(A,B)$ (left column) and a \emph{responder} role ${{\sf resp}}(B)$ (right column). We use $!C$ and $?C$ to denote a send and a receive respectively by an agent $C \in \{A, B\}$. $k_{A}$ and $k_{B}$ stand for the private keys of $A$ and $B$ respectively, $\pk(k)$ stands for the public key corresponding to a key $k$, and $\{t\}_{k}$ stands for the encryption of a message $t$ using a key $k$. 

\begin{minipage}[t]{0.45\columnwidth}
    \vspace{-3mm}
    \begin{align*}
        A &: \text{Generate fresh}\ m \\
        !A &: (\pk(k_{A}), \{m\}_{\pk(k_{B})}) \\
        ?A &: \{m\}_{\pk(k_{A})}
    \end{align*}
\end{minipage}
\vspace{5mm}
\hfill{\vline width 2pt}\hfill
\begin{minipage}[t]{0.4\columnwidth}
    \vspace{-3mm}
    \begin{align*}
        ?B &: (x, \{y\}_{\pk(k_{B})}) \\
        !B &: \{y\}_{x} 
    \end{align*}
\end{minipage}

The protocol itself can be thought of as a program running potentially unboundedly-many copies (\emph{sessions}) of ${{\sf init}}$ and ${{\sf resp}}$ in parallel. Each copy instantiates parameters $A$ and $B$ with agent names, while  $x$ and $y$ denote parts of messages received while participating in a session, and will be instantiated accordingly.  An \emph{execution} (\emph{run}) of a protocol is an interleaving of a finite set of sessions, such that every sent message can be generated by the sender (based on their current knowledge), and received messages by the intruder $I$ (since every received message comes from the channel, and could have been potentially tampered with by the intruder). 
\end{example}

Is there any execution of this protocol at the end of which the intruder can derive $m$? This property is called \emph{confidentiality}. In fact, the intruder can effect the following man-in-the-middle attack, at the end of which $A$ thinks $m$ is secret between her and $B$, while $B$ thinks $m$ is secret between him and $I$. $B$ receives a message where $x$ can be matched with $\pk(k_{I})$ and $y$ with $m$, and thus sends out $\{m\}_{\pk(k_{I})}$.
\begin{align*}
    !A &: (\pk(k_{A}), \{m\}_{\pk(k_{B})}) & & \\
    & & ?B &: (\pk(k_{I}), \{m\}_{\pk(k_{B})}) \\
    & & !B &: \{m\}_{\pk(k_{I})} \\
    ?A &: \{m\}_{\pk(k_{A})}
\end{align*}

\subsection{Communicating ``assertions''}
The Dolev-Yao model and its extensions have been studied extensively over the last forty years. People have studied extensions that express richer classes of protocols and security properties~\cite{ABF17, Bla01, BMU08, CDL06}, and associated decidability and complexity results~\cite{BP05,BRS10,CS03,CKRT05,LLT07,CDS21,RS05,RS06,DLMS04,Bau05, CKR20, AC06, CRZ05}. Various verification tools have also been built based on these formal models \cite{Cre08,Bla01,Bla16,MSCB13,CKR18}.

In this paper, we consider an extension introduced in~\cite{RSS17}, which gives agents the power to communicate terms as well as logical formulas about them. These formulas, called \emph{assertions}, involve equality of terms, existential quantification, conjunction, and disjunction. For instance, we can reveal partial information about some encrypted term $\{m\}_{k}$ to a recipient who does not know the key $k$ (for instance, that the value of $m$ is either $0$ or $1$, without revealing which) by sending the assertion $\exists{x}{y}\bigl[\{x\}_{y} = \{m\}_{k} \ \conj\ x \in \{0,1\}\bigr]$. So we see that assertions allow us to model protocols that involve some kinds of \emph{certification}. Traditionally, such certification is often modelled using \emph{zero-knowledge proofs}. 

The Dolev-Yao model can also be extended with a special class of zero-knowledge terms~\cite{BMU08, BHM08b}. But in these extensions, one important component is missing: logical reasoning over certificates. This is especially important in situations where certificates communicate partial information. For example, two partial-information certificates of the form $x \in \{0, 1\}$  and $x \in \{0, 2\}$ can lead to the inference of strictly greater information, namely $x = 0$, potentially violating some security guarantees. This is one of the main features of the model in~\cite{RSS17}. Making ``assertions'', as that paper refers to such logical statements, first-class citizens provides a threefold advantage: a more transparent specification of protocols which captures design intent better, the ability to explicitly reason about certificates and thus analyze protocols more precisely, and the ability to state some security properties more easily. In~\cite{RSS17}, the authors express examples (the FOO~\cite{FOO92} and Helios~\cite{Adi08} e-voting protocols) and specify security properties using assertions. We describe the modelling of the FOO protocol in detail in Section~\ref{sec:foo}.

In~\cite{RSS17}, any communicated assertion is ``believed'' by the recipients. One way to implement this feature is to communicate a zero knowledge proof of the assertion. But formally, we send the assertion itself rather than a term standing for a zero-knowledge proof, which also allows us the possibility of choosing other implementations for the assertion. Another way in which~\cite{RSS17} differs from other modelling using ZKP terms is that these proofs need not be built ab initio every time. One can compose a new proof by combining existing proofs. These can be implemented using composable ZKPs~\cite{GS08}. These issues have been discussed in~\cite{MPR13}, which considers a logical language with conjunction and existential quantification and modular construction of ZKPs for these formulas. However, unlike~\cite{MPR13}, assertions also allow ``destructive reasoning'' from existing knowledge via elimination rules.

The main focus in this paper is to solve an interesting technical problem in our model with assertions -- the \emph{insecurity problem for finitely many sessions}. 

\subsection{The insecurity problem for finitely many sessions}
The attack on Example~\ref{ex:exprot} indicates that even for simple protocols, one needs to consider non-trivial scenarios to detect security violations. A canonical problem of interest is the \emph{insecurity problem}, which asks if a given protocol admits a run that leaks a secret to the intruder. A run is characterized by an interleaving of protocol roles ($A$ and $B$ in Example~\ref{ex:exprot}), with a substitution for the variables in messages received by agents during these roles. There can be infinitely many such substitutions, i.e. a potentially infinite number of executions, and thus, the insecurity problem is undecidable in general~\cite{ALV03, DLMS04, HT96}. In \cite{RT03}, the authors consider a restricted set of runs, and show that the insecurity problem is in NP when one considers at most $K$ sessions, for some fixed $K$. 

Even with only a finite number of sessions, the intruder can inject arbitrarily large terms in place of variables. Thus, there is no bound on the size of terms encountered in a run. The work in~\cite{RT03} gets around this complication by showing that if there is any attack at all given by an interleaving of roles and a substitution, there is an attack given by the same interleaving and a `small' substitution. This ``new'' attack is such that the intruder can derive the same terms at the end, and the size of all messages transmitted is bounded by a polynomial in the size of the protocol specification. Hence the insecurity problem with boundedly many sessions can be solved in NP. 

As with terms, one can formulate the insecurity problem for assertions as well. The general problem continues to be undecidable, so we consider the case of finitely many sessions. With existential quantification, we now have two types of variables -- those used to identify parts of received messages (instantiated at runtime by the actual message sent by the intruder), and quantified variables that occur in assertions. As earlier, there is no a priori bound on the size of terms assigned to the first kind of variables. But there is another source of unboundedness: to derive a quantified assertion $\exists{x}.~\alpha$, one must derive $\alpha(t)$ for some ``witness'' $t$. There is no a priori bound on the size of $t$ either, and proof search is further complicated by any potential interaction between these two sources of unboundedness. When we simulate a substitution for the ``intruder'' variables with a small one, the witnesses for quantifiers might change too, but we still need to preserve some derivations under these new witnesses. 

We extend the techniques of~\cite{RT03}, while considering interactions between multiple substitutions and having to preserve more complex derivations, to obtain a somewhat surprising result. In this paper, we show that the insecurity problem for assertions for finitely many sessions remains in NP. 

\subsection{Related work}
\label{sec:related}
There are many extensions of the basic Dolev-Yao model that aim to capture various cryptographic operators and their properties~\cite{AC06,BRS10,CKRT05,CS03,CDL06,CRZ05,LLT07}. Algebraic properties of operators like \textsc{xor}, blinding, distributive encryption etc. are studied by means of \emph{equation theories}, which are also referred to as \emph{intruder theories} in the security literature. Equations in these theories are implicitly universally quantified, and the intention is that any term matching one side of the equation may be replaced by the other side. For example, if the theory contains a rule of the form $\textit{unblind}(\textit{sign}(\textit{blind}(x,y),k),y) = \textit{sign}(x,k)$, it means that any instance of the LHS can be replaced by the corresponding instance of the RHS. Such equations correspond to proof rules in the system for deriving terms in this paper (examples of such systems are given in Section~\ref{subsec:terms}).

Equality assertions, on the other hand, are to be treated literally, and not as rewrite rules. For instance, given an assertion of the form $\{x\}_{k} = \{t\}_{k}$, we cannot replace all terms of the form $\{u\}_{k}$ by $\{t\}_{k}$. In fact, these equality assertions are objects that are manipulated by proof rules, rather than being another style of expressing derivations between terms. 

Along with studying the derivability problem for such extensions, several of these papers also extend the results of~\cite{RT03} by addressing the active intruder problem for finitely many sessions. For instance, \cite{CKRT05,CS03} obtain NP decision procedures in the case of extending Dolev-Yao with rules for \textsc{xor}. The current paper, however, extends~\cite{RT03} along a different dimension, to solve both the passive and active intruder problems for assertions, and is thus not subsumed by any of these works on equation theories. 

\subsection{Organization of the paper}
In Section~\ref{sec:prot}, we first introduce the syntax for terms and assertions. We present an example of modelling with assertions via the FOO e-voting protocol, and then present the proof system for assertions. Then we define protocols and runs for this new system. In Section~\ref{sec:proofstrat}, we first present a high-level overview of the various steps involved in solving the insecurity problem, and then we move on to Section~\ref{sec:insecurity}, where we present the technical results in detail and prove that insecurity for the assertion system is in NP. We present some ideas for future research in Section~\ref{sec:disc}.

\section{Modeling security protocols}\label{sec:prot}
\subsection{Terms: Syntax and Derivation System}\label{subsec:terms}
In this model, each communicated message is modelled as a term in an algebra, which has operators for pairing, encryption, hashing etc. New terms can be derived from old ones using proof rules, which specify the behaviour of these operators. We begin with a set $\names$ of names (atomic terms, with no further structure), and a set of variables $\vars$. We denote by $\ag \subseteq \names$ the set of agents, with $I \in \ag$ being the malicious intruder. We denote by $\qvar \subset \vars$ the variables used for quantification, and by $\ivar$ the set $\vars\setminus\qvar$. The set of terms, denoted by $\Tterms$, is given by
\[
    t \in \Tterms ::= x \mid m \mid \func (t_{1}, \ldots, t_{n})
\]
where $x \in \vars$,  $m \in \names$, $t_{1}, \ldots t_{n} \in \Tterms$, and $\func$ is an $n$-ary operator. The set of \emph{ground} terms are those without variables. A substitution $\sigma$ is a partial function with finite support from $\ivar$ to $\Tterms$. Its domain is denoted by $\dom(\sigma)$. We assume that $\sigma(x) = x$ for $x \not\in \dom(\sigma)$. The set of subterms of $t$ is denoted by $\subterms(t)$, and defined as usual. The set of variables appearing in $t$ is denoted by $\varsof(t)$.

Each $\func$ has constructor rules and destructor rules, expressed in terms of sequents of the form $X \vdash t$ (to be read as ``$t$ is derived from $X$''), where $X \cup \{t\}$ is a finite set of terms. Figure~\ref{fig:consdest} gives the general form of a constructor rule (on the left) and a destructor rule (on the right). In a destructor rule, the conclusion $t_{i}$ is an immediate subterm of the leftmost premise, which is designated as the \emph{major premise} of the rule. The $\rnax$ rule (which derives $X \vdash t$ when $t \in X$) is also considered a destructor rule for technical purposes. We say $X \DYderives t$ if there is a proof of $X \vdash t$ using these constructor and destructor rules, and $X \DYderives S$ to mean that $X \DYderives t$ for every $t \in S$.

\begin{figure}
    \centering
    \small
    \begin{tabu}{cc}
        \begin{prooftree}
            X \vdash t_{1} \cdots X \vdash t_{n}
            \justifies X \vdash \func(t_{1},\ldots,t_{n})
        \end{prooftree}
        &
        \begin{prooftree}
            X \vdash \func(t_{1},\ldots,t_{n}) \quad X \vdash u_{1}  \cdots X \vdash u_{m}
            \justifies X \vdash t_{i}
        \end{prooftree}
    \end{tabu}
    \caption{General form of constructor and destructor rules}
    \label{fig:consdest}
\end{figure}

For any proof $\pi$ of $X \vdash t$, we denote by $\axiomsof(\pi)$ the set $X$, by $\concof(\pi)$ the term $t$, and by $\termsof(\pi)$ all terms occurring in $\pi$. 
$\pi$ is said to be normal if a constructor rule does not yield the major premise of a destructor rule. We only consider proof systems which enjoy the following three properties:
\begin{itemize}
    \item \emph{Normalization}: Every proof $\pi$ of $X \vdash t$ can be converted into a normal proof $\varpi$ of the same.
    \item \emph{Subterm property}: For any normal proof $\varpi$ of $X \vdash t$, $\termsof(\varpi) \subseteq \subterms(X \cup \{t\})$, and if $\varpi$ ends in a destructor rule, $\termsof(\varpi) \subseteq \subterms(X)$.
    \item \emph{Efficient derivability checks:} There is a PTIME algorithm for checking derivability.
\end{itemize}
The normalization and subterm properties combined are referred to as \emph{locality} in the security literature. This is a notion identified in~\cite{McAll93}, and is crucially used in solving the derivability problem for many classes of inference systems, including many intruder theories.  

\begin{example}\label{ex:termalg}
    A term algebra with pairing, symmetric and asymmetric encryption operations, where $m, k \in \names$ and $t, u \in \Tterms$ is given by
$t := m \mid \pk(k) \mid (t, u) \mid \{t\}_{k} \mid \{\!|t|\!\}_{\pk(k)}$. The proof system for this algebra is shown in Table~\ref{tab:termalgtab}. This system enjoys normalization and the subterm property~\cite{RT03}.
\end{example}

\begin{table}
    \centering
    \tabulinesep=0.8mm
    \setlength{\tabcolsep}{0.4em}
    \begin{tabu}{|c|c|c|c|c|c|}
        \hline
        \multicolumn{2}{|c|}{
            \begin{prooftree}
                \justifies X \vdash m \using \rnax (m \in X)
            \end{prooftree}
        }
        &
        \multicolumn{2}{c|}{
            \begin{prooftree}
                X \vdash (t_1, t_2)
                \justifies X \vdash t_i \using \rnsplit
            \end{prooftree}
        }
        &
        \multicolumn{2}{c|}{
            \begin{prooftree}
                X \vdash k 
                \justifies X \vdash \pk(k) \using \rnpk
            \end{prooftree}
        }
        \\
        \hline
        \multicolumn{2}{|c|}{
            \begin{prooftree}
                X \vdash t \quad X \vdash u
                \justifies X \vdash (t, u) \using \rnpair
            \end{prooftree}
        }
        &
        \multicolumn{2}{c|}{
            \begin{prooftree}
                X \vdash \{t\}_{k} \quad X \vdash k
                \justifies X \vdash t \using \rnsdec
            \end{prooftree}
        }
        &
        \multicolumn{2}{c|}{
            \begin{prooftree}
                X \vdash t \quad X \vdash k
                \justifies X \vdash \{t\}_{k} \using \rnsenc
            \end{prooftree}
        }
        \\
        \hline
        \multicolumn{3}{|c|}{
            \begin{prooftree}
                X \vdash \{\!|t|\!\}_{\pk(k)} \quad X \vdash k
                \justifies X \vdash t \using \rnadec
            \end{prooftree}
        }
        &
        \multicolumn{3}{c|}{
            \begin{prooftree}
                X \vdash t \quad X \vdash \pk(k)
                \justifies X \vdash \{\!|t|\!\}_{\pk(k)} \using \rnaenc
            \end{prooftree}
        }
        \\
        \hline
    \end{tabu}
    \caption{Proof system for the term algebra in Example~\ref{ex:termalg}}
    \label{tab:termalgtab}
\end{table}

\subsection{Assertions}
We consider an assertion syntax which includes equality over terms (to avoid overloading the $=$ operator, we denote equality between $t$ and $u$ by $\equals{t}{u}$), predicates, conjunction, existentially quantified assertions, list membership, and a$\says$connective. Existential quantification allows us to make statements that convey partial information about terms, in particular, allowing us to hide terms or parts thereof. The$\says$connective works like a signature over assertions, indicating who endorses the fact conveyed by the assertion. List membership, which we denote by $\listmemb$, acts as a restricted form of disjunction. Predicates allow us to express some protocol-specific facts. As we will see over the later sections, this fragment allows us to express example protocols of interest, as well as yields a decidable active intruder problem for boundedly many sessions.

In the following, $t,u \in \Tterms$, $P$ is an $m$-ary predicate, $u_{1}, \ldots, u_{m}, t_{0} \in \names \cup \vars$, and $t_{1}, \ldots, t_{n} \in \names$,\footnote{We could consider arbitrary terms in list membership, but this simple syntax suffices for most examples. Similarly for $P(u_{1}, \ldots. u_{m})$.} $x \in \qvar$, and $\pk(k)$ is the public key corresponding to a secret key $k$.
\begin{align*}
    \alpha &:= \equals{t}{u} \mid P(u_{1}, \ldots, u_{m}) \mid t_{0} \listmemb [t_{1}, \ldots, t_{n}] \\
    & \hspace{5mm} \mid \alpha_{0} \conj \alpha_{1} \mid \exists x.~\alpha(x) \mid \pk(k) \says \alpha
\end{align*}

By \emph{atomic assertions}, we mean assertions that are not of the form $\alpha\conj\beta$ or $\exists{x}\alpha$. 

We denote the free (resp. bound) variables occurring in an assertion $\alpha$ by $\freevars(\alpha)$ and $\boundvars(\alpha)$. $\varsof(\alpha) = \freevars(\alpha) \cup \boundvars(\alpha)$. The set of subterms (resp. subformulas) of $\alpha$ is given by $\subterms(\alpha)$ (resp. $\SF(\alpha)$). We can lift these notions to sets of assertions as usual. For a substitution $\lambda$, we obtain $\lambda(\alpha)$ by replacing $x$ in $\alpha$ by $\lambda(x)$ for all $x \in \freevars(\alpha)$. 

We now define the \emph{public terms} of an assertion $\alpha$. These are essentially the terms that $\alpha$ is ``about'', which are always communicated along with $\alpha$. Quantified variables in an assertion stand for ``private'' terms, so if a term $t$ occurring in $\alpha$ has quantified variables, it cannot itself be public. But it is not reasonable to declare all other subterms to be public terms either. For instance, if an assertion talks about $\senc(v,k)$, the term $\senc(v,k)$ should be public, but probably not $v$ or $k$ itself. Hence we define the {public terms} of $\alpha$, denoted $\publics(\alpha)$, as the set of all \emph{maximal} subterms of $\alpha$ which contain no quantified variables. In other words, $t \in \publics(\alpha)$ iff $t \in \subterms(\alpha)$, $\varsof(t) \cap \qvar = \emptyset$, and $\forall u \in \subterms(\alpha): \ t \in \subterms(u) \implies \varsof(u) \cap \qvar \neq \emptyset$. 

\begin{example}
$A$ (with secret key $k$) encrypts a vote $v$ in a key $r$ unknown to $B$ and states that it is one of two allowed values.
\[ A \rightarrow B : \{v\}_{r}, \  \pk(k) \says\bigl\{\exists xy.\equals{\{x\}_{y}}{\{v\}_{r}} \conj x \listmemb [0, 1] \bigr\} \]
The set of public terms of this assertion is $\bigl\{ \{v\}_{r}, 0, 1 \bigr\}$.
\end{example}

Assertions, like terms, can be involved in sends and receives. However, since assertions are logical formulas, we can also have agents check them for derivability and take some action based on the result of this check, without  any send/receive. We call such an action an $\assertact$. As part of an $\assertact~{\alpha}$ action, an agent $A$ checks to see if $\alpha$ is derivable from their current knowledge. If it is, $A$ continues with their role, otherwise $A$ aborts. An $\assertact$ action allows us to model some minimal branching based on the derivability of assertions from agents' local states.

Note that this does not involve any absolute notion of the ``truth'' (or lack thereof) of an assertion. An agent can only locally check if an assertion can be ``verified'', i.e. obtained from what they know about the system at that point in the execution. It might well be the case that while an $\assertact~\alpha$ check passes for an agent $A$, a different agent $B$ might not have enough information to be able to derive $\alpha$, and abort. Conversely, if some agent's internal state has been compromised somehow and made inconsistent, they might even be able to $\assertact$ something like $0 = 1$, which is patently false. We are only concerned with the verifiability of assertions, and not their absolute truth values. 

Having introduced this system, we now present the modelling of the well-known FOO e-voting protocol~\cite{FOO92}. This is a minor modification of the presentation in~\cite{RSS17}.

\subsection{Example: FOO e-voting Protocol}\label{sec:foo}
The FOO e-voting protocol was proposed in 1992 and closely mirrors the way one votes offline. There is a voter $V$, an authority $A$ who verifies voter identities, and a collector $C$ who computes the final tally. 

To model this using only terms~\cite{FOO92, KR05}, \emph{blinding} is used. One can use $t$ and $b$ to make a \emph{blind pair} $\blind(t, b)$, and get $\sign(t, k)$ from $\sign(\blind(t, b), k) $ and $b$. The voter authenticates themselves to the authority using their signing key $\sk_{V}$, and uses the blinding operation to have the authority certify it without knowing the actual vote. The authority's signature $\sign(\cdot, \sk_{A})$ percolates through to the vote when the voter removes the blind, and the voter can then anonymously send (denoted by $\looparrowright$) this signed vote to the collector for inclusion into the final tally. This specification is shown below.
\begin{align*}
    V \rightarrow A &: \sign(\blind(\{v\}_{r}, b), \sk_{V}) \\
    A \rightarrow V &: \sign(\blind(\{v\}_{r}, b), \sk_{A}) \\
    V \looparrowright C &: \sign(\{v\}_{r}, \sk_{A}) 
\end{align*}

We model the voting phase of FOO as below, following~\cite{RSS17}. We use $\{\alpha\}^{A}$ as shorthand for $A \says \alpha$. In fact, the use of assertions allows one to also specify an eligibility check for voters via an $\assertact$. If the user is not eligible, the protocol aborts. Further, voters can also state that their vote is for an allowable candidate from the list $\ell$. These are left implicit in the terms-only modelling. 

\begin{center}
{\small
\vspace{-1.8em}
\begin{align*}
        V \rightarrow A &: \{v\}_{p}, \bigl\{\exists xr.\equals{\{x\}_{r}}{\{v\}_{p}} \conj x \listmemb \ell \bigr\}^{V} \\
        A &: \assertact~\elg(V) \\
        A \rightarrow V &: \bigl\{ \elg(V) \conj \bigl\{\exists xr.\equals{\{x\}_{r}}{\{v\}_{p}} \conj x \listmemb \ell \bigr\}^{V} \bigr\}^{A} \\
        V \looparrowright C &: \{v\}_{q}, \exists Uys.\bigl\{ \elg(U) \conj \bigl\{\exists xr.\equals{\{x\}_{r}}{\{y\}_{s}} \conj x \listmemb \ell \bigr\}^{U} \bigr\}^{A} \\
        & \hspace{5mm} \conj \bigl\{\exists w.\equals{\{y\}_{w}}{\{v\}_{q}}\bigr\}
\end{align*}
}
\end{center}

$V$ first sends to $A$ their encrypted vote along with an assertion claiming that it is for a candidate from the list $\ell$. The authority checks the voter's eligibility via the $\assertact$ action on the $\elg$ predicate. If the check passes, the authority issues a certificate stating that the voter is allowed to vote, crucially, without modifying the term containing the vote. $V$ then existentially quantifies out their name from this certificate, and anonymously sends to $C$ a re-encryption of the vote authorized by $A$ along with a certificate to that effect. Here, $p$ and $q$ are freshly-generated ephemeral keys. Thus, the intent behind the various communications is made more transparent than in the model with blind signatures. One can show that this satisfies anonymity~\cite{RSS17}.

One can also specify security properties in a more natural manner (as compared to the terms-only model). For instance, one can say that \emph{vote secrecy} is ensured in the above protocol if there is no run where the intruder can derive the assertion $\exists{xy}:[\{v\}_{p} = \{x\}_{y} \wedge x = v]$. Note that this means that while anyone can derive the value of $v$, which is public, they should not be able to identify the value inside the encrypted vote $\{v\}_{p}$ as being a particular public name. To express this in the terms-only formulation, one has to check whether two runs that only differ in the vote $v$ can be distinguished by the intruder~\cite{CK14}. It can be seen from~\cite{RSS17} that proving such properties might involve considering multiple runs simultaneously, but their specification itself does not refer to a notion of equivalence. 

\begin{example}\label{ex:disje}
Consider a protocol where $V$ sends to $A$ the vote encrypted in a fresh key $k$, and an assertion that the vote belongs to an allowable list $\ell$ of candidates. This looks as follows. $V \rightarrow A : \{v\}_{k}, \exists xr.\bigl\{ \equals{\{x\}_{r}}{\{v\}_{k}} \conj x \listmemb \ell \bigr\}$.

Suppose this same protocol is used for two elections that $V$ participates in simultaneously, where the first election has candidates $0$ and $1$ (so $\ell_{1} = [0, 1]$) and the second has candidates $0$ and $2$ (so $\ell_{2} = [0, 2]$). 

$V$ wants to vote for $0$ in both elections. Since the vote is for the same candidate, $V$ (unwisely) decides to reuse the same term, instead of re-encrypting in a fresh key. So we have a run where $V$ sends both $\exists xr.\bigl\{ \equals{\{x\}_{r}}{\{v\}_{k}} \conj x \listmemb [0, 1] \bigr\}$ and $\exists ys.\bigl\{ \equals{\{y\}_{s}}{\{v\}_{k}} \conj y \listmemb [0, 2] \bigr\}$. Now, since the same term $\{v\}_{k}$ is involved in both assertions, an observer ought to be able to deduce that the vote is actually for $0$. This would allow them access to both the identity of a voter as well as their vote, falsifying anonymity. The assertion system formally captures such inference via a proof system.
\end{example}

\subsection{Abstractability and Proof System}
Before we present the proof system, we need to fix under what conditions one can derive a new assertion from existing ones. In a security context, it becomes important to distinguish when a term is accessible inside an assertion versus when it is not. To substitute a term $u$ (with, say, $v$) inside a term $t$, an agent $A$ essentially needs to break the term down to that position, replace $u$ with $v$, and construct the whole term back. This depends on other terms $A$ has access to. We formalize this notion as ``abstractability'', which requires us to first define the set of term positions of an assertion. 

We will view terms as trees, with $\positions{t} \subseteq \nat^{*}$ denoting the set of positions of the term $t$, and $\epsilon$ the empty word in $\nat^{*}$. We will also view assertions as trees, with any operator forming the root of its subtree, and its operands standing for its children. We will only be interested in the position where terms occur in assertions, not those of the various operators. We define these as follows.

\begin{definition}[Term positions of an assertion]
    We define the \defemph{term positions of an assertion} $\alpha$, denoted $\positions{\alpha}$, as follows:
	\begin{itemize}
        \item $\positions{\equals{t}{t'}} = \{0\cdot p \mid p \in \positions{t}\} \cup \{1\cdot p \mid p \in \positions{t'}\}$
        \item $\positions{P(u_{0}, \ldots, u_{m})} = \{0, \ldots, m\}$
        \item $\positions{t \listmemb [t_{1}, \ldots, t_{n}]} = \{0, 1, \ldots, n\}$
        \item $\positions{\alpha \conj \beta} = \{0\cdot p \mid p \in \positions{\alpha}\} \cup \{1\cdot p \mid p \in \positions{\beta}\}$
	    \item $\positions{\exists{x}.\alpha} = \{0\cdot p \mid p \in \positions{\alpha}\}$				
		\item $\positions{\pk(k) \says \alpha} = \{0, 00\} \cup \{1\cdot p \mid p \in \positions{\alpha}\}$
	\end{itemize}    
\end{definition}

For $t, r \in \Tterms$, and $p \in \positions{t}$, $\subtermat{t}{p}$ is the subterm of $t$ rooted at $p$. The set of positions of $r$ in $t$ is $\posof{r}{t} \coloneqq \{p \in \positions{t} \mid \subtermat{t}{p} = r\}$. For $P \subseteq \positions{t}$, $\replsubtermat{t}{P}{r}$ is obtained by replacing the subterm of $t$ occurring at each $p \in P$ with $r$. We will use analogous notation for assertions. 

\begin{definition}[Abstractable positions of a term]
    Let $S \cup \{t\} \subseteq \Tterms$. The set of \defemph{abstractable positions of $t$ w.r.t.\ $S$}, denoted $\abstractable(S, t)$, is defined as follows. For $p \in \positions{t}$, let $\mathbb{Q}_{p} = \{\varepsilon\} \cup \{qi \in \positions{t} \mid q$ is a proper prefix of $p\}$. Then $\abstractable(S,t) \coloneqq \{p \in \positions{t} \mid S \DYderives \subtermat{t}{q}$ for all $q \in \mathbb{Q}_{p}\}$.    
\end{definition}

For example, let $t = (\{\{m\}_{k}\}_{k'}, (n_{1}, n_{2}))$. Then,
$\positions{t} = \{\epsilon, 0, 1, 00, 01, 10, 11, 000, 001\}$. Consider the set $S = \{\{m\}_{k}\}_{k'}, (n_{1}, n_{2})$. Then, $\abstractable(S, t) = \{\varepsilon, 0, 1, 10, 11\}$. The abstractable positions are shown in bold in Figure~\ref{fig:absterms}. 

\begin{figure}[ht]
\centering
\begin{tikzpicture}[baseline,level distance = 8mm]
  \tikzstyle{every node}=[draw=white,text=black]
   \tikzstyle{level 1}=[sibling distance=30mm]
  \tikzstyle{level 2}=[sibling distance=20mm]
  \tikzstyle{level 3}=[sibling distance=10mm]
  \node(e) {\pair}
	child 
	{ 
		node(0) {\senc} 
		child
		{
	      	node(00) {$\senc$}
			child
			{
				node(000) {$m$}
				node[below=0.1mm of 000, draw=black]{000}	
			}
			child
			{
				node(001) {$k$}
				node[below=0.1mm of 001, draw=black]{001}
			}
			node[left = 0.05mm of 00, draw=black]{00}
		}
		child
		{
		node(01) {$k'$}
		node[below = 0.1mm of 01, draw=black]{01}
		}
		node[left = 0.05mm of 0, draw=black, line width=0.5mm]{0}
	} 	
	child
	{
	      node(1) {$\pair$}
	      child
	      {
	      	node(10) {$n_{1}$}
			node[below=0.1mm of 10, draw=black, line width=0.5mm]{10}
	      }
	      child
	      {
	      	node(11) {$n_{2}$}
			node[below=0.1mm of 11, draw=black, line width=0.5mm]{11}
	      }
	      node[right = 0.05mm of 1, draw=black, line width=0.5mm]{1}
	}	   
	node[above = 0.1mm of e, draw=black, line width=0.5mm]{$\varepsilon$}
;  
\end{tikzpicture}
\caption{Abstractable positions w.r.t.~$S = \{\{m\}_{k}\}_{k'}, (n_{1}, n_{2})\}$}
\label{fig:absterms}
\end{figure}

Now, an inductive definition seems like it might suffice to lift the notion of abstractable positions for assertions. However, a problem arises when we consider an assertion of the form $\exists x.\alpha$. Let $\alpha = \exists b.\{\equals{\{m\}_{b}}{\{m\}_{k}}\}$. Suppose we want to get $\exists ab.\{\equals{\{a\}_{b}}{\{m\}_{k}}\}$ from $\alpha$ in the presence of the set $S = \{m, k\}$. That position of $m$ in $\alpha$ must be abstractable w.r.t~$S$, i.e. we require that $S \DYderives \{m\}_{b}$, but $S$ does not even contain the quantified variable $b$. We must therefore consider derivability from $S \cup \{b\}$ in this case, not $S$.

\begin{definition}[Abstractable positions of an assertion]\label{def:assabs}
	The set of \emph{abstractable positions of $\alpha$ w.r.t.\ $S$}, denoted by $\abstractable(S,\alpha)$, is:
	\begin{itemize}
		\item $\abstractable(S, \equals{t_{0}}{t_{1}}) = \{i\cdot p \mid i \in \{0, 1\}, \ p \in \abstractable(S,t_{i})\}$
		\item $\abstractable(S, P(u_{1}, \ldots, u_{m})) = \{i \mid 1 \leq i \leq m, S \DYderives u_{i}\}$
		\item $\abstractable(S, t \listmemb [t_{1}, \ldots, t_{n}]) = \{0\}$
		\item $\abstractable(S, \alpha_{0}\conj\alpha_{1}) = \{i\cdot p \mid i \in \{0, 1\}, \ p \in \abstractable(S,\alpha_{i})\} $
		\item $\abstractable(S, \exists{x}.\alpha) = \{0\cdot p \mid p \in \abstractable(S \cup \{x\},\alpha) \}$
		\item $\abstractable(S, \pk(k) \says \alpha) =  \{0\} \cup \{1\cdot p \mid p \in \abstractable(S, \alpha)\}$
	\end{itemize}
\end{definition}

We now state a fundamental property of abstractability, which will be used in some of the more technical proofs later.

\begin{lemma}\label{lem:absderiv}
Let $S \cup \{t,r\} \subseteq \Tterms$ s.t.\ $S \DYderives r$. If $x \notin \varsof(S)$ and $P = \posof{x}{t} \subseteq \abstractable(S \cup \{x\}, t)$, then $\abstractable(S, \replsubtermat{t}{P}{r}) \cap \positions{t} = \abstractable(S \cup \{x\}, t)$. 
\end{lemma}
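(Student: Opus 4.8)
The plan is to first recognize that $t' := \replsubtermat{t}{P}{r}$ is simply the result of substituting $r$ for \emph{every} occurrence of $x$ in $t$, since $P = \posof{x}{t}$ collects all of them; write $\rho$ for the replacement $x \mapsto r$, so that $t' = \rho(t)$. The proof then splits into a positional bookkeeping part and a derivability part. For the bookkeeping, I would record two structural facts for any $p \in \positions{t}$: (i) no element of $P$ is a proper prefix of $p$ (elements of $P$ are occurrences of the leaf $x$, hence have no proper descendants), and (ii) consequently $\positions{t} \subseteq \positions{t'}$ and, more importantly, the set $\mathbb{Q}_{p}$ appearing in the definition of $\abstractable(\cdot,\cdot)$ is the same whether computed in $t$ or in $t'$, and for every $q \in \mathbb{Q}_{p}$ we have $\subtermat{t'}{q} = \rho(\subtermat{t}{q})$. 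The point of (ii) is that $\mathbb{Q}_{p}$ only inspects children of \emph{proper prefixes} of $p$, and by (i) those nodes are untouched by the replacement, so their arities---and the subtrees hanging off them, up to applying $\rho$---are preserved.

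With this in hand, membership reduces to two families of derivability conditions indexed by the \emph{same} set $\mathbb{Q}_{p}$: for $p \in \positions{t}$, we have $p \in \abstractable(S \cup \{x\}, t)$ iff $S \cup \{x\} \DYderives \subtermat{t}{q}$ for all $q \in \mathbb{Q}_{p}$, and $p \in \abstractable(S, t') \cap \positions{t}$ iff $S \DYderives \rho(\subtermat{t}{q})$ for all $q \in \mathbb{Q}_{p}$. So it suffices to prove, for each such $p$, that these two conjunctions are equivalent.

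For the inclusion $\abstractable(S \cup \{x\}, t) \subseteq \abstractable(S, t')$, I would apply $\rho$ to a derivation witnessing $S \cup \{x\} \DYderives \subtermat{t}{q}$. Since the constructor/destructor rules are schematic, they are closed under substitution, yielding $\rho(S \cup \{x\}) \DYderives \rho(\subtermat{t}{q})$; as $x \notin \varsof(S)$ we have $\rho(S) = S$, so this reads $S \cup \{r\} \DYderives \rho(\subtermat{t}{q})$, and cutting against the hypothesis $S \DYderives r$ gives $S \DYderives \rho(\subtermat{t}{q})$. For the reverse inclusion I would fix $q \in \mathbb{Q}_{p}$ and split on whether $x$ occurs in $\subtermat{t}{q}$. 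If it does not, then $\rho(\subtermat{t}{q}) = \subtermat{t}{q}$, and the $t'$-side condition together with monotonicity of $\DYderives$ gives $S \cup \{x\} \DYderives \subtermat{t}{q}$ directly. If $x$ does occur, say at a position $q\cdot p'' \in P$, then the standing hypothesis $P \subseteq \abstractable(S \cup \{x\}, t)$ yields $q\cdot p'' \in \abstractable(S \cup \{x\}, t)$; since $q$ lies on the path from the root to $q\cdot p''$ (formally $q \in \mathbb{Q}_{q\cdot p''}$, or else $q = \varepsilon$), the defining condition of abstractability \emph{already} contains $S \cup \{x\} \DYderives \subtermat{t}{q}$, which is exactly what we need.

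I expect the main obstacle to be this reverse inclusion, and in particular seeing \emph{why} the hypothesis $P \subseteq \abstractable(S \cup \{x\}, t)$ is indispensable: deriving a substituted term $\rho(s)$ from $S$ does not in general lift to a derivation of $s$ from $S \cup \{x\}$ (the term $r$ may be derivable in ways unavailable to the bare variable $x$), and it is precisely the abstractability of the occurrences of $x$ that supplies the needed derivations for the on-path subterms, while the off-path subterms free of $x$ are handled by the $t'$-side information. The only other care needed is the positional bookkeeping of steps (i)--(ii), which is routine but must be done carefully around the edge cases $q = \varepsilon$ and $q \in P$, together with the three standard structural properties of $\DYderives$ invoked above: monotonicity, closure under substitution, and cut against $S \DYderives r$.
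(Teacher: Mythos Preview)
Your argument is correct and takes a somewhat different, more direct route than the paper. The paper first isolates two structural facts about abstractable-position sets---that they are prefix- and sibling-closed, and that membership is determined by derivability at the \emph{maximal} positions---and then checks both inclusions by examining only the maximal positions of $A = \abstractable(S\cup\{x\},t)$ and of $B = \abstractable(S,u)\cap\positions{t}$; its reverse inclusion $B \subseteq A$ tacitly reuses the already-established $A \subseteq B$ to get $P \subseteq B$, so that each maximal $q$ in $B$ is either an $x$-position or has no $x$ below it. You instead unfold the defining condition through $\mathbb{Q}_{p}$ and match the two families of derivabilities position by position: for $A \subseteq B$ you use closure of $\DYderives$ under substitution together with cut against $S \DYderives r$, and for $B \subseteq A$ you invoke the hypothesis $P \subseteq \abstractable(S\cup\{x\},t)$ directly via the observation $q \in \mathbb{Q}_{q\cdot p''}$, rather than passing through the forward inclusion. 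Your route avoids the auxiliary closure lemma and makes the indispensability of the hypothesis $P \subseteq \abstractable(S\cup\{x\},t)$ explicit; the paper's framing, on the other hand, packages reusable structural facts about abstractability that recur elsewhere in the development.
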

\begin{proof}
	For any term $a$ and any set $Q \subseteq \positions{a}$, we let $\subtermat{a}{Q}$ denote $\{\subtermat{a}{q} \mid q \in Q\}$. We now observe some general properties of abstractability. 
	
	For any $T, a$ and $q \in \abstractable(T,a)$ s.t.\ $\subtermat{a}{q}$ is non-atomic, either $\{q0, q1\} \subseteq \abstractable(T,a)$ and $\subtermat{a}{\{q0,q1\}} \DYderives \subtermat{a}{q}$ via a constructor rule, or $q$ is a maximal position in $\abstractable(T,a)$ (it is not the prefix of any other position in the set). We have the following two properties.
\begin{enumerate}
\item Let $M = \{q \in \positions{a} \mid q$ is a maximal position in $\abstractable(T,a)\}$. Then for every $p \in \abstractable(T,a)$, $\subtermat{a}{M} \DYderives \subtermat{a}{p}$ via a proof consisting only of constructor rules. 
\item\label{item:prefsib} Suppose $Q \subseteq \positions{a}$ is prefix-closed (if $q \in Q$ and $p$ is a prefix of $q$, then $p \in Q$) and sibling-closed (if $qi \in Q$ and $qj \in \positions{a}$, then $qj \in Q$). If $T \DYderives \subtermat{a}{q}$ for every maximal $q \in Q$, then $Q \subseteq \abstractable(T,a)$. 
\end{enumerate}
		 
    We now prove the statement of the lemma. Let $u = \replsubtermat{t}{P}{r}$, and let $A$ and $B$ denote $\abstractable(S\cup\{x\},t)$ and $\abstractable(S,u) \cap \positions{t}$ respectively. Note that $A$ and $B$ are both prefix-closed and sibling-closed. Let $M$ (resp.\ $N$) be the set of maximal positions in $A$ (resp.\ $B$). 
    
    Since $P \subseteq A$ is the set of $x$-positions in $t$, $P \subseteq M$ and no $q \in M$ is a prefix of a position in $P$. Thus, for every $q \in M$, either $\subtermat{t}{q} = x$, or $x \notin \varsof(\subtermat{t}{q})$. If $\subtermat{t}{q} = x$, $\subtermat{u}{q} = r$, and $S \DYderives \subtermat{u}{q}$ (since $S \DYderives r$). If $x \notin \varsof(\subtermat{t}{q})$, then $\subtermat{u}{q} = \subtermat{t}{q}$ and $S \DYderives \subtermat{u}{q}$. This is because $q \in \abstractable(S\cup\{x\},t)$, so $S \cup \{x\} \DYderives \subtermat{t}{q}$, but $x$ does not occur in the conclusion. Thus we have $S \DYderives \subtermat{u}{q}$ for every $q \in M$. Since $A$ is prefix-closed and sibling-closed, by~\ref{item:prefsib}, we get $A \subseteq \abstractable(S,u)$. Since $A \subseteq \positions{t}$ as well, we get $A \subseteq B$. 
    
	By similar reasoning as above, we can see that $S \cup \{x\} \DYderives \subtermat{t}{q}$ for each $q \in N$. (For some of these positions $q$, $x$ does not occur at all in the subterm at that position, and $\subtermat{t}{q} = \subtermat{u}{q}$ is derivable from $S$. For other positions $q$, $\subtermat{t}{q} = x$ and is derivable from $S \cup \{x\}$.) Therefore $B \subseteq A$. 
\end{proof}

The assertion proof system is shown in Table~\ref{tab:asstheory}. We say $S; A \assderives \alpha$ if $\alpha$ can be derived from $S; A$ using these rules. We say $S ; A \assderives \Gamma$ if $S; A \assderives \gamma$ for every $\gamma \in \Gamma$. 
\begin{table*}[!t]
	\centering {\scriptsize
        \tabulinesep=2mm
        \setlength{\tabcolsep}{0.6em}
        \begin{tabu}{|c|c|c|}
            \hline
			\begin{math}
                {
                    \begin{prooftree}
                        \justifies S; A \cup \{\alpha\} \vdash \alpha \using \rnax
				\end{prooftree}
			 }
			\end{math}
			&
			\begin{math}
                {
					\begin{prooftree}
						S \DYderives t
						\justifies S; A \vdash \equals{t}{t} \using \rneq
					\end{prooftree}
                }
			\end{math}
			 &
			\begin{math}
			{
				\begin{prooftree}
				S; A \vdash \equals{t_{0}}{u_{0}} \quad S; A \vdash \equals{t_{1}}{u_{1}}
				\justifies S; A \vdash \equals{\func(t_{0}, t_{1})}{\func(u_{0}, u_{1})} \using \rncons
				\end{prooftree}
			}
			\end{math}
			\\
			\hline
			\begin{math}
                {
                    \begin{prooftree}
                        S; A \vdash \equals{t}{u}
                        \justifies S; A \vdash \equals{u}{t} \using \rnsymm
                    \end{prooftree}
                }
            \end{math}
            &
            \begin{math}
                {
                    \begin{prooftree}
                         S; A \vdash \equals{t_{1}}{t_{2}} \cdots S; A \vdash \equals{t_{k}}{t_{k+1}}
                        \justifies S; A \vdash \equals{t_{1}}{t_{k+1}} \using \rntrans
                    \end{prooftree}
                }
            \end{math}
            &
            \begin{math}
                {
                    \begin{prooftree}
                        S; A \vdash \equals{\func(t_{1}, \ldotp \ldotp, t_{r})}{\func(u_{1}, \ldotp \ldotp, u_{r})}
                        \justifies S; A \vdash \equals{t_{i}}{u_{i}} \using \rnproj_{i}^{\P}
					\end{prooftree}
                }
            \end{math}
            \\ 
            \hline 	
            \begin{math}
                {
					\begin{prooftree}
						S; A \vdash \alpha_{0} \quad S; A \vdash \alpha_{1}
						\justifies S; A \vdash \alpha_{0} \conj \alpha_{1} \using \rnconji
					\end{prooftree}
                }
			\end{math}
			&
			\begin{math}
                {
					\begin{prooftree}
						S; A \vdash \alpha_{0} \conj \alpha_{1}
						\justifies S; A \vdash \alpha_{i} \using \rnconje_{i}
					\end{prooftree}
                }
			\end{math}
			&
			\begin{math}
                {
                    \begin{prooftree}
                        S; A \vdash t \listmemb l \quad S; A \vdash \equals{t}{u}
                        \justifies S; A \vdash u \listmemb l \using \rnsubst
					\end{prooftree}
                }
			\end{math}
			\\
			\hline
			\begin{math}
                {
                    \begin{prooftree}
                        S; A \vdash \replsubtermat{\alpha}{P}{t} \quad S \DYderives t
                        \justifies S; A \vdash \exists{x}.\alpha \using \rnexintro^{\ddag}
                    \end{prooftree}
                }
            \end{math}
			&
            \begin{math}
                {
                    \begin{prooftree}
                        S; A \vdash \exists{}{x}.\alpha \quad S \cup \{y\}; A \cup \{\replsubtermat{\alpha}{P}{y}\} \vdash \gamma
                        \justifies S; A \vdash \gamma \using \rnexelim^{\S}
                    \end{prooftree}
                }
            \end{math}
			&
            \begin{math}
			{
				\begin{prooftree}
					S; A \vdash \alpha \quad S \DYderives k
					\justifies S; A \vdash \pk(k) \says \alpha \using \rnsays	
				\end{prooftree}
			}
			\end{math}
            \\
            \hline 
			\begin{math}
                {
                    \begin{prooftree}
                        S; A \vdash t \listmemb [n]
                        \justifies S; A \vdash \equals{t}{n} \using \rnprom
					\end{prooftree}
                }
			\end{math}
			&
			\begin{math}
                {
                    \begin{prooftree}
                        S; A \vdash t \listmemb l_{1} \ldots S; A \vdash t \listmemb l_{m}
                        \justifies S; A \vdash t \listmemb (l_{1} \cap \ldots \cap l_{m}) \using \rnlint
					\end{prooftree}
                }
			\end{math}	
			&
			\begin{math}
                {
                    \begin{prooftree}
                        S; A \vdash \equals{t}{n_{i}} \quad S \DYderives n_{i} (\forall i \leq n)
                        \justifies S; A \vdash t \listmemb [n_{1}, \ldots, n_{k}] \using \rnlwk
					\end{prooftree}
                }
			\end{math}
			\\
			\hline
		\end{tabu}
	}
	\caption{
        Derivation system $\assderives$ for assertions. 
        $\P$ states that $\{0i, 1i \mid i \le r\} \subseteq \abstractable(S, \equals{\func(t_{1}, \ldots, t_{r})}{\func(u_{1}, \ldots, u_{r}))}$. $\dag$ demands that $P \subseteq \posof{x}{\alpha} \cap \abstractable(S \cup \{x\}, \alpha)$, and no position in $P$ occurs in the scope of a$\says$\!\!. $\ddag$ stands for $P = \posof{x}{\alpha} \subseteq \abstractable(S \cup \{x\}, \alpha)$. $\S$ states that $y \notin \freevars(S) \cup \freevars(A) \cup \freevars(\gamma)$ and $P = \posof{x}{\alpha}$. 
    }
	\label{tab:asstheory}
\end{table*}

We say that $S;A \eqderives \alpha$ if $\alpha$ can be derived from $S;A$ by a proof which does not use any of the rules from $\{\rnconji, \rnconje, \rnexintro, \rnexelim, \rnsays\}$. Recall that an atomic assertion is one that is not of the form $\alpha\conj\beta$ or $\exists{x}.\alpha$. The $\eqderives$ system is used typically when $A \cup \{\alpha\}$ consists only of atomic assertions, and we want to ensure that there is no use of the rules for $\conj$ and $\exists$ in these proofs. To ensure this, we also need to avoid the $\rnsays$ rule. Otherwise, we might allow a derivation of $\pk(k)\says(\alpha\conj\beta)$ using $\alpha\conj\beta$, which itself can be derived only using $\rnconji$ (since the LHS contains only atomic assertions).

The proofs in Section~\ref{sec:insecurity} crucially appeal to some properties of $\eqderives$ proofs, which we detail below.

\begin{restatable}{definition}{eqnormal}\label{def:eqnormal}
Suppose $E \cup \{\alpha\}$ consists only of atomic formulas and $\pi$ is a proof of $T; E \eqderives \alpha$. We use ``$\rnrule_{1}$ \emph{precedes} $\rnrule_{2}$ in $\pi$'' to mean that the conclusion of some application of $\rnrule_{1}$ is a premise of an application of $\rnrule_{2}$ in $\pi$.

We say that $\pi$ is \emph{normal} if the following hold.
	\begin{enumerate}
		\item All $\DYderives$ subproofs are normal. 
		\item $\rnsymm$ can only be preceded by $\rnax$ or $\rnprom$.
		\item $\rneq$ can only be preceded by a destructor rule.
		\item No premise of a $\rntrans$ is of the form $\equals{a}{a}$, or the conclusion of a $\rntrans$.
		\item Adjacent premises of a $\rntrans$ are not conclusions of $\rncons$.
		\item $\rnlint$ cannot be preceded by $\rnlint$ or $\rnlwk$.
		\item No subproof ending in $\rnproj$ contains $\rncons$. 
\end{enumerate}\end{restatable}

We now state the normalization theorem and subterm property for $\eqderives$ proofs. First, we define the following notions.
\begin{itemize}[leftmargin=*]
\item $\termsof(\pi) \coloneqq \{t \mid$ a subproof of $\pi$ derives $\alpha$ and $t$ is a maximal subterm of $\alpha\}$. 
\item $\listsof(E) \coloneqq \{\ell \mid \exists{t}:t\listmemb{\ell}$ is in $E\}$.
\item $\listsof(\pi) \coloneqq \{\ell \mid$ a subproof of $\pi$ derives $t\listmemb{\ell}\}$. 
\end{itemize}

\begin{theorem}[Normalization \& Subterm Property for $\eqderives$]\label{thm:eqnormsub}
\phantom{a}
\begin{enumerate}[leftmargin=*]
\item If $(T;E) \eqderives \alpha$ then there is a normal proof of $(T;E) \vdash \alpha$ in the $\eqderives$ system.
\item For any normal proof $\pi$ of $T; E \eqderives \alpha$, letting $Y = \subterms(T) \cup \subterms(E\cup\{\alpha\})$, we have: 
\begin{itemize}
    \item $\termsof(\pi) \subseteq Y$.
    \item $\listsof(\pi) \subseteq \listsof(E \cup \{\alpha\}) \cup \{[n] \mid n \in Y\}$.
\end{itemize}
\end{enumerate}
\end{theorem}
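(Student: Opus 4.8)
The plan is to prove the two parts in sequence, with normalization first, since the subterm property will be established by structural induction on normal proofs and thus depends on having normalization in hand. For part~(1), I would argue that the seven conditions of Definition~\ref{def:eqnormal} can each be enforced by a terminating rewriting procedure on proofs, where each rewrite step either eliminates a forbidden configuration or pushes it ``upward'' toward the axioms. The key is to design a measure $\measure$ on proofs (the paper has reserved the macro \measure\ for exactly this) that strictly decreases under each rewrite. First I would treat condition~(1) by invoking the assumed normalization property of the underlying $\DYderives$ system on every $\DYderives$ subproof. Then I would handle the equality-specific conditions. The characteristic local rewrites are the standard equational ones: a $\rnsymm$ applied to the output of $\rncons$ or $\rntrans$ commutes past that rule (justifying condition~2, that $\rnsymm$ only sits above $\rnax$ or $\rnprom$); a $\rntrans$ whose premise is itself a $\rntrans$ conclusion collapses by associativity of transitivity, and a premise $\equals{a}{a}$ is simply deleted from the chain (condition~4); two adjacent $\rncons$-conclusions in a $\rntrans$ merge into a single $\rncons$ applied to the concatenated transitivity chains at each argument position (condition~5); nested $\rnlint$/$\rnlwk$ applications collapse using $l_{1}\cap\cdots\cap l_{m}$ being associative and idempotent (condition~6); and a $\rncons$ feeding a $\rnproj$ is a genuine redex that cancels, reproducing the argument equation directly (condition~7). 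Condition~3, that $\rneq$ is preceded only by a destructor, follows because $\rneq$ requires $S\DYderives t$, and after $\DYderives$-normalization any constructor-headed $\DYderives$ subproof can be avoided at the leaf.

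For part~(2), I would proceed by induction on the structure of a normal proof $\pi$, analyzing the last rule. The subterm claim $\termsof(\pi)\subseteq Y$ and the list claim $\listsof(\pi)\subseteq\listsof(E\cup\{\alpha\})\cup\{[n]\mid n\in Y\}$ should both be strengthened into a single inductive invariant so that the hypotheses line up at each rule. For the axiom rule and $\rneq$ the conclusion's terms are trivially in $\subterms(E)$ or $\subterms(T)$. The interesting cases are exactly the ones the normality conditions were engineered to control. For $\rnproj$, normality condition~(7) guarantees the subproof deriving $\equals{\func(\vec t)}{\func(\vec u)}$ contains no $\rncons$, so its concluding equation must come ``from below'' (ultimately from $\rnax$, i.e.\ from $E$), whence the projected subterms $t_{i},u_{i}$ are already subterms of something in $E$ and the abstractability side-condition $\P$ does not introduce new terms. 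For $\rntrans$, conditions~(4) and~(5) force the premises to be a non-degenerate alternating chain, so every intermediate term is shared between two premises and is therefore a subterm of a term appearing in a premise equation, which by the IH lies in $Y$. For the list rules, $\rnprom$ produces $\equals{t}{n}$ from $t\listmemb[n]$, keeping terms and introducing the singleton list $[n]$ with $n\in Y$; $\rnlwk$ introduces $[n_{1},\dots,n_{k}]$ but each $n_{i}$ satisfies $S\DYderives n_{i}$ and, via the equality premise, lies in $Y$, and these lists are exactly the $\{[n]\mid n\in Y\}$ slack allowed in the statement; $\rnlint$ produces an intersection of lists already in scope, and condition~(6) bounds how these stack. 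The $\rnsubst$ rule preserves the list $l$ while only swapping $t$ for an equal $u$, so no new list arises.

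The main obstacle I expect is the interaction, in part~(1), between the $\rncons$/$\rntrans$/$\rnproj$ rewrites and the abstractability side-conditions $\P$ on $\rnproj$ (and, indirectly, the side-conditions governing where $\rncons$ may legitimately act). When I commute or cancel these rules I must verify that the rearranged proof still satisfies every side-condition $\{0i,1i\mid i\le r\}\subseteq\abstractable(S,\cdot)$ appearing in the original, since abstractability is sensitive to which subterms are $\DYderives$-derivable; Lemma~\ref{lem:absderiv} should be the right tool here, as it precisely relates abstractability before and after substituting a derivable term $r$ at a position, and it is the only earlier result tailored to exactly this difficulty. A secondary subtlety is proving termination of the combined rewrite system: the individual measures for the $\rnsymm$-pushing, $\rntrans$-flattening, and $\rncons$/$\rnproj$-cancellation steps may interfere, so I would use a lexicographic measure that prioritizes $\DYderives$-normality, then $\rnproj$-redexes, then transitivity-chain length, then $\rnsymm$-depth, arguing that each rewrite strictly decreases this tuple. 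For the subterm part, the delicate point is ensuring that the abstractability condition $\P$ on $\rnproj$, combined with normality condition~(7), genuinely prevents the projection from exposing a term not already present in $E\cup\{\alpha\}$ or $T$; this is where I would spend the most care, likely proving a small auxiliary claim that in a $\rncons$-free subproof concluding an equation $\equals{v}{w}$, both $v$ and $w$ are subterms of formulas in $E$.
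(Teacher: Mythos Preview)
Your overall strategy matches the paper's: rewrite-based normalization followed by structural induction for the subterm property. However, two technical points are missing or misidentified.

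First, for condition~7 (no $\rncons$ anywhere in a subproof ending in $\rnproj$), the direct $\rncons$--$\rnproj$ cancellation you describe does not suffice: a $\rncons$ can sit inside a $\rntrans$ chain whose output then feeds $\rnproj$. The paper handles this with a separate rewrite (its R9) that pushes the $\rnproj$ through the $\rntrans$, splitting the chain at the $\rncons$ node and applying $\rnproj$ to each piece. Justifying the new $\rnproj$ applications requires that the intermediate terms' immediate subterms are $T$-derivable; the right tool is Lemma~\ref{lem:dcpure} (both sides of any $\eqderives$-derived equation are $\DYderives$-derivable from $T$), not Lemma~\ref{lem:absderiv}. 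This rewrite also relies on \emph{consistency} of $E$ to force the outermost function symbols at the chain endpoints and at the $\rncons$ node to coincide---without this, the transformed proof is not well-formed.

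Second, your proposed termination measure does not decrease under all the rewrites. The merge-adjacent-$\rncons$ rewrite (condition~5) and the R9 rewrite above both reduce the number of $\rncons$ applications but can \emph{increase} total transitivity-chain length and proof size. The paper's measure is the triple (total size of $\DYderives$ subproofs, number of $\rncons$ occurrences, proof size), with the $\rnsymm$-pushing rules applied to exhaustion first and shown not to be re-enabled later. For the subterm property, your ``auxiliary claim'' at the end is exactly the required strengthening and must be carried as part of the IH throughout: if a normal subproof contains no $\rncons$, then all its terms lie in $\subterms(T)\cup\subterms(E)$, independent of the conclusion. This is what makes your $\rntrans$ case non-circular, since by condition~5 each intermediate term $t_i$ appears in at least one adjacent $\rncons$-free subproof, to which the strengthened IH applies.
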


Armed with these notions, we present a saturation-based procedure in Algorithm~\ref{alg:eqderive} for deciding whether $T; E \eqderives \alpha$, where $E \cup \{\alpha\}$ consists only of atomic assertions. The procedure computes the set \[\newconseq{T}{E}{\alpha} \coloneqq \bigl\{\beta \mid \beta \text{ is atomic}, \beta \in Z, (T; E) \eqderives \beta\bigr\}\] 
where $Z$ is as defined in Algorithm~\ref{alg:eqderive}, and checks if $\alpha \in \newconseq{T}{E}{\alpha}$.

Letting $M = |\subterms(T) \cup \subterms(E \cup \{\alpha\})|$ and $N = |\listsof(E)|$, it can be seen that the algorithm runs in time polynomial in $M+N$. There are at most $(M+N)^{2}$ atomic formulas that can be added in $C$, and hence the \textbf{while} loop runs for at most $(M+N)^{2}$ iterations. In each iteration, the amount of work to be done is polynomial in $M+N$. (Recall that $\DYderives$ can be decided in PTIME.) Thus the algorithm works in time polynomial in $M+N$, and hence polynomial in the size of $(T; E \cup \{\alpha\})$.

\begin{algorithm}[t]
	\caption{Algorithm to compute $\newconseq{T}{E}{\alpha}$, given $(T; E), \alpha$}
	\label{alg:eqderive}
	\begin{algorithmic}[1]
		\State $Y \gets \subterms(S) \cup \subterms(E \cup \{\alpha\})$;
		\State $Z \gets \big\{\beta \mid \beta \text{ is atomic}, \subterms(\beta) \in Y,$ 
		\State \quad \quad \quad $\listsof(\beta) \subseteq \listsof(E) \cup \{[n] \mid n \in Y\}\big\};$
		\State $B \gets \emptyset$;
		\State $C \gets E$;
		\While{$(B \neq C)$}
		\State $B \gets C$;
		\State $C \gets B \cup \big\{\beta \in Z \mid \beta \text{ can be obtained from $B$ using}$ 
		\State \qquad \qquad \qquad $\text{one application of any rule in $\assderives$}\big\}$;
		\EndWhile
		\State \Return $B$.
	\end{algorithmic}
\end{algorithm}

\subsection{Protocols and runs}
Following~\cite{Bla01, RT03}, a \emph{protocol} is given by a finite set of roles, each role consisting of a finite sequence of alternating receives and sends (each send triggered by a receive).\!\footnote{This model considers actions of the form $\assertact~\alpha$ to model rudimentary branching in protocols, which we used for specifying the FOO protocol. But we omit these in the formal model, for ease of presentation. We discuss handling such branching in Section~\ref{sec:ifthenelse}.} These are the actions of \emph{honest agents}. Every sent message is added to the Dolev-Yao intruder's knowledge base. Each received message is assumed to have come from the intruder, so it must be derivable by the intruder. We assume that only assertions are communicated -- a term $t$ can be modelled via the assertion $\equals{t}{t}$, whose only public term is $t$. 

A \textit{protocol} $\prot$ is a finite set of \textit{roles}, each of the form $({\beta_{1}},{\alpha_{1}})\ldots({\beta_{m}},{\alpha_{m}})$, where the $\alpha_{i}$s and $\beta_{i}$s are assertions. An $x \in \freevars(\prot)$ is said to be an \emph{agent variable} if it occurs first in an $\alpha_{i}$; otherwise it is an \emph{intruder variable}. Each role is a sequence of actions by an agent, receiving the $\beta_{i}$s and sending the $\alpha_{i}$s in response. The $\alpha_{i}$s and $\beta_{i}$s can have bound variables from $\qvar$ as well as free variables from $\ivar$. Instantiating the free variables with appropriately-typed ground terms yields a \emph{session}. A \emph{run} is obtained by interleaving a finite number of sessions that satisfy the required derivability conditions. It is convenient to instantiate the free variables of a role in two stages. Agent variables are instantiated with names before starting a session, but intruder variables can be mapped to terms only at runtime. 

A \emph{session} of a protocol $\prot$ is a sequence of the form $u:\recsend{\beta_{1}}{\alpha_{1}}\ \cdots\ u:\recsend{\beta_{\ell}}{\alpha_{\ell}}$ where $u \in \ag$ and $(\beta_{1}, \alpha_{1}) \cdots (\beta_{\ell}, \alpha_{\ell})$ is a prefix of a role of $\prot$ with all the agent variables instantiated by values from $\names$. A set of sessions $S$ of $\prot$ is \emph{coherent} if $\freevars(\xi) \cap \freevars(\xi') = \emptyset$ for distinct $\xi, \xi' \in S$. One can always achieve coherence by renaming intruder variables as necessary. 

A run is an interleaving of sessions where each message sent by an agent should be constructible from their knowledge. 
A \emph{knowledge state} is a pair $(X; \Phi)$ where $X$ is a finite set of terms and $\Phi$ is a finite set of assertions. A \emph{knowledge function} $\knowfunc$ is such that $\dom(\knowfunc) = \ag$ and for each $a \in \ag$, $\knowfunc(a)$ is a knowledge state. 

Given a knowledge state $(X; \Phi)$ and an assertion $\alpha$, we define $\knowupd((X;\Phi), \alpha) \coloneqq (X \cup \publics(\alpha), \Phi \cup \{\alpha\})$.

\begin{definition}\label{def:protruns}
    A \emph{run} of a protocol $\prot$ is a pair $(\xi, \intsub)$ where:
    \begin{itemize}[leftmargin=*]
		\item $\xi \coloneqq u_{1}:\recsend{\beta_{1}}{\alpha_{1}}, \ldots, u_{n}:\recsend{\beta_{n}}{\alpha_{n}}$ is an interleaving of a finite, coherent set of sessions of $\prot$.
        \item $\intsub$ is a ground substitution with $\dom(\intsub) = \freevars(\xi)$.
        \item There is a sequence $\knowfunc_{0} \ldots \knowfunc_{n}$ of knowledge functions s.t.:
        \begin{itemize}
            \item $\knowfunc_{0}(a) = (X_{a}; \emptyset)$, where $X_{a}$ is a finite set of initial terms known to $a$ ($a$'s secret key, public keys, public names etc). 
            \item For all $i < n$, 
            \[
                \knowfunc_{i+1}(a) = 
                \begin{cases}
                    \knowfunc_{i}(a) & \mbox{if $a \neq u_{i}, a \neq I$} \\ 
                    \knowupd(\knowfunc_{i}(a), \beta_{i}) & \mbox{if $a = u_{i}$} \\
                    \knowupd(\knowfunc_{i}(a), \alpha_{i}) & \mbox{if $a = I$}
                \end{cases}
            \]
            \item For $i \leq n$, $\knowfunc_{i}(u_{i}) \assderives \alpha_{i}$ and $\intsub(\knowfunc_{i-1}(I)) \assderives \intsub(\beta_{i})$.
        \end{itemize}
    \end{itemize}
\end{definition}

Note that honest agent derivations of the form $\knowfunc_{i}(u_{i}) \assderives \alpha_{i}$ do not depend on accidental unification with intruder variables under $\intsub$; rather, they hold even in the ``abstract''. 

We can write an $A$-session and a $B$-session for the Example~\ref{ex:exprot} protocol as $A : \beta_{1} \Rightarrow \alpha_{1}, A : \beta_{3} \Rightarrow \alpha_{3}$ and $B : \beta_{2} \Rightarrow \alpha_{2}$. (To save space, we denote by $p_{A}$ and $p_{B}$ the keys $\pk(k_{A})$ and $\pk(k_{B})$.) We assume that $A$ starts a session by receiving a dummy name $s$, and ends the session by sending $s$ out, and code up each communicated term $t$ from Example~\ref{ex:exprot} as the assertion $\equals{t}{t}$. Note that $A, B, p_{A}, m$, and $p_{B}$ are names used to instantiate agent variables in these sessions. The set of these two sessions is coherent. 
    
\begin{math}
\beta_{1} = \equals{s}{s} \qquad \qquad \hspace{3mm} \alpha_{1} = \{\equals{(p_{A}, \{m\}_{p_{B}})}{(p_{A}, \{m\}_{p_{B}})} \} \\
\beta_{2} = \{\equals{(x, \{y\}_{p_{B}})}{(x, \{y\}_{p_{B}})} \} \qquad \hspace{2mm}  \alpha_{2} = \{\equals{\{y\}_{x}}{\{y\}_{x}}\} \\
 \beta_{3} = \{\equals{\{m\}_{p_{A}}}{\{m\}_{p_{A}}} \} \qquad \qquad \hspace{5mm} \alpha_{3} = \equals{s}{s}
\end{math}

Consider the substitution $\sigma = [x \mapsto p_{A}, y \mapsto m]$ applied to $\xi = A:\beta_{1} \Rightarrow \alpha_{1}, B:\beta_{2} \Rightarrow \alpha_{2}, A:\beta_{3} \Rightarrow \alpha_{3}$. This would be a run $(\xi, \sigma)$ where the intruder just observes traffic on the network, but does not interfere otherwise. 

Let $X_{B} = \{ A, B, p_{A},  p_{B}, k_{B} \}$. $\knowfunc_{0}(B) = (X_{B}; \emptyset)$. Note that $\knowfunc_{1}(B) = \knowfunc_{0}(B)$. There is an update to $B$'s knowledge state only upon receipt of $\beta_{2}$. So, $\knowfunc_{2}(B) = \knowupd(\knowfunc_{1}(B), \beta_{2})$ is given by $(X'; \Phi)$ where $X' = X \cup \{(p_{A}, \{m\}_{p_{B}})\}$ and $\Phi = \{ \{\equals{(p_{A}, \{m\}_{p_{B}})}{(p_{A}, \{m\}_{p_{B}})} \}$.

We can also consider a run with the same $\xi$ under a substitution $\sigma = [x \mapsto \pk(k_{I}), y \mapsto m]$, which represents the man-in-the-middle attack shown earlier.

A \emph{secrecy property} is given by an assertion $\gamma$ that the intruder should not know. A \emph{$K$-bounded attack} which violates the secrecy of $\gamma$ is a run of the protocol with at most $K$ sessions where $\intsub(\knowfunc_{n}(I)) \assderives \intsub(\gamma)$.

\begin{definition}[$K$-bounded insecurity problem]
    Given a protocol $\prot$ and a designated assertion $\gamma$, check whether there exists a $K$-bounded attack on $\prot$ violating the secrecy of $\gamma$. 
\end{definition}
Henceforth, we will use ``insecurity problem'' to mean the $K$-bounded insecurity problem for some $K$.

\section{Proof strategy for the insecurity problem}\label{sec:proofstrat}
In the subsequent sections, we will show that the $K$-bounded insecurity problem for assertions is in NP. But first, we provide an overview of the proof strategy we will employ.

Given a protocol $\prot$, a secrecy property specified by an assertion $\gamma$ and a bound $K$ (in unary), one way to check if there is a $K$-bounded attack works as follows: Guess a coherent set of sessions of size $K$, an interleaving $\xi = u_{1}:\recsend{\beta_{1}}{\alpha_{1}}, \ldots, u_{n}:\recsend{\beta_{n}}{\alpha_{n}}$, and a substitution $\intsub$ with $\dom(\intsub) = \freevars(\xi)$, and check that $(\xi, \intsub)$ satisfies the conditions in Definition~\ref{def:protruns}. For this, we need an effective check for derivabilities of the form $\intsub(\knowfunc_{i-1}(I)) \assderives \intsub(\beta_{i})$.

As with terms, this needs us to bound the size of terms assigned to variables by $\intsub$. However, we also have quantified variables in our proofs, for which witnesses need to be assigned. To check whether a formula of the form $\exists{x}.~\alpha$ is derivable, one would in general have to check if $\alpha(t)$ is derivable for some $t$, which might be unboundedly large. To get an effective algorithm, we have to show that if there is a witness at all, there is a witness of small size.  

One way to represent these witnesses is via a substitution $\mu$ which maps each quantified variable $x$ to the appropriate witness. To obtain small witnesses, we adapt the techniques of~\cite{RT03}. For this, it is helpful to first simplify the LHS to contain only atomic formulas. Any normal proof of $\alpha$ from such an LHS will not involve $\rnconje$ or $\rnexelim$. We further show, via Theorem~\ref{thm:eqs-to-gamma}, that these proofs can be decomposed into multiple proofs, one for each atomic subformula of $\alpha$ (with witnesses instantiated by $\mu$), and then applying $\rnconji$ and $\rnexintro$. 

Applying Theorem~\ref{thm:eqs-to-gamma} to each derivability check $\intsub(\knowfunc_{i-1}(I)) \assderives \intsub(\beta_{i})$ for $1 \le i \le n$, we get a set of witness substitutions $\{\iwsub_{1}, \dots, \iwsub_{n}\}$. We would like to ensure that all of these, along with $\intsub$, can be chosen to be ``small''. 

In order to obtain these small substitutions, we follow the techniques of~\cite{RT03}. This involves identifying and mapping to atomic terms variables that do not map to any term that ``corresponds'' to one in the protocol specification. However, unlike~\cite{RT03}, we need to do this simultaneously for multiple substitutions -- $\intsub$ (which instantiates intruder variables) and $\iwsub_{i}$ (which instantiates quantified variables). The various $\iwsub_{i}$s might be influenced by $\intsub$, so preserving derivabilities when moving to small substitutions becomes a challenge. In order to do this, we employ a notion of ``typed proofs'', both for the $\DYderives$ and $\eqderives$ systems. We show that any proof can be converted to a typed equivalent, and typed proofs make it easier for us to replace the substitutions therein with small ones while preserving derivations.

We will now present the solution in detail.

\section{Solving the insecurity problem for \texorpdfstring{$\assderives$}{assertion derivability}}\label{sec:insecurity}

We fix a protocol $\prot$ and a run $(\xi, \intsub)$ of $\prot$. By renaming variables if necessary, we can ensure that $\freevars(\xi) \cap \qvar = \emptyset$. Thus, in all proof sequents that we consider, no variable has both free and bound occurrences, and no variable is quantified by distinct quantifiers. Furthermore, whenever we use $(S; A)$, we mean that $S$ is a set of terms, $A$ is a set of assertions, and $S$ derives the public terms of all assertions in $A$. 

We also use $\varsof(S;A)$ to mean $\varsof(S) \cup \varsof(A)$ and $\freevars(S;A)$ to mean $\varsof(S) \cup \freevars(A)$. 

As a first step, we move to an LHS consisting solely of atomic formulas. For this, we will employ the following two ``left'' properties enjoyed by the $\assderives$ system.

\begin{lemma}\label{lem:leftprop}
\phantom{a}
\begin{enumerate}
\item $(S; A \cup \{\alpha\conj\beta\}) \assderives \gamma$ iff $(S; A\cup\{\alpha,\beta\}) \assderives \gamma$.
\item Let $S, A, \exists x.\alpha$ and $\gamma$ be such that $x \notin \varsof(S) \cup \varsof(A \cup \{\gamma\})$ and $\posof{x}{\alpha}\subseteq \abstractable(S \cup \{x\}, \alpha)$. Then $(S; A \cup \{\exists x.\alpha\}) \assderives \gamma$ iff $(S \cup \{x\}; A \cup \{\alpha\}) \assderives \gamma$.
\end{enumerate}
\end{lemma}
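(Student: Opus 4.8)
The plan is to reduce both biconditionals to two structural facts about $\assderives$, proved once and reused: \emph{weakening} of the left context, and a \emph{left-cut} principle. For weakening I would show that if $(S; A) \assderives \gamma$ then $(S \cup S'; A \cup A') \assderives \gamma$, by a routine induction on the derivation; the only delicate rule is $\rnexelim$, whose eigenvariable side condition $\S$ may clash with variables in $S' \cup A'$, so I would rename the eigenvariable to keep it fresh before reapplying the rule. For left-cut I would show: if $(S; A) \assderives \delta$ and $(S; A \cup \{\delta\}) \assderives \gamma$, then $(S; A) \assderives \gamma$. This follows by induction on the second derivation, replacing each $\rnax$-leaf that introduces $\delta$ by the given derivation of $\delta$ (weakened as the context grows through $\rnexelim$); again eigenvariable freshness is the only subtlety. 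With these in hand, the lemma becomes a matter of exhibiting the right one-step derivations and cutting.

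For part~(1), in the $(\Rightarrow)$ direction I would observe that $(S; A \cup \{\alpha, \beta\}) \assderives \alpha \conj \beta$ by $\rnax$ (twice) followed by $\rnconji$, and then left-cut this against the hypothesis $(S; A \cup \{\alpha\conj\beta\}) \assderives \gamma$ (after weakening the latter to the context $A \cup \{\alpha,\beta,\alpha\conj\beta\}$) to obtain $(S; A \cup \{\alpha,\beta\}) \assderives \gamma$. In the $(\Leftarrow)$ direction I would note $(S; A \cup \{\alpha\conj\beta\}) \assderives \alpha$ and $(S; A \cup \{\alpha\conj\beta\}) \assderives \beta$, each by $\rnax$ followed by $\rnconje_{0}$ resp.\ $\rnconje_{1}$, and left-cut these twice against the hypothesis $(S; A \cup \{\alpha,\beta\}) \assderives \gamma$.

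For part~(2), the two side conditions of the lemma are exactly what make the $\exists$-rules applicable with the eigenvariable (resp.\ witness) taken to be $x$ itself. In the $(\Leftarrow)$ direction I would apply $\rnexelim$ with eigenvariable $y := x$: this is legal because the hypothesis $x \notin \varsof(S) \cup \varsof(A \cup \{\gamma\})$ supplies the freshness required by $\S$ (note also that $x$ is bound, hence not free, in $\exists x.\alpha$), and since $P = \posof{x}{\alpha}$ we have $\replsubtermat{\alpha}{P}{x} = \alpha$. The left premise $(S; A \cup \{\exists x.\alpha\}) \assderives \exists x.\alpha$ is $\rnax$, and the right premise $(S \cup \{x\}; A \cup \{\exists x.\alpha, \alpha\}) \assderives \gamma$ follows from the hypothesis $(S \cup \{x\}; A \cup \{\alpha\}) \assderives \gamma$ by weakening, yielding $(S; A \cup \{\exists x.\alpha\}) \assderives \gamma$. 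In the $(\Rightarrow)$ direction I would first derive $(S \cup \{x\}; A \cup \{\alpha\}) \assderives \exists x.\alpha$ by $\rnexintro$ with witness $t := x$: here the premise $\replsubtermat{\alpha}{P}{x} = \alpha$ is an $\rnax$, the side premise $S \cup \{x\} \DYderives x$ holds, and the condition that $P = \posof{x}{\alpha} \subseteq \abstractable((S \cup \{x\}) \cup \{x\}, \alpha) = \abstractable(S \cup \{x\}, \alpha)$ is precisely the second hypothesis of the lemma. Left-cutting this against the (weakened) hypothesis $(S; A \cup \{\exists x.\alpha\}) \assderives \gamma$ then gives $(S \cup \{x\}; A \cup \{\alpha\}) \assderives \gamma$.

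The main obstacle I anticipate is not any single direction but the careful bookkeeping in the two reusable lemmas, especially the interaction of left-cut and weakening with the fresh eigenvariable of $\rnexelim$: one must guarantee that replacing $\rnax$-leaves and enlarging contexts never captures or collides with an eigenvariable, which is handled by renaming but needs to be stated precisely. Beyond that, the content of part~(2) is really the observation that the abstractability precondition $\posof{x}{\alpha} \subseteq \abstractable(S \cup \{x\}, \alpha)$ is exactly the instance of the $\rnexintro$ side condition obtained by choosing the witness to be $x$, so that moving $\exists x.\alpha$ into the term context as the fresh name $x$ is sound in both directions.
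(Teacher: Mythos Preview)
Your proposal is correct and follows essentially the same route as the paper: both directions of part~(1) are handled via $\rnconji$/$\rnconje$ plus an admissible cut, and both directions of part~(2) are handled by instantiating $\rnexintro$/$\rnexelim$ with the variable $x$ itself (using the abstractability hypothesis for $\ddag$ and the freshness hypothesis for $\S$), again combined with cut and weakening. If anything, you are more explicit than the paper about isolating weakening and left-cut as reusable lemmas and about the eigenvariable bookkeeping; the paper simply asserts cut admissibility inline and leaves the term-context mismatch in the part~(2) cut (between $S$ and $S\cup\{x\}$) implicit.
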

\begin{proof}
\begin{enumerate}
\item
    To save space, we use $A, \phi$ to mean $A \cup \{\phi\}$ in the proof to follow. 

    For the left to right direction, let $\pi$ be a proof of $S; A,\alpha\conj\beta \vdash \gamma$. The following is a proof of $S; A,\alpha,\beta \vdash \gamma$.
    \[
        \begin{prooftree}
            \[
                \[
                    \justifies S; A,\alpha,\beta \vdash \alpha \using \rnax
                \]
                \[
                    \justifies S; A,\alpha,\beta \vdash \beta \using \rnax
                \]
                \justifies S; A,\alpha,\beta \vdash \alpha\conj\beta \using \rnconji
            \]
            \[
                \pi \leadsto S; A,\alpha\conj\beta \vdash \gamma
            \]
            \justifies S; A,\alpha,\beta \vdash \gamma
        \end{prooftree}
    \]
    For the other direction, let $\pi$ be a proof of $S; A,\alpha,\beta \vdash \gamma$. We obtain a proof of $S; A,\alpha\conj\beta \vdash \gamma$ below. We omit the $S; A$ part of the LHS to conserve space.
    \[
        \begin{prooftree}
            \[
                \[
                    \justifies \alpha\conj\beta \vdash \alpha\conj\beta \using \rnax
                \]
                \justifies \alpha\conj\beta \vdash \beta \using \rnconje{}
            \]
            \[
                \[
                    \[
                        \justifies \alpha\conj\beta \vdash \alpha\conj\beta \using \rnax
                    \]
                    \justifies \alpha\conj\beta \vdash \alpha \using \rnconje{}
                \]
                \[
                    \pi \leadsto \alpha,\beta \vdash \gamma
                \]
                \justifies \alpha\conj\beta,\beta \vdash \gamma \using
            \]
            \justifies \alpha\conj\beta \vdash \gamma
        \end{prooftree}
    \]
    We have freely used the \emph{cut rule}, which is admissible in our system.  
    \[ 
        \begin{prooftree}
            S; A \vdash \phi \qquad S; B,\phi \vdash \psi 
            \justifies S; A \cup B \vdash \psi 
        \end{prooftree}    
    \]
    If $\pi_{0}$ and $\pi_{1}$ are derivations of the left and right premises as above, then we can replace each axiom rule occurring in $\pi_{1}$ and deriving $\phi$, with the proof $\pi_{0}$, thus yielding a proof of $S; A \cup B \vdash \psi$. 

\item
    For the left to right direction, let $\pi$ be a proof of $S; A, \exists{x}.\alpha \vdash \gamma$. Note that we have a proof $\pi_{1}$ of $\exists{x}.\alpha$ from $(S, x; A, \alpha)$, where the $\rnexintro$ rule is justified because the abstractability side condition $\posof{x}{\alpha} \subseteq \abstractable(S \cup \{x\}, \alpha)$ is assumed. We can then use the $\rncut$ rule (which is admissible in $\assderives$) on this proof along with the proof $\pi$ to get $(S, x; A, \alpha) \assderives \gamma$. 
    \[
        \begin{prooftree}
        \[
            \[
                \justifies S,x; A,\alpha \vdash \alpha \using \rnax
            \]
            \justifies S,x; A,\alpha \vdash \exists{x}.\alpha \using \rnexintro
        \]
        \[
				\pi \leadsto
        	\justifies S; A, \exists{x}.\alpha \vdash \gamma
        \]
        \justifies S, x; A, \alpha \vdash \gamma \using \rncut
        \end{prooftree}
    \]   

    For the other direction, let $\pi$ be a proof of $S, x; A,\alpha \vdash \gamma$. We obtain a proof of $S; A,\exists{x}.\alpha \vdash \gamma$ as follows.
    \[
        \begin{prooftree}
            \[
                \justifies S; A,\exists{x}.\alpha \vdash \exists{x}.\alpha \using \rnax
            \]
            \[
                \pi \leadsto S,x; A,\alpha \vdash \gamma
            \]
            \justifies S; A,\exists{x}.\alpha \vdash \gamma \using \rnexelim
        \end{prooftree}
    \]
\end{enumerate}
\end{proof}

This leads us to a notion of \emph{kernel}. 

\begin{definition}
    The \emph{atoms} of an assertion $\alpha$, denoted $\atomsof(\alpha)$, is the set of all maximal subformulas of $\alpha$ that are atomic.     
    The \emph{kernel} of $(S;A)$, denoted $\dc(S; A)$, is given by $(T;E)$ where $T = S \cup \boundvars(A)$ and $E = \{\beta \in \atomsof(\alpha) \mid \alpha \in A\}$. 
\end{definition}

Any $x \in \boundvars(A)$ which is added to $T$ can be thought of as an ``eigenvariable'' which witnesses an existential assertion in $A$. If we derive some $\gamma$ from $(T \cup \{x\}; \beta)$, since we only consider $\gamma$ such that $\varsof(\gamma)\cap\boundvars(A) = \emptyset$, we can also derive it from $(T; \exists{x}.\beta)$. Lemma~\ref{lem:leftprop} can thus always be applied, and it can be shown that kernels preserve derivability, i.e. $(S; A) \assderives \gamma$ iff $\dc(S; A) \assderives \gamma$ for any $\gamma$.

Here is another basic property of kernels, which is crucially used in many proofs later. 

\begin{lemma}\label{lem:dcpure}
    Suppose $(T;E) = \dc(S;A)$ for some $(S;A)$. 
    If $(T;E) \assderives \alpha$ and $a \in \publics(\alpha)$, then $T \DYderives a$. If $(T;E) \eqderives \equals{t}{u}$ then $T \DYderives t$ and $T \DYderives u$. 
\end{lemma}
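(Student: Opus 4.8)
The plan is to prove both statements by induction on derivations, resting on three structural features of a kernel $(T;E)=\dc(S;A)$: first, \emph{purity}, namely $T\DYderives\publics(\beta)$ for every $\beta\in E$ (this is exactly the standing convention that $(T;E)$, like any pair we write, has its term part derive the public terms of its assertion part, and it is what the name of the lemma alludes to); second, that every quantified variable occurring in $E$ is one of the eigenvariables $\boundvars(A)$, all of which lie in $T$; and third, that apart from these eigenvariables the terms of $T$ are quantifier-free. I would first isolate two purely term-level facts about $\DYderives$. The \textbf{reconstruction} fact: if every quantified variable of a term $t$ lies in $T$ and every maximal subterm of $t$ containing no quantified variable is $T$-derivable, then $T\DYderives t$ (induction on $t$, rebuilding with constructor rules). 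The \textbf{decomposition} fact: if $T\DYderives s$ then $T\DYderives a$ for every maximal quantifier-free subterm $a$ of $s$ (induction on a normal $\DYderives$ proof: the constructor case recurses into the immediate subterms, the destructor case places the conclusion in $\subterms(T)$ and recurses up the major premise, and the axiom base uses quantifier-freeness of the non-eigenvariable part of $T$).

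For the second statement I would induct on a \emph{normal} $\eqderives$ proof of $\equals{t}{u}$, legitimate by Theorem~\ref{thm:eqnormsub}, whose subterm property confines every term appearing to $\subterms(T)\cup\subterms(E\cup\{\equals{t}{u}\})$, so every quantified variable met is an eigenvariable already in $T$. I carry the invariant ``every term of each derived atomic assertion is $T$-derivable''. The base case $\rnax$ follows from purity together with reconstruction (public terms derivable, eigenvariables present, so $t$ and $u$ can be rebuilt); $\rneq$ is immediate; $\rncons$, $\rnsymm$ and $\rntrans$ propagate the invariant through constructor rules; and $\rnproj$ is handled by its side condition $\P$, which places the argument positions in $\abstractable(T,\cdot)$ and hence yields $T\DYderives t_i$ and $T\DYderives u_i$ directly. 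The membership rules $\rnsubst,\rnprom,\rnlint,\rnlwk$ are tracked in the same induction, giving derivability of the membership term and of each list element, which feeds $\rnprom$.

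For the first statement I would work with a general $\assderives$ proof but first reduce away the elimination rules. Since $E$ is atomic, the major premise of any $\rnconje$ (a conjunction) or $\rnexelim$ (an existential) cannot be an axiom, and in a normalized proof it cannot be produced by the matching introduction either, so a detour-elimination argument shows these two rules do not occur at all. What remains are the $\eqderives$ rules (dispatched as above, now additionally invoking decomposition to pass from $T\DYderives t$ to the public \emph{subterms} needed in the conclusions of $\rneq$ and $\rnproj$) together with $\rnconji$, $\rnsays$ and $\rnexintro$. For $\rnconji$ one uses $\publics(\alpha_0\conj\alpha_1)\subseteq\publics(\alpha_0)\cup\publics(\alpha_1)$; for $\rnsays$, $\publics(\pk(k)\says\alpha)=\{\pk(k)\}\cup\publics(\alpha)$, with $T\DYderives k$ giving $T\DYderives\pk(k)$. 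The delicate case is $\rnexintro$: here $\publics(\exists x.\alpha)=\publics(\alpha)$, and for a public term $a$ of $\alpha$ I examine its parent in $\alpha$, which must contain a quantified variable. If that variable differs from $x$, then $a$ survives as a maximal quantifier-free subterm of the premise $\replsubtermat{\alpha}{P}{t}$ and is derivable by the induction hypothesis; if it is $x$, then $a$ hangs off the path to an $x$-occurrence, so $a=\subtermat{\alpha}{q}$ for some $q\in\mathbb{Q}_{p}$ with $p$ an $x$-position in $P\subseteq\abstractable(T\cup\{x\},\alpha)$, whence $T\cup\{x\}\DYderives a$ and, as $x\notin\varsof(a)$, $T\DYderives a$. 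Lemma~\ref{lem:absderiv} is what makes this position-bookkeeping precise.

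I expect the main obstacle to be the elimination-removal step for $\assderives$: the excerpt supplies normalization only for $\DYderives$ and $\eqderives$ (Theorem~\ref{thm:eqnormsub}), so I would have to establish the corresponding detour-elimination for the full assertion calculus (or, equivalently, a decomposition of kernel derivations into introduction-only proofs of the atoms) before the induction of the first statement may ignore $\rnconje$ and $\rnexelim$. This is genuinely needed, since a naive induction loses information across a $\rnconji$ immediately followed by $\rnconje$: a conjunct can have public terms that sit buried, and non-maximal, inside a public term of the conjunction, so IH on the conjunction does not hand them back. The secondary difficulty is the $\rnexintro$ analysis above, whose correctness hinges on the abstractability side condition $\ddag$ rather than on the shape of the witness $t$; everything else is routine propagation of derivability through the proof.
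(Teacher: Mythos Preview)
Your proposal is correct and follows the same inductive strategy as the paper, with two organizational differences worth noting. First, you prove the two statements in the opposite order: you establish the $\eqderives$ claim directly by induction on a normal $\eqderives$ proof (invoking Theorem~\ref{thm:eqnormsub}), whereas the paper proves the $\publics$ claim first, by induction over the full $\assderives$ derivation, and then obtains the $\eqderives$ claim as a corollary via exactly your ``reconstruction'' fact. Second, for the $\rneq$ case the paper preprocesses the derivation so that every $\rneq$ premise ends in a destructor rule (rewriting each constructor$+\rneq$ pattern into $\rneq+\rncons$), which forces the derived term into $\subterms(T)$ and makes its public terms immediate; you instead invoke your ``decomposition'' fact to pass from $T\DYderives t$ to derivability of its maximal quantifier-free subterms. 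Both devices achieve the same end. You are also right to flag the elimination-removal step: the paper simply asserts that a proof from atomic $E$ has no occurrence of $\rnconje$ or $\rnexelim$, while you correctly observe that this needs a detour-elimination argument for the full assertion calculus. Your handling of $\rnexintro$ via the abstractability side condition, sibling positions, and Lemma~\ref{lem:absderiv} matches the paper's at the same level of detail.
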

\begin{proof}
Recall that we only consider $(S;A)$ such that $\freevars(S;A) \cap \qvar = \emptyset$, and $S \DYnderives \publics(\beta) \in S$ for all $\beta \in A$. Since $(T;E) = \dc(S;A)$, we have $T = S \cup \boundvars(A)$ and $E = \{\gamma \in \atomsof(\beta) \mid \beta \in A\}$. Thus $T \DYderives \publics(\gamma) \in T$ for every $\gamma \in E$, and $\varsof(E) \cap \qvar \subseteq T$. 

Let $\pi$ be a proof of $(T;E) \assderives \alpha$. Note that $\pi$ has no occurrence of $\rnexelim$ or $\rnconje$. We assume that all premises of $\rneq$ are normal $\DYderives$ proofs ending in a destructor (by repeatedly turning all $\text{constructor}+\rneq$ patterns into $\rneq+\rncons$). We show by induction that $T \DYderives \publics(\alpha)$. Let $\rnrule$ denote the last rule of $\pi$.
\begin{itemize}[leftmargin=*]

\item $\rnrule = \rnax$: $\alpha \in E$. So $T \DYderives \publics(\alpha)$. 

\item $\rnrule = \rneq$: $\alpha$ is $\equals{t}{t}$ with $T \DYderives t$ via a proof ending in destructor. Since any term in $T$ is either in $\qvar$ or contains no variables from $\qvar$, and since $t \in \subterms(T)$, we see that $\publics(\alpha)$ is $\{t\}$ or $\emptyset$, and $T \DYderives \publics(\alpha)$ in both cases.

\item $\rnrule \in \{\rnsymm, \rntrans, \rnprom, \rnlint, \rnsubst, \rnconji\}$: Any $t \in \publics(\alpha)$ is in $\publics(\beta)$ for one of the premises $\beta$, and the result follows.

\item $\rnrule = \rncons$: $\alpha$ is of the form $\equals{t}{u}$, where $t = \func(t_{0},t_{1})$ and $u = \func(u_{0},u_{1})$, and the immediate subproofs of $\pi$ derive $\equals{t_{0}}{u_{0}}$ and $\equals{t_{1}}{u_{1}}$. Now, any term in $\publics(\alpha)$ is a public term of one of the premises (and we can apply IH), unless it is $t$ or $u$. Say it is $t$. Then, $t$ is a maximal subterm of $\alpha$ which avoid $\qvar$, and thus it must be that $t_{0}$ and $t_{1}$ are also public terms of the premises. Thus $T \DYderives \{t_{0},t_{1}\}$ by IH, and hence $T \DYderives t$. Similarly for $u$. 

\item $\rnrule = \rnproj$: $\alpha$ is $\equals{t}{u}$, and any public term of $\alpha$ is a public term of the premise (and we can apply IH), unless it is $t$ or $u$. But by abstractability, $T \DYderives \{t,u\}$, and we are done. 

\item $\rnrule = \rnlwk$: $\alpha$ is $t \listmemb [n_{0},\ldots,n_{k}]$, where $t$ and all the $n_{i}$'s are variables or names. The premise is $\equals{t}{n_{i}}$ for some $i$, and we also require that $S \DYderives n_{i}$ for all $i$. Combining this with the IH, we see that $S \DYderives \publics(\alpha)$. 

\item $\rnrule = \rnsays$: $\alpha$ is of the form $\pk(k)\says\beta$, and $\beta$ is proved by the immediate subproof. We also have that $S \vdash k$ and hence $S \vdash \pk(k)$. Any other public term occurring in $\alpha$ occurs in $\beta$, so by IH we have that $S \DYderives \publics(\alpha)$. 

\item $\rnrule = \rnexintro$: $\alpha$ is of the form $\exists{x}.\beta$, with premise $\gamma = \replsubtermat{\beta}{P}{r}$, where $P = \posof{x}{\beta}$. We also have, by the other requirements for the rule, $T \DYderives r$ and $P \subseteq \abstractable(T\cup\{x\},\beta)$. By Lemma~\ref{lem:absderiv}, $P \subseteq \abstractable(T,\gamma)$. Consider any $a = \subtermat{\alpha}{q} \in \publics(\alpha)$. If $a\in\publics(\gamma)$, then we can apply IH. Otherwise, $q$ has to be a sibling of some position in $p \in P$. In other words, $a$ is public in $\alpha$ because its sibling is $x$, but in $\gamma$, the $x$ is replaced by $r$ (and $\varsof(r) \cap \qvar = \emptyset$), so $a$ is no longer a \emph{maximal} subterm avoiding $\qvar$. Since the set of abstractable positions is sibling-closed, $q\in\abstractable(T, \alpha)$, and since subterms at abstractable positions are derivable, $T\DYderives a$.  
\end{itemize} 

Now consider an $\eqderives$ proof of $(T;E) \vdash \equals{t}{u}$. It has been shown above that $T \DYderives \publics(\equals{t}{u})$. Consider $t$. Either $t \in \publics(\equals{t}{u})$, in which case we are done. Otherwise, every maximal subterm of $t$ which avoids $\qvar$ is derivable from $T$, and every $x \in \varsof(t) \cap \qvar$ is in $T$. From these, we can ``build up'' $t$ using constructor rules only, thereby proving that $T \DYderives t$. Similarly we can show that $T \DYderives u$. 
\end{proof}
As mentioned earlier, by proof normalization, we decompose a proof $\pi$ of $(S;A) \vdash \alpha$ into several proofs of atomic subformulas of $\alpha$ (equalities, predicates, list membership, and $\textit{says}$ assertions), and a proof $\pi_{0}$ which uses these atoms as axioms, and applies $\rnconji$ and $\rnexintro$, all with the kernel as LHS. 

For each of these atomic subformulas, we would like to operate in a proof system which does not involve conjunction or existential quantification. This is easy to do for equalities, predicates, and lists, because the only way to derive such assertions is by deriving other equalities, predicates, and lists. 

However, consider subformulas of the form $\pk(k) \says \beta$. We can derive those in two ways -- either by using $\rnax$ (if the formula is already in the LHS) or by using the $\rnsays$ rule on $\beta$ and $k$. In the latter case, $\beta$ might contain logical operators! Thus, we need to break down $\beta$ as well. 

We thus formalize the \emph{hereditary atoms} of a formula as:
\[
\eatomsof(\gamma) = \begin{cases}
\eatomsof(\alpha) \cup \eatomsof(\beta) & \text{if $\gamma = \alpha\conj\beta$} \\
\eatomsof(\alpha) & \text{if $\gamma = \exists{x}.\alpha$} \\
\left\{\pk(k)\says\alpha\right\} \cup \eatomsof(\alpha) & \text{if $\gamma = \pk(k)\says\alpha$} \\
\{\gamma\} & \text{otherwise}
\end{cases}
\]

We now reduce any proof of $S; A \assderives \alpha$ to one with a very particular structure, as depicted in Figure~\ref{fig:proofstruc}. This new proof has as its LHS the kernel $(T; E)$ of $(S; A)$, and derives $\alpha$. This proof first involves multiple proofs, each of which is an $\eqderives$ proof~\footnote{Recall that $\eqderives$ is the subsystem that does not use any rules from $\{\rnconji, \rnconje, \rnexintro, \rnexelim, \rnsays\}$.} of some hereditary atom of $\alpha$, with witnesses appropriately assigned to bound variables by a substitution $\mu$. These proofs are then followed by applications of the $\rnax$, $\rnconji$, $\rnexintro$ and $\rnsays$ rules (represented by $\vdash_{i}$ in the Figure~\ref{fig:proofstruc}) to get $\alpha$.  

\begin{figure}[!h]
\centering
\begin{tikzpicture}
\node[inner sep=0pt] (proof) at (2,-1)
{\includegraphics[width=.2\textwidth]{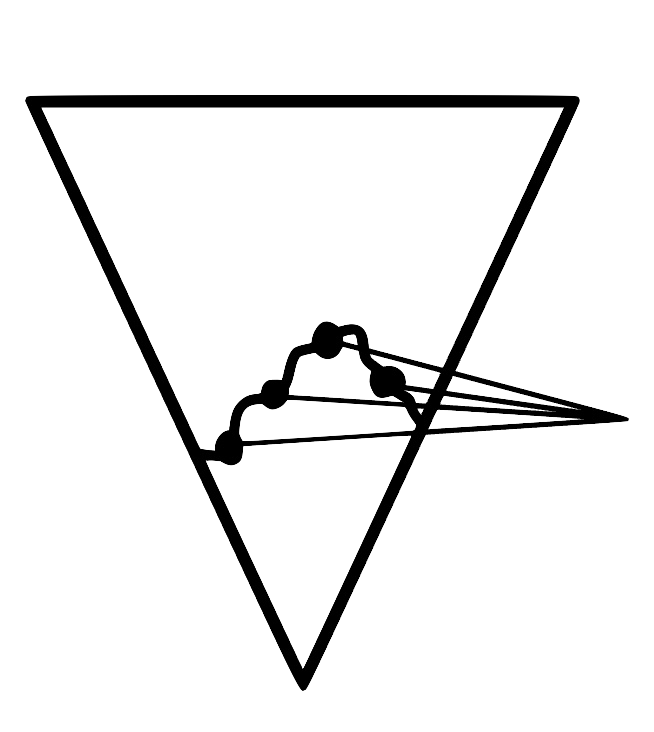}};
\node (eqd) at (2, -0.2) {$\eqderives$};
\node (id) at (2, -1.8) {$\vdash_{i}$};
\node (hat1) at (5.8, -1.3) {Each a sequent of the form};
\node (hat2) at (5.8, -1.8) {$T; E \vdash \mu(\beta)$ for $\beta \in \eatomsof(\alpha)$};
\node (conc) at (2, -3) {$T; E \vdash \alpha$};
\end{tikzpicture}
\caption{Structure of the new proof guaranteed by Theorem~\ref{thm:eqs-to-gamma}}
\label{fig:proofstruc}
\end{figure}

Consider the set $X$ of all hereditary atoms of $\alpha$ which feature in the above reduction. Suppose $\beta \in X$ is of the form $\pk(k) \says(\exists{x}.\delta)$, but $\exists{x}.\delta \notin X$. Then $\beta$  can only be derived from the LHS by the $\rnax$ rule, since there is no other rule in the $\eqderives$ system that derives a$\says$assertion. Thus we do not obtain $\exists{x}.\delta$ using the $\rnexintro$ rule, and so we do not need to provide a witness for such an $x$. This is precisely formulated in the next theorem. 

In the statement of the theorem,~\ref{item:dycond} ensures that all witnesses are derivable, \ref{item:asscond} ensures that all the atoms in $X$ have a proof (with witnesses instantiated appropriately), and \ref{item:introcond} ensures that the final intros-only proof exists. Finally, \ref{item:abscond} ensures that the proper abstractability conditions for applications of $\rnexintro$ are satisfied. For any set of assertions, we denote the set $\{x \in \boundvars(\beta) \mid \beta \in X\}$ by $\boundvars(X)$.

\begin{theorem}\label{thm:eqs-to-gamma}
For a formula $\alpha$ s.t.\ $\boundvars(\alpha) \cap \varsof(S;A) = \emptyset$, and $(T;E) = \dc(S;A)$, $(S; A) \assderives \alpha$ iff  there is $X \subseteq \eatomsof(\alpha)$ and $\mu$ with $\dom(\mu) = \boundvars(\alpha)\setminus\boundvars(X)$ s.t.: 
    \begin{enumerate}[label={[{\alph*}]}]
        \item \label{item:dycond} $\forall{}x \in \dom(\mu): T \DYderives \mu(x)$.
        \item \label{item:asscond}For all $\beta \in X$, $(T;E) \eqderives \mu(\beta)$.  
        \item \label{item:introcond} $(T;\mu(X)) \assderives \alpha$ via a proof using rules from $\{\rnax, \rnconji, \rnexintro, \rnsays\}$. 
        \item \label{item:abscond}$\forall{}x \in \dom(\mu), t \in \subterms(\alpha)$: $\posof{x}{t} \subseteq \abstractable(T \cup \dom(\mu), t)$.
    \end{enumerate}
\end{theorem}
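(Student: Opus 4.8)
The plan is to prove the two directions separately, reducing both to the kernel $(T;E) = \dc(S;A)$. The discussion following Lemma~\ref{lem:leftprop} establishes that kernels preserve derivability, so it suffices throughout to work with $(T;E) \assderives \alpha$ in place of $(S;A) \assderives \alpha$; moreover $E$ is a set of atomic assertions and $T = S \cup \boundvars(A)$, which is what lets the structural arguments below go through. The one structural fact I would record at the outset, by inspecting Table~\ref{tab:asstheory}, is that no $\eqderives$ rule takes a conjunction, an existential, or a $\says$-assertion as a premise. Consequently, once such a compound formula is produced (by $\rnconji$, $\rnexintro$, or $\rnsays$), it can be consumed only by another intro rule, so the intro rules $\{\rnconji,\rnexintro,\rnsays\}$ occupy a connected prefix at the conclusion of any proof.

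For the (easy) right-to-left direction, I would take the intros-only proof of $(T;\mu(X)) \assderives \alpha$ supplied by \ref{item:introcond}. Its leaves are applications of $\rnax$ deriving the axioms $\mu(\beta)$, $\beta \in X$; for each such leaf, \ref{item:asscond} furnishes an $\eqderives$ (hence $\assderives$) proof of $(T;E) \vdash \mu(\beta)$. Splicing these in with the admissible $\rncut$ rule (exactly as in the proof of Lemma~\ref{lem:leftprop}) replaces every $\rnax$ leaf and yields $(T;E) \assderives \alpha$, whence $(S;A) \assderives \alpha$. Nothing further is needed here, since the $\rnexintro$ side conditions already hold in the given proof of \ref{item:introcond}.

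For the (harder) left-to-right direction, I would first normalize the proof of $(T;E) \assderives \alpha$; because $E$ is atomic, a normal proof uses neither $\rnconje$ nor $\rnexelim$ (a conjunction or existential on the left could only come from a $\rnconji$/$\rnexintro$ redex, which normality forbids). Using the structural fact above, I then peel the normal proof from its conclusion through the intro prefix: at a $\rnconji$ I recurse into both premises; at an $\rnexintro$ introducing $x$ with witness $t$ I set $\mu(x) := t$ and recurse; at an $\rnsays$ I recurse into the body; and I stop a branch at any maximal $\eqderives$ subproof (necessarily concluding a true atom) or at any $\rnax$ concluding a $\says$-assertion drawn from $E$. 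Reading the stopping formulas back to their uninstantiated form gives $X \subseteq \eatomsof(\alpha)$; the bookkeeping point anticipated in the discussion before the theorem is that a bound variable sitting inside a $\says$-assertion pulled from $E$ by $\rnax$ is never introduced, so exactly $\dom(\mu) = \boundvars(\alpha)\setminus\boundvars(X)$. Then \ref{item:dycond} is read off the witness side conditions of the peeled $\rnexintro$ steps, \ref{item:asscond} is witnessed by the stopping subproofs (or by $\rnax$ for the $E$-drawn $\says$-assertions), and \ref{item:introcond} is the intro prefix itself with each stopping subproof replaced by an $\rnax$ leaf deriving $\mu(\beta) \in \mu(X)$.

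The main obstacle is establishing the uniform abstractability condition \ref{item:abscond} and reconciling it with the per-step side conditions of $\rnexintro$. The proof $\pi$ records, at each introduction of $x$, only the local condition $\posof{x}{\delta} \subseteq \abstractable(T \cup \{x\}, \delta)$ on the body $\delta$, whereas \ref{item:abscond} asserts $\posof{x}{t} \subseteq \abstractable(T \cup \dom(\mu), t)$ uniformly over all $x \in \dom(\mu)$ and all subterms $t$ of $\alpha$. Bridging the two requires Lemma~\ref{lem:absderiv}: since each witness $\mu(x)$ is $T$-derivable by \ref{item:dycond}, the abstractable positions computed with the eigenvariable $x$ present agree, on $\positions{t}$, with those computed after $x$ has been replaced by $\mu(x)$; iterating this equivalence over the finitely many nested witnesses transports the side conditions between the fully instantiated proof $\pi$ and the abstract skeleton over $(T;\dom(\mu))$, and the prefix- and sibling-closure of abstractable position sets (the two properties proved inside Lemma~\ref{lem:absderiv}) is what makes the quantification over all $t \in \subterms(\alpha)$ in \ref{item:abscond} coherent. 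I expect the careful propagation of the accumulating substitution $\mu$ through nested quantifiers and $\says$-bodies, together with these abstractability transfers, to be the most delicate part of the argument.
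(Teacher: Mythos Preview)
Your proposal is correct and follows essentially the same approach as the paper: reduce to the kernel, normalize, observe that $\rnconje$ and $\rnexelim$ are absent, decompose the proof into an intro-only tail and $\eqderives$ subproofs at the hereditary atoms, read off $\mu$ from the $\rnexintro$ witnesses, and establish \ref{item:abscond} by an induction through the intro prefix using Lemma~\ref{lem:absderiv}. Your observation that \ref{item:abscond} is not actually needed for the $(\Leftarrow)$ direction (since a valid proof witnessing \ref{item:introcond} already satisfies the $\rnexintro$ side conditions, and splicing only changes the assertion part of the LHS, not the term part $T$) is a small sharpening of the paper's presentation, which invokes \ref{item:abscond} there somewhat redundantly.
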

\begin{proof}
\phantom{a}
    \begin{description}
        \item[$(\Rightarrow)$] Suppose $(S;A) \assderives \alpha$. Then, since kernels preserve derivability, $(T;E) \assderives \alpha$. Let $\pi$ be a normal proof of $(T;E) \vdash \alpha$. Since $E$ only has atomic assertions, it is easy to see that there is no occurrence of the $\rnconje$ and $\rnexelim$ rules. Recall that we only consider $\alpha$ such that no $x$ is quantified by quantifiers occurring in two distinct positions in $\alpha$, and that no variable occurs both free and bound in $\alpha$. For each $x \in \boundvars(\alpha)$ introduced in $\pi$ via an $\rnexintro$ application, let $t_{x}$ be the witness used by the $\rnexintro$ rule introducing the quantifier $\exists{x}$ in $\alpha$. Define $\mu(x) \coloneqq t_{x}$ for each such $x$. The side conditions for the $\rnexintro$ occurrences guarantee that $T \DYderives \mu(x)$ for each $x \in \dom(\mu)$, thus satisfying \ref{item:dycond}. 
        
        Let $X \subseteq \eatomsof(\alpha)$ be all the hereditary atoms of $\alpha$ appearing on the RHS in various subproofs of $\pi$. By normalization, one can always place the logical rules after deriving atomic formulas. Hence, we can decompose $\pi$ into proofs $\pi_{\beta}$ of $(T;E) \vdash \mu(\beta)$ for each $\beta \in X$, and a proof $\wh{\pi}$ deriving $(T;\mu(X)) \vdash \alpha$ using only the $\rnax, \rnconji, \rnexintro$ and $\rnsays$ rules. This proves \ref{item:asscond} and \ref{item:introcond}. 
        
        We now prove \ref{item:abscond}. It is evident that each subproof of $\wh{\pi}$ has conclusion $\mu(\beta)$ for some $\beta \in \subformulas(\alpha)$, with $\wh{\pi}$ itself deriving $\mu(\alpha) = \alpha$. We will now show that for every subproof $\pi_{0}$ of $\wh{\pi}$ with conclusion $\mu(\beta)$ and last rule $\rnrule$, we have (letting $Z_{\beta} = \boundvars(\beta)\setminus\boundvars(X)$): 
        \begin{equation} \label{eqn:abscond}
        \forall{}x \in Z_{\beta}, \forall{}t \in \subterms(\mu(\beta)): \posof{x}{t} \subseteq \abstractable(T \cup Z_{\beta}, t).
        \end{equation}
       \begin{description}
            \item[$\rnrule = \rnax$:] $\mu(\beta) \in \mu(X)$, so $Z_{\beta} = \emptyset$, and so (\ref{eqn:abscond}) holds vacuously. 
            \item[$\rnrule = \rnconji$:] $\beta$ of the form $\beta_{0} \conj \beta_{1}$, and $\boundvars(\beta_{0})$ and $\boundvars(\beta_{1})$ are disjoint, and no variable has both free and bound occurrences. So no variable in $\boundvars(\beta_{i})$ occurs in $\beta_{1-i}$. So if $x \in \boundvars(\beta_{i})$, and any $t \in \subterms(\mu(\beta_{1-i}))$, then $\posof{x}{t} = \emptyset$. So~(\ref{eqn:abscond}) for $\pi_{0}$ follows by IH (applied on the immediate subproofs). 
            	\item[$\rnrule = \rnsays$:] $\beta$ is of the form $\pk(k)\says\beta'$ and every bound variable of $\beta$ is also bound in $\beta'$, so we get~(\ref{eqn:abscond}) from IH. 
            \item[$\rnrule = \rnexintro$:] $\beta = \exists{z}.\gamma$, and $\mu(\beta) = \exists{z}.\mu'(\gamma)$, where $\mu' = \mu \restr (Z_{\gamma})$. The immediate subproof of $\pi_{0}$ has conclusion $\mu(\gamma)$.  
            
            Now for any $r \in \subterms(\mu(\beta))$, letting $P = \posof{z}{r}$, $t = \replsubtermat{r}{P}{\mu(z)} \in \subterms(\mu(\gamma))$. For any $x \in \vars$, we have $\posof{x}{r} = \posof{x}{t} \cap \positions{r}$ and $\abstractable(T\cup Z_{\beta},r) = \abstractable(T\cup Z_{\gamma}, t) \cap \positions{r}$ (by Lemma~\ref{lem:absderiv}).
            
            By IH, for all $x \in Z_{\gamma}$ and $t \in \subterms(\mu(\gamma))$, $\posof{x}{t} \subseteq \abstractable(T \cup Z_{\gamma}, t) \subseteq \abstractable(T \cup Z_{\beta}, t)$. So for all $x \in Z_{\beta}\setminus\{z\}$ and $r \in \subterms(\mu(\beta))$, $\posof{x}{r} \subseteq \abstractable(T \cup Z_{\beta}, r)$.   
            
            For $z$, the abstractability side condition for $\rnexintro$ implies that for all $r \in \subterms(\mu(\beta))$, $\posof{z}{r} \subseteq \abstractable(T \cup Z_{\beta}, r)$. Thus, equation~(\ref{eqn:abscond}) follows for $\pi_{0}$.         
        \end{description}
        
        Applying~(\ref{eqn:abscond}) to $\wh{\pi}$, we get \ref{item:abscond}. 
        \item[$(\Leftarrow)$] This is the easier direction. We just compose all the $\eqderives$ proofs and the intros-only proof to obtain an $\assderives$ proof $\pi$ of $(T;E) \vdash \alpha$. The abstractability condition~\ref{item:abscond} ensures that the $\rnexintro$ is always enabled in $\pi$. 
           \end{description}
                  Thus, $(T;E) \assderives \alpha$ iff $(S;A) \assderives \alpha$, and so we are done. \qedhere   

\end{proof}

For the rest of the paper, we use the following notation. $(T_{i};E_{i}) \coloneqq \dc(\knowfunc_{i}(I))$ and $(U_{i};F_{i}) \coloneqq \dc(\knowfunc_{i}(u_{i}))$ for $1 \le i \le n$. Note that $T_{i} \subseteq T_{i+1}$ and $E_{i} \subseteq E_{i+1}$ for every $i$.

Since $\dom(\intsub) = \freevars(\xi)$, we have $\intsub(x) = x$ for all $x \in \qvar$. It follows that $\intsub(\dc(S;A)) = \dc(\intsub(S;A))$, for any $(S;A)$. 

Applying Theorem~\ref{thm:eqs-to-gamma} to the $\intsub(\knowfunc_{i-1}(I)) \assderives \intsub(\beta_{i})$ derivations in Definition~\ref{def:protruns}, for every $i \leq n$ we get $X_{i} \subseteq \eatomsof(\beta_{i})$ and a substitution $\iwsub_{i}$ with domain $\boundvars(\beta_{i})\setminus\boundvars(X_{i})$ s.t.: 
\begin{itemize}
    \item for every $x \in \dom(\iwsub_{i})$, $\intsub(T_{i-1}) \DYderives \iwsub_{i}(x)$, and
        \item $\intsub(T_{i-1}; E_{i-1}) \eqderives \intsub\iwsub_{i}(\gamma)$ for $\gamma \in X_{i}$.
\end{itemize}

For every $i \leq n$, Definition~\ref{def:protruns} also states $\knowfunc_{i}(u_{i}) \assderives \alpha_{i}$, and thus, $\intsub(\knowfunc_{i}(u_{i})) \assderives \intsub(\alpha_{i})$. So Theorem~\ref{thm:eqs-to-gamma} guarantees $Y_{i} \subseteq \eatomsof(\alpha_{i})$ and a substitution $\hwsub_{i}$ with domain $\boundvars(\alpha_{i})\setminus\boundvars(Y_{i})$ s.t.: 
\begin{itemize}
\item for every $x \in \dom(\hwsub_{i})$, $\intsub(U_{i}) \DYderives \hwsub_{i}(x)$, and
 \item $\intsub(U_{i}; F_{i}) \eqderives \intsub\hwsub_{i}(\gamma)$, where $\gamma \in Y_{i}$. 
\end{itemize}

For any $\gamma \in X_{i} \cup Y_{i}$, three possibilities arise. 
\begin{itemize}
\item $\gamma$ is of the form $\equals{t}{u}$.
\item $\gamma$ is of the form $\pk(k)\says\delta$. Such a formula can only be derived using $\rnax$, as no other rule in the $\eqderives$ system generates it. Hence such assertions can be ignored for the rest of this section, which is about preserving non-trivial $\eqderives$ proofs even after changing some substitutions. 
\item $\gamma$ is of the form $P(u_{0},\ldots,u_{m})$ or $t \listmemb \ell$. Such formulas only mention variables or names, so $\lambda(x)$ is already small for any $\lambda \in \{\intsub, \hwsub_{i}, \iwsub_{i} \mid i \leq n\}$ and any variable $x$ occurring in $\gamma$. Hence we can ignore such formulas too for the rest of the section, since these formulas do not undergo any change.
\end{itemize}

Hence we simplify the presentation for the rest of this section by only considering equality assertions $\gamma$. 

We now have, for every $i \le n$, substitutions $\iwsub_{i}$ and $\hwsub_{j}$, each with domain $\boundvars(\beta_{i})$ and $\boundvars(\alpha_{j})$. However, these substitutions do not necessarily map variables to ground terms. It is possible that $\hwsub_{j}(\alpha_{j})$ has as a subterm a variable from the domain of some ``earlier'' $\iwsub_{i}$, i.e. one where $i < j$.  

If $(T; E) \vdash \equals{x}{y}$, then $x$ and $y$ ought to actually stand for the same ground term. To capture this, we need a ``compound'' substitution that maps each variable in the domain of each $\iwsub$ and each $\hwsub$ to a ground term. We now present a motivating example which is followed by the formal definition of this ground substitution.

\begin{example}
    Suppose $y \in \boundvars(\beta_{1})$, and $x \in \boundvars(\alpha_{2})$. Consider a situation where $\hwsub_{2}(x) = \{y\}_{k}$ and $\iwsub_{1}(y) = (m_{0}, m_{1})$, where $m_{0}, m_{1} \in \names$. Also suppose $(T_{2};E_{2}) \vdash \equals{x}{z}$ for some $z \in \dom(\intsub)$. We need a $\lambda$ which maps $x$ and $z$ to the same ground term, i.e. $\lambda$ needs to be s.t. $\lambda(x) = \lambda(z)$. We can take $\lambda$ to be $\intsub\iwsub_{1}\hwsub_{2}$. We see that $\lambda(x) = \intsub(\iwsub_{1}(\hwsub_{2}(x))) = \intsub(\iwsub_{1}(\{y\}_{k})) = \intsub(\{(m_{0},m_{1})\}_{k}) = \{(m_{0},m_{1})\}_{k}$. Observe that $\dom(\lambda) = \dom(\intsub) \cup \dom(\iwsub_{1}) \cup \dom(\hwsub_{2})$, and since $z \notin \dom(\iwsub_{1}) \cup \dom(\hwsub_{2})$, $\lambda(z) = \intsub(z)$. 
\end{example}

\begin{definition}\label{def:bigsub}
The compound substitution which maps any variable in $\dom(\intsub) \cup \{\dom(\iwsub_{i}) \cup \dom(\hwsub_{i}) \mid 1 \le i \le n\}$ to a ground term is given by $\bigsub \coloneqq \intsub\iwsub_{1}\hwsub_{1}\ldots\iwsub_{n}\hwsub_{n}$. 
\end{definition}

Note that for $\lambda \in \{\intsub, \hwsub_{i}, \iwsub_{i} \mid i \leq n\}, \bigsub(\lambda(x)) = \bigsub(x)$.

\begin{lemma}\label{lem:equndertheta}
	Suppose $\lambda$ is such that $\lambda(r) = \lambda(s)$ for each $\equals{r}{s} \in E$, and $T; E \eqderives \equals{t}{u}$. Then $\lambda(t) = \lambda(u)$.
\end{lemma}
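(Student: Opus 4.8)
The plan is to induct on the structure of a (without loss of generality normal) $\eqderives$ derivation $\pi$ of $T; E \eqderives \equals{t}{u}$, with a case analysis on its last rule. Two facts about substitutions drive every case: $\lambda$ is a homomorphism on the term algebra, so $\lambda(\func(t_{1},\ldots,t_{r})) = \func(\lambda(t_{1}),\ldots,\lambda(t_{r}))$, and it fixes every name (only variables lie in its domain); moreover, terms live in a free algebra, so two terms are equal exactly when they share a root symbol and have equal corresponding arguments.

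Most rules are immediate. For $\rnax$ the conclusion $\equals{t}{u}$ lies in $E$, so the hypothesis on $\lambda$ gives $\lambda(t) = \lambda(u)$ directly; $\rneq$ yields $\equals{t}{t}$, which is trivial; $\rnsymm$ flips the equation supplied by the induction hypothesis; $\rntrans$ chains the premise equations by transitivity of $=$. For $\rncons$ I push $\lambda$ through the common constructor and combine the two premise equations, while for $\rnproj$ the induction hypothesis gives $\func(\lambda(t_{1}),\ldots) = \func(\lambda(u_{1}),\ldots)$, and freeness lets me read off $\lambda(t_{i}) = \lambda(u_{i})$ componentwise. The rules $\rnconji, \rnconje, \rnexintro, \rnexelim, \rnsays$ cannot occur in an $\eqderives$ proof, so they need no treatment.

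The delicate point, and what I expect to be the main obstacle, is $\rnprom$: its conclusion $\equals{t}{n}$ is obtained from a \emph{list-membership} premise $t \listmemb [n]$ rather than from an equality, so the bare statement of the lemma furnishes no induction hypothesis for it. I will therefore strengthen the claim to cover every atomic conclusion simultaneously, carrying alongside ``$\equals{t}{u}$ implies $\lambda(t)=\lambda(u)$'' the clause ``$t \listmemb [n_{1},\ldots,n_{k}]$ implies $\lambda(t) = n_{j}$ for some $j$'' (recall the list entries are names, hence fixed by $\lambda$). With this stronger hypothesis the list rules close cleanly: $\rnlwk$ and $\rnsubst$ reduce membership to an equality already controlled by the induction hypothesis; $\rnlint$ works because $\lambda(t)$ is forced to equal a single name lying in each $l_{i}$, hence in their intersection; and $\rnprom$ then reads off $\lambda(t) = n = \lambda(n)$ from its now-satisfied premise.

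The one residual subtlety is the base case $\rnax$ applied to a list membership drawn directly from $E$: there I would need $\lambda$ to satisfy the list memberships of $E$ as well. This is automatic in the setting where the lemma is actually used — the surrounding section restricts attention to equality assertions, so $E$ contributes no list membership via $\rnax$, and any membership appearing in $\pi$ is itself derived (ultimately by $\rnlwk$) from equalities that the induction hypothesis already governs; more broadly it holds whenever $\lambda$ is the grounding substitution $\bigsub$, which respects all assertions of the run. I would record exactly this as the hypothesis needed and finish by specializing the strengthened statement to the equality conclusion $\equals{t}{u}$.
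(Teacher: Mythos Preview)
Your proposal is correct and follows the same inductive strategy as the paper. The paper's own proof is in fact less thorough than yours: it handles only the cases $\rnax$, $\rneq$, $\rntrans$, $\rncons$, and $\rnproj$, silently omitting $\rnsymm$ (trivial) and, more significantly, $\rnprom$ and the list rules. Your strengthening of the induction hypothesis to carry the clause ``$t \listmemb [n_{1},\ldots,n_{k}]$ implies $\lambda(t) = n_{j}$ for some $j$'' is exactly what is needed to close the $\rnprom$ case rigorously, and your diagnosis of the residual $\rnax$-for-list-membership issue is accurate: in the paper's setting $E$ is restricted to equality assertions (this is stated just before the lemma), so no list membership enters via $\rnax$, and every list membership in $\pi$ is ultimately grounded in an equality through $\rnlwk$. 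The paper's omission is thus harmless in context, but your argument makes the dependency explicit rather than leaving it implicit.
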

\begin{proof}
	Suppose $T; E \derives \equals{t}{u}$ via a proof $\pi$ with last rule $\rnrule$. The proof is by induction on the structure of $\pi$. The following cases arise. 
	\begin{itemize}
		\item $\rnrule = \rnax$: In this case, $\equals{t}{u} \in E$, so by assumption, $\lambda(t) = \lambda(u)$.
		\item $\rnrule = \rneq$: In this case $t = u$, so $\lambda(t) = \lambda(u)$ as well.
		\item $\rnrule = \rntrans$: Suppose $\equals{t_{0}}{t_{1}}, \ldots, \equals{t_{n-1}}{t_{n}}$ are the premises of $\rnrule$, with $t = t_{0}$ and $u = t_{n}$. By IH, $\lambda(t_{i-1}) =  \lambda(t_{i})$ for all $i \leq n$. It follows that $\lambda(t) = \lambda(u)$.
		\item $\rnrule = \rncons$: Let $t = \func(t_{1}, \ldots, t_{n})$ and $u = \func(u_{1}, \ldots, u_{n})$ and let $\equals{t_{1}}{u_{1}}, \ldots, \equals{t_{n}}{u_{n}}$ be the premises of $\rnrule$. By IH, $\lambda(t_{i}) = \lambda(u_{i})$ for all $i \leq n$. Thus we have the following:
		\begin{tabbing}
		    \hspace{0.25cm}\=
            $\lambda(t)$ \=$= \lambda(\func(t_{1}, \ldots, t_{n})) = \func(\lambda(t_{1}), \ldots, \lambda(t_{n}))$ \\ \>\>
			$=\func(\lambda(u_{1}), \ldots, \lambda(u_{n})) = \lambda(\func(t_{1}, \ldots, t_{n})) = \lambda(u)$.
		\end{tabbing}
		\item $\rnrule = \rnproj$: Let $\equals{\func(t_{1}, \ldots, t_{n})}{\func(u_{1}, \ldots, u_{n})}$ be the premise of the last rule with $t = t_{i}$ and $u = u_{i}$ respectively. By IH, $\lambda(\func(t_{1}, \ldots, t_{n})) = \lambda(\func(u_{1}, u_{n}))$. So, $\lambda(t) = \lambda(u)$. 
	\end{itemize}
\end{proof}

\begin{lemma}\label{lem:prooftotruth}
	For any $i \in \{1, \ldots, n\}$, 
	\begin{enumerate}[leftmargin=*]
		\item if $\equals{t}{u} \in E_{i} \cup F_{i}$, then $\bigsub(t) = \bigsub(u)$.
		\item if $\intsub(T_{i-1}; E_{i-1}) \eqderives \intsub\iwsub_{i}(\equals{t}{u})$, then $\bigsub(t) = \bigsub(u)$.
		\item if $\intsub(U_{i}; F_{i}) \eqderives \intsub\hwsub_{i}(\equals{t}{u})$, then $\bigsub(t) = \bigsub(u)$.
	\end{enumerate}
\end{lemma}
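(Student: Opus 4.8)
The plan is to prove all three statements simultaneously by induction on $i$, using Lemma~\ref{lem:equndertheta} as the main engine and the identity $\bigsub(\lambda(w)) = \bigsub(w)$ (valid for any term $w$ and any component $\lambda \in \{\intsub, \iwsub_j, \hwsub_j\}$, since it holds on variables and substitutions act homomorphically) as the glue that lets us strip off the leading substitutions. The three parts are mutually dependent, so the delicate point is to sequence them along a well-founded order; I would establish them within each stage $i$ in the order (2), (1), (3).

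For part (2) I would apply Lemma~\ref{lem:equndertheta} with $\lambda = \bigsub$ to the derivation $\intsub(T_{i-1};E_{i-1}) \eqderives \intsub\iwsub_{i}(\equals{t}{u})$, whose left-hand assertion set is $\intsub(E_{i-1})$. Its precondition asks that $\bigsub(r) = \bigsub(s)$ for every $\equals{r}{s} \in \intsub(E_{i-1})$; writing such an equality as $\intsub(\equals{r_{0}}{s_{0}})$ with $\equals{r_{0}}{s_{0}} \in E_{i-1}$, this reduces via $\bigsub\circ\intsub = \bigsub$ to $\bigsub(r_{0}) = \bigsub(s_{0})$, which is exactly part (1) at stage $i-1$ (available by induction). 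The lemma then yields $\bigsub(\intsub\iwsub_{i}(t)) = \bigsub(\intsub\iwsub_{i}(u))$, and stripping the leading $\intsub\iwsub_{i}$ with the glue identity gives $\bigsub(t) = \bigsub(u)$. Part (3) is identical, with $F_{i}$ in place of $E_{i-1}$ and $\hwsub_{i}$ in place of $\iwsub_{i}$; here its precondition reduces to part (1) at stage $i$ (not $i-1$), which is why (1) must be proved before (3).

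For part (1), I would trace each equality in $E_{i} \cup F_{i}$ back to its origin. Since $\knowfunc_{0}(I)$ and $\knowfunc_{0}(u_{i})$ carry no assertions, every assertion in the intruder's (resp.\ honest agent's) state is an atom of some transmitted $\alpha_{j}$ (resp.\ received $\beta_{j}$) with $j \le i$; and because the kernel records only maximal atomic subformulas, an equality in $E_{i}\cup F_{i}$ is an equality-atom lying under no $\says$. Such an atom is a hereditary atom that necessarily surfaces in the $\{\rnax,\rnconji,\rnexintro,\rnsays\}$-decomposition supplied by Theorem~\ref{thm:eqs-to-gamma}, hence belongs to the corresponding $Y_{j}$ (for $\alpha_{j}$) or $X_{j}$ (for $\beta_{j}$). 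Consequently $\intsub(U_{j};F_{j}) \eqderives \intsub\hwsub_{j}(\equals{t}{u})$ (resp.\ $\intsub(T_{j-1};E_{j-1}) \eqderives \intsub\iwsub_{j}(\equals{t}{u})$) holds, and part (3) at $j$ (resp.\ part (2) at $j$) gives $\bigsub(t) = \bigsub(u)$. The equalities drawn from $E_{i}$ come from $\alpha_{j}$ with $j < i$, so they use part (3) at strictly earlier stages; the equalities from $F_{i}$ come from $\beta_{j}$ with $j \le i$ and use part (2), including possibly part (2) at the current stage $i$, already established in the preceding step.

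I expect the main obstacle to be exactly this bookkeeping of the mutual recursion: verifying that the dependency graph among the three statements (part (2) at $i$ needs part (1) at $i-1$; part (1) at $i$ needs part (2) at $i$ and parts (2),(3) at earlier stages; part (3) at $i$ needs part (1) at $i$) is acyclic under the stated ordering. A secondary care point is the structural claim that every kernel equality is genuinely a hereditary atom appearing in the Theorem~\ref{thm:eqs-to-gamma} decomposition, i.e.\ that equalities occurring under a $\says$ are never in the kernel and so never need to be handled here. Once these are pinned down, parts (2) and (3) are essentially one invocation of Lemma~\ref{lem:equndertheta} apiece.
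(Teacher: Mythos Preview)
Your proposal is correct and follows essentially the same approach as the paper: simultaneous induction on $i$, with Lemma~\ref{lem:equndertheta} as the engine and the identity $\bigsub\circ\lambda = \bigsub$ to peel off the outer substitutions. The one cosmetic difference is the order within each stage: the paper proves (1), (2), (3) and, when handling the $F_i$ case with $j=i$ inside (1), inlines the argument that is really just (2) at stage $i$; your ordering (2), (1), (3) avoids that duplication and makes the dependency structure cleaner. You are also more explicit than the paper about why a kernel equality from $\alpha_j$ or $\beta_j$ must appear among the atoms $Y_j$ or $X_j$ produced by Theorem~\ref{thm:eqs-to-gamma}---the paper simply asserts the resulting $\eqderives$ fact without comment, so your identifying this as the ``secondary care point'' is apt rather than a gap.
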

\begin{proof}
	In addition to $E_{i}, F_{i}$ for $0 < i \leq n$, we also use $E_{0} = \emptyset$, for which claim 1 is vacuously true. We prove the claims simultaneously by induction on $i > 0$. Assume that they hold for all $j < i$ via IH1, IH2, and IH3. 

	\begin{enumerate}
		\item Suppose $\equals{t}{u} \in E_{i}$. Then, $\exists{}j < i: \equals{t}{u} \in \SF(\alpha_{j})$, and $\intsub(U_{j};F_{j}) \eqderives \intsub\hwsub_{j}(\equals{t}{u})$. By IH3, $\bigsub(t) = \bigsub(u)$. If $\equals{t}{u} \in F_{i}$, then $\exists{}j \leq i: \equals{t}{u} \in \inteqs_{j}$, and $\intsub(T_{j-1};E_{j-1}) \eqderives \intsub\iwsub_{j}(\equals{t}{u})$. If $j < i$, by IH2, $\bigsub(t) = \bigsub(u)$. If $j = i$, by IH1, $\bigsub({r}) = \bigsub({s})$ for every $\equals{{r}}{{s}} \in E_{i-1}$. Any $\equals{a}{b} \in \intsub(E_{i-1})$ is of the form $\intsub(\equals{r}{s})$ for some $\equals{r}{s} \in E_{i-1}$. Thus, $\bigsub(a) = \bigsub(\intsub(r)) = \bigsub(r) = \bigsub(s) = \bigsub(\intsub(s)) = \bigsub(b)$. By Lemma~\ref{lem:equndertheta}, $\bigsub(\intsub\iwsub_{j}(t)) = \bigsub(\intsub\iwsub_{j}(u))$, i.e. $\bigsub(t) = \bigsub(u)$. 
		
        \item Suppose $\intsub(T_{i-1}); \intsub(E_{i-1}) \eqderives \intsub\iwsub_{i}(\equals{t}{u}$. As above, for each $\equals{a}{b} \in \intsub(E_{i-1})$, $\bigsub(a) = \bigsub(\wh{s})$. By appealing to Lemma~\ref{lem:equndertheta}, we get $\bigsub(\intsub\iwsub_{i}(t)) = \bigsub(\intsub\iwsub_{i}(u))$, i.e. $\bigsub(t) = \bigsub(u)$. 
        
        \item The proof is similar to the above. \qedhere
	\end{enumerate}
\end{proof}

We developed this preliminary setup for both honest agent derivations as well as intruder derivations in order to demonstrate the interplay between $\hwsub$ and $\iwsub$, as evidenced in the definition of $\bigsub$. However, the insecurity problem itself is concerned only with intruder derivability, and therefore, in the next few sections we will focus only on $\beta_{i}, (T_{i}; E_{i})$, and $\iwsub_{i}$. We will discuss honest agent derivations later.

\subsection{Typed proofs for \texorpdfstring{$\DYderives$}{Dolev-Yao derivability} and \texorpdfstring{$\eqderives$}{derivability of equalities}}

In order to obtain ``small'' versions of the various substitutions $\intsub, \hwsub_{i},$ and $\iwsub_{i}$ while preserving their interaction, we consider a universe of ``anchor terms''. These are abstract terms that appear in the protocol specification, and for which we have a bound on size. We call these anchors ``types''. We would eventually like to be able to convert any proof into one that only involves typed terms, i.e. terms that correspond to one of these types under $\bigsub$. 

\begin{definition}[Types and typed terms]\label{def:types}
We use the sets $\constst$ (consisting of the terms occurring in $\xi$ before applying any substitution) and $\stnonvars$ (the same set, but without variables) to \defemph{type} the terms appearing in any proof. \[ \constst \coloneqq \bigcup_{i \leq n}\bigl\{\bigl(\subterms(T_{i} \cup U_{i}) \cup \subterms(E_{i} \cup F_{i})\bigr)\bigr\} \quad\quad \stnonvars \coloneqq \constst \setminus \vars \]
A term $t$ is \defemph{typed} if $t \in \intsub(\stnonvars) \cup \bigsub(\constst) \cup \qvar$. 
\end{definition}

Note that we must consider $\intsub(\stnonvars)$ separately from $\bigsub(\constst)$. Consider a term of the form $(m, x) \in \stnonvars$, where $x \notin \dom(\intsub)$. $\intsub((m, x)) = (m, x)$, but this cannot be in $\bigsub(\constst)$, since $\bigsub(\constst)$ only contains ground terms. Thus, $\intsub(\stnonvars) \not\subseteq \bigsub(\constst)$.

We now define a notion of ``zappable terms'', which are terms that do not correspond to any type in $\constst$. The idea is these terms can be freely ``zapped''.\footnote{In order to motivate the key ideas behind typing, we will often use the word ``zap'' to mean replacing terms by an atomic name. However, we will formally define this zapping operation in the next subsection.}
\begin{definition}[Zappable terms]\label{def:zappable}
    A term $t$ is \defemph{zappable} if there is an $x \in \dom(\bigsub)$ such that $\bigsub(t) = \bigsub(x)$, but there is no $u \in \stnonvars$ such that $\bigsub(x) = \bigsub(u)$. We refer to such an $x$ as a \defemph{minimal variable}.
\end{definition}
Here are a couple of easy observations that relate to zappable terms. 

\begin{obs}\label{obs:tzap-iff-uzap}
\phantom{a}
\begin{itemize}
\item If a term $t$ is zappable, then $t \notin \stnonvars$.
\item If a term $t \in \bigsub(\constst)$ is not zappable, then $t \in \bigsub(\stnonvars)$. 
\item For $t, u$ s.t.\ $\bigsub(t) = \bigsub(u)$, $t$ is zappable iff $u$ is zappable. 
\end{itemize}
\end{obs}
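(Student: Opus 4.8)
Observation \ref{obs:tzap-iff-uzap} has three parts, and each follows fairly directly from unwinding Definition \ref{def:zappable} (zappable) together with Definition \ref{def:types} (types and typed terms). Let me sketch the proof of each.

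For the first part, suppose $t$ is zappable, so there is a minimal variable $x \in \dom(\bigsub)$ with $\bigsub(t) = \bigsub(x)$ and no $u \in \stnonvars$ satisfying $\bigsub(x) = \bigsub(u)$. I want to show $t \notin \stnonvars$. This is immediate by contradiction: if $t \in \stnonvars$, then $t$ itself would be a witness $u \in \stnonvars$ with $\bigsub(x) = \bigsub(t) = \bigsub(u)$, contradicting minimality of $x$.

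For the second part, suppose $t \in \bigsub(\constst)$ and $t$ is not zappable; I want $t \in \bigsub(\stnonvars)$. Since $t \in \bigsub(\constst)$, there is some $c \in \constst$ with $t = \bigsub(c)$. Recall $\constst = \stnonvars \cup (\constst \cap \vars)$, since $\stnonvars = \constst \setminus \vars$. If $c \in \stnonvars$, we are immediately done. Otherwise $c$ is a variable $x \in \constst \cap \vars \subseteq \dom(\bigsub)$, so $\bigsub(t) = \bigsub(x)$ with $\bigsub(x) = t$ (as $\bigsub$ is idempotent on ground terms, $\bigsub(t) = t$). Now $t$ is not zappable, so $x$ fails to be a minimal variable, meaning there exists $u \in \stnonvars$ with $\bigsub(x) = \bigsub(u)$; hence $t = \bigsub(x) = \bigsub(u) \in \bigsub(\stnonvars)$, as required. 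The mild subtlety I would pin down is the idempotence claim $\bigsub(t) = t$ for $t \in \bigsub(\constst)$: since $\bigsub$ is a ground substitution and $t = \bigsub(c)$ is ground, $\bigsub$ fixes it.

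For the third part, assume $\bigsub(t) = \bigsub(u)$; I show $t$ zappable implies $u$ zappable (the converse is symmetric). If $t$ is zappable, take the witnessing minimal variable $x$ with $\bigsub(t) = \bigsub(x)$ and no $\stnonvars$-term sharing its $\bigsub$-value. Then $\bigsub(u) = \bigsub(t) = \bigsub(x)$, so the very same $x$ witnesses that $u$ is zappable. The main thing to note is that zappability depends on $t$ only through its $\bigsub$-value, so the property is manifestly a property of the equivalence class of $t$ under ``equal $\bigsub$-image''; all three parts are really just this observation applied in different directions. I do not anticipate any genuine obstacle here — these are bookkeeping consequences of the definitions, and the only point requiring a moment of care is the decomposition $\constst = \stnonvars \sqcup (\constst \cap \vars)$ used in the second part, together with the idempotence of the ground substitution $\bigsub$ on already-ground terms.
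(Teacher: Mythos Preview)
Your argument is correct and is exactly the unwinding of Definition~\ref{def:zappable} that the paper intends; the paper states this as an Observation with no proof, so there is nothing further to compare. One small remark: your inclusion $\constst \cap \vars \subseteq \dom(\bigsub)$ in part two is not literally guaranteed by the definitions (a bound variable lying in some $\boundvars(X_i)$ or $\boundvars(Y_i)$ could in principle escape $\dom(\bigsub)$), but in that edge case $t = \bigsub(c) = c \in \qvar$ is a variable, and the observation is only ever applied in the paper to non-atomic terms (see Lemma~\ref{lem:projcase}), so the argument goes through where it matters.
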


\begin{lemma}\label{lem:projcase}
    Suppose $t = \func(t_{0},t_{1})$ and $u = \func(u_{0},u_{1})$ are typed, and $\bigsub(t) = \bigsub(u)$. One of the following is true:
 \begin{itemize}[leftmargin=*]
        \item $t$ and $u$ are not zappable, and $t_{0}, t_{1}, u_{0}, u_{1}$ are typed, or
        \item $t$ and $u$ are zappable, and $t = u$.  
    \end{itemize}   
\end{lemma}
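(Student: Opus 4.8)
The plan is to split on whether $t$ (equivalently $u$) is zappable, using the third bullet of Observation~\ref{obs:tzap-iff-uzap}: since $\bigsub(t) = \bigsub(u)$, $t$ is zappable iff $u$ is. This rules out any mixed case, so exactly one of the two listed alternatives can hold, and it remains to verify the stated consequence in each. Before the case split I would record one auxiliary fact that drives both cases: \emph{for every $w \in \constst$, the term $\intsub(w)$ is typed.} This follows by examining $w$: if $w \in \qvar$ then $\intsub(w) = w \in \qvar$; if $w$ is a free (intruder) variable then $w \in \dom(\intsub) = \freevars(\xi)$ and $\intsub(w) = \bigsub(w) \in \bigsub(\constst)$ (using that each $\iwsub_{i},\hwsub_{i}$ has domain inside $\qvar$ and hence fixes $w$, together with the identity $\bigsub(\intsub(w)) = \bigsub(w)$); and if $w \notin \vars$ then $w \in \stnonvars$, so $\intsub(w) \in \intsub(\stnonvars)$. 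Trivially $\bigsub(w) \in \bigsub(\constst)$ is typed as well.

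For the case where neither $t$ nor $u$ is zappable, I would argue about $t$ (the argument for $u$ is identical). Since $t$ is typed and $t = \func(t_{0},t_{1}) \notin \qvar$, either $t \in \intsub(\stnonvars)$ or $t \in \bigsub(\constst)$; in the latter sub-case, non-zappability and the second bullet of Observation~\ref{obs:tzap-iff-uzap} give $t \in \bigsub(\stnonvars)$. In both sub-cases $t = \psi(w)$ for some $\psi \in \{\intsub,\bigsub\}$ and some $w \in \stnonvars$; as $\stnonvars$ contains no variables and $t$ has head $\func$, necessarily $w = \func(w_{0},w_{1})$ with $t_{i} = \psi(w_{i})$, and by subterm-closure of $\constst$ we have $w_{0},w_{1} \in \constst$. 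The auxiliary fact then shows $t_{0},t_{1}$ are typed, and symmetrically $u_{0},u_{1}$ are typed.

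For the case where both $t$ and $u$ are zappable, I would show each equals its own $\bigsub$-image and then compare. Since $t$ is typed and $t \notin \qvar$, either $t \in \intsub(\stnonvars)$ or $t \in \bigsub(\constst)$. The first is impossible: if $t = \intsub(w)$ with $w \in \stnonvars$, then $\bigsub(t) = \bigsub(\intsub(w)) = \bigsub(w)$ exhibits a member of $\stnonvars$ with the same $\bigsub$-image as the minimal variable witnessing the zappability of $t$, contradicting Definition~\ref{def:zappable}. Hence $t = \bigsub(c)$ for some $c \in \constst$, and by idempotence of $\bigsub$ (it replaces all of its domain variables by ground terms in a single pass) we get $\bigsub(t) = \bigsub(\bigsub(c)) = \bigsub(c) = t$. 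The same holds for $u$, so from $\bigsub(t) = \bigsub(u)$ we conclude $t = \bigsub(t) = \bigsub(u) = u$.

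The main obstacle I anticipate is the bookkeeping in the ``neither zappable'' case, and specifically the auxiliary fact, where one must handle immediate subterms $w_{i}$ of a type that happen to be variables and confirm that $\intsub$ sends them to typed terms (relying on $\dom(\intsub) = \freevars(\xi)$ and on $\iwsub_{i},\hwsub_{i}$ fixing free variables). The ``both zappable'' case is comparatively clean once the exclusion of $t \in \intsub(\stnonvars)$ and the idempotence of $\bigsub$ are in hand.
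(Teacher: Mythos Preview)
Your proposal is correct and follows essentially the same route as the paper's proof. Your auxiliary fact that $\intsub(w)$ is typed for every $w \in \constst$ is exactly the paper's opening observation that $\intsub(\constst) \subseteq \intsub(\stnonvars) \cup \bigsub(\constst) \cup \qvar$; the two case analyses then match the paper's, with your ``idempotence of $\bigsub$'' phrased in the paper simply as ``$t,u \in \bigsub(\constst)$ are ground, so $t = \bigsub(t) = \bigsub(u) = u$''.
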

\begin{proof} 
    Observe that for any $a \in \intsub(\constst)$, $a \in \intsub(\stnonvars)$, or $a = \intsub(x) = x$ for some $x \notin \dom(\intsub)$ (in which case $a \in \qvar$), or $a = \intsub(x)$ for $x \in \dom(\intsub)$ (in which case $a = \bigsub(x)$ also, so $a \in \bigsub(\constst)$). Thus $\intsub(\constst) \subseteq \intsub(\stnonvars) \cup \bigsub(\constst) \cup \qvar$. 

    Now $t$ and $u$ are typed, and are non-atomic. So $t, u \notin \qvar$, and so $t, u \in \intsub(\stnonvars) \cup \bigsub(\constst)$. We consider two cases: 
    \begin{itemize}[leftmargin=*]
        \item {\bf Neither $t$ nor $u$ is zappable:} Consider $t$. If $t \in \intsub(\stnonvars)$, each $t_{i} \in \intsub(\constst) \subseteq \intsub(\stnonvars) \cup \bigsub(\constst) \cup \qvar$. If $t \in \bigsub(\constst)$, then since $t$ is not zappable, $t = \bigsub(a)$ for some $a \in \stnonvars$. Then $a$ has to be of the form $\func(a_{1},\dots,a_{k})$, with each $a_{i} \in \constst$ and $t_{i} = \bigsub(a_{i})$. Thus each $t_{i} \in \bigsub(\constst) \subseteq \intsub(\stnonvars) \cup \bigsub(\constst) \cup \qvar$. Reasoning about $u$ in a similar manner, we see that each $u_{i} \in \intsub(\stnonvars) \cup \bigsub(\constst) \cup \qvar$. So each $t_{i}$ and $u_{i}$ is typed.
        \item {\bf One of $t$ and $u$ is zappable:} Say $t$ is zappable. Then, since $\bigsub(t) = \bigsub(u)$, $u$ is zappable as well. Therefore $t, u \notin \intsub(\stnonvars)$, which implies that $t, u \in \bigsub(\constst)$. Therefore both $t$ and $u$ are ground terms, so $t = \bigsub(t) = \bigsub(u) = u$. \qedhere
    \end{itemize}
\end{proof}

We now devise notions of ``typed proofs'' for the $\DYderives$ as well as the $\eqderives$ system, which will help us obtain bounds on the sizes of terms appearing in the ranges of various substitutions. Then, we show that every proof in these systems can be converted into a typed proof.

Consider a proof $\pi$ witnessing $\intsub(T_{i}) \DYderives t$ for some $t$. Any term in $T_{i}$, since $T_{i}$ is part of a kernel, is either a bound variable outside the domain of $\intsub$ (i.e. in $\qvar$) or a public term of some assertion. Note that any variables in public terms of assertions must not be quantified, hence they fall into the domain of $\intsub$. Thus, any such $t$ derived from $\intsub(T_{i})$ is either in $\qvar$, or a ground term of the form $\intsub(v)$ for some $v$.

Now, it is possible that $\pi$ mentions some term $u \not\in \bigsub(\constst)$, even if $t \in \bigsub(\constst)$. If a destructor rule is applied to $u$ in order to obtain a proof of $t$, we cannot ``zap'' $u$ into an atomic name while still preserving derivability. This leads us to the following definition of a typed proof in the $\DYderives$ system, which preserves derivability even after zapping variables as necessary. 

\begin{definition}\label{def:welltyped}[Typed $\DYderives$ proof]
	A $\DYderives$ proof $\pi$ is \defemph{typed} if for each subproof $\pi'$, either $\pi'$ ends in a constructor rule, or $\concof(\pi') \in \intsub(\stnonvars) \cup \qvar$, where $\concof(\pi')$ denotes the conclusion derived using $\pi$. 
\end{definition}

Armed with this definition of a typed $\DYderives$ proof, we can show that any proof $\intsub(T_{i}) \DYderives t$ can be transformed into a typed normal equivalent witnessing the same. 
This transformation crucially uses the following fact about how non-typed terms are generated: any non-typed term $u$ occurring in a $\DYderives$ proof from $\intsub(T_{i})$ obeys the following:
\begin{itemize}
\item appears first as part of a received assertion $\intsub(\beta)$, and
\item is generated by the intruder by putting information together, i.e. via a normal proof ending in a constructor. 
\end{itemize} 

The intuition behind this is easy to see -- honest agents follow the protocol, and will only communicate terms that follow the protocol specification, modulo any insertions by the intruder. Terms that correspond to ones in the protocol specification are always typed, so any non-typed term must have been initially sent out by the intruder, i.e. in a $\beta$ received by an honest agent. In particular, such a term must have been constructed by the intruder by putting information together, since up till that point, the intruder's knowledge state would have only consisted of typed terms, and destructor rules would preserve ``typability''. Thus, for any non-typed term $t$ such that $t \in \subterms(\intsub(T_{i}))$, we can always ``chase back'' to an index $j < i$ at which it was \emph{not} in the subterms of $\intsub(T_{j})$, but still derivable, i.e. $\intsub(T_{j}) \DYderives t$ via a normal proof ending in a constructor rule. This reasoning closely follows the ideas in~\cite{RT03}, and is formalized below. 

\begin{obs}\label{obs:earlierbetaj}
Since agent variables are mapped to names, the only free variables in sessions are intruder variables. Thus, for any $i \leq \ell$ and any $x \in \freevars(\alpha_{i})$, there is $j < i$ s.t.\ $x \in \freevars(\beta_{j})$.
\end{obs}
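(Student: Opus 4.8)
The plan is to read the claim off the honest-agent derivability condition in Definition~\ref{def:protruns}, using as the key engine the fact that an $\assderives$ derivation cannot manufacture a free variable that is not already present on the left. Concretely, I would first record the auxiliary fact that whenever $(S;A) \assderives \delta$, every free variable of $\delta$ occurs free in $(S;A)$, i.e.\ $\freevars(\delta) \subseteq \varsof(S) \cup \freevars(A)$. This is a routine induction on the derivation: every rule carries the free variables of its conclusion into its premises or its left-hand side, and the two rules that introduce a fresh variable behave harmlessly -- $\rnexintro$ binds the witnessed variable (so the conclusion $\exists x.\alpha$ has free variables $\freevars(\alpha)\setminus\{x\}$, all present in the premise), and the fresh eigenvariable $y$ of $\rnexelim$ is, by that rule's side condition, not free in the conclusion.

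Next I would fix an index $i$ and a variable $x \in \freevars(\alpha_i)$. Since a session instantiates every agent variable by a name, $x$ must be an intruder variable, hence a genuine free variable of $\alpha_i$ drawn from $\ivar$. The run conditions give the (abstract) derivation $\knowfunc_i(u_i) \assderives \alpha_i$, so the auxiliary fact yields $x \in \freevars(\knowfunc_i(u_i))$. It then remains to compute the free variables of $\knowfunc_i(u_i)$. Unfolding the knowledge-function recurrence of Definition~\ref{def:protruns} from the ground initial state $\knowfunc_0(u_i) = (X_{u_i};\emptyset)$, the only modifications to agent $u_i$'s state strictly before stage $i$ are the updates $\knowupd(\cdot,\beta_k)$ for those $k<i$ with $u_k = u_i$; since $X_{u_i}$ is ground and $\knowupd$ adjoins only $\publics(\beta_k)$ and $\beta_k$, we obtain $\freevars(\knowfunc_i(u_i)) = \bigcup\{\freevars(\beta_k) \mid k<i,\ u_k = u_i\}$. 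Combining this with $x \in \freevars(\knowfunc_i(u_i))$ produces an index $j < i$ (in fact with $u_j = u_i$) such that $x \in \freevars(\beta_j)$, which is exactly the claim.

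The step I expect to be the crux is pinning down the \emph{strict} inequality $j < i$ rather than $j \le i$. The no-new-free-variables fact together with the groundness of the initial knowledge only forces $x$ into some $\beta_k$ that has already been incorporated into $\knowfunc_i(u_i)$; the strictness rests entirely on the timing built into Definition~\ref{def:protruns}, where agent $u_i$'s receipt of $\beta_i$ is recorded in $\knowfunc_{i+1}(u_i)$ rather than in $\knowfunc_i(u_i)$, so that $\beta_i$ itself does not contribute to $\freevars(\knowfunc_i(u_i))$ and the send $\alpha_i$ is justified purely from receipts predating $\beta_i$. I would therefore take care to read this timing explicitly off the recurrence, ensuring that $\beta_i \notin \knowfunc_i(u_i)$ is used and that the resulting $j$ genuinely satisfies $j<i$.
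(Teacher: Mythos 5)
The step you yourself flagged as the crux is exactly where the proposal breaks. You obtain the strict bound $j<i$ by reading the recurrence of Definition~\ref{def:protruns} literally, so that $u_i$'s receipt of $\beta_i$ is recorded only in $\knowfunc_{i+1}(u_i)$. But that literal indexing is an off-by-one slip in the paper, contradicted by the paper's own worked example, which computes $\knowfunc_{2}(B) = \knowupd(\knowfunc_{1}(B), \beta_{2})$ -- and must do so for the run condition $\knowfunc_{2}(B) \assderives \alpha_{2}$ to be satisfiable at all, since $\alpha_{2} = \equals{\{y\}_{x}}{\{y\}_{x}}$ can only be derived using $x,y$, which enter $B$'s knowledge only via $\beta_{2}$. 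Under the intended timing ($\beta_i \in \knowfunc_i(u_i)$), your argument yields only $j \le i$. And $j \le i$ is in fact the correct statement: the responder session $B:\recsend{\beta_{2}}{\alpha_{2}}$ is a counterexample to strictness, because $x,y \in \freevars(\alpha_{2})$ occur in no receive other than the paired $\beta_{2}$ of the same step. The ``$j<i$'' in the observation is a slip for ``$j\le i$''; note that where the paper actually invokes the observation (proof of Lemma~\ref{lem:sigmax-in-sigmaTp}) it concludes only ``there is a $k \leq j$''. So a proof that genuinely delivers the strict inequality should be a red flag: yours does so only by leaning on a definitional artifact under which the paper's own example run would not even be a run.

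Beyond this, your route is also much heavier and narrower than the paper's. The paper's justification is the one-liner embedded in the statement: in a session all agent variables are instantiated by names, so any $x\in\freevars(\alpha_i)$ is an intruder variable, and an intruder variable is \emph{by definition} one whose first occurrence in its role is in some $\beta_j$ rather than an $\alpha_i$; since each receive precedes its paired send, that first occurrence lies in a $\beta_j$ with $j\le i$. This is a purely syntactic fact about roles and sessions, needing no knowledge functions, no derivability conditions, and no induction over $\assderives$. Your auxiliary lemma (that $\assderives$ and $\DYderives$ derivations introduce no new free variables, with the $\rnexelim$ eigenvariable condition handling freshness) is true and its proof sketch is fine, but routing through it means you establish the property only for sessions occurring inside a run satisfying Definition~\ref{def:protruns}, whereas the observation is a property of arbitrary sessions. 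That restriction happens to suffice for the paper's downstream use, but it is a genuine weakening of the stated claim, purchased at the cost of machinery the statement does not need.
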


We define $\intterms_{i} \coloneqq \publics(\beta_{i})$ and $\honterms_{i} \coloneqq \publics(\alpha_{i})$.~\footnote{These stand for intruder terms and honest agent terms respectively.}

\begin{restatable}{lemma}{sigmaxinsigmaTp}\label{lem:sigmax-in-sigmaTp}
    Suppose $t \notin \intsub(\stnonvars) \cup \qvar$. For any $i \leq n$, if $t \in \subterms(\intsub(T_{i}))$, then there is a $k < i$ such that $t \in \subterms(\intsub(\intterms_{k}))$.
\end{restatable}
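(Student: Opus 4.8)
The plan is to prove the statement \emph{directly}, without induction, by tracing the subterm $t$ back to the step at which the intruder injected it. The first step is to unfold how $T_{i}$ is assembled. Writing $\knowfunc_{i}(I) = (S_{i};\Phi_{i})$, the kernel gives $T_{i} = S_{i}\cup\boundvars(\Phi_{i})$, where $S_{i} = X_{I}\cup\bigcup_{j}\honterms_{j}$ collects the initial knowledge $X_{I}$ together with the public terms $\honterms_{j}=\publics(\alpha_{j})$ of the assertions sent so far (for indices $j\le i$), and $\boundvars(\Phi_{i})\subseteq\qvar$ are the eigenvariables witnessing existentials. Since $\intsub$ is the identity on $\qvar$, we get $\subterms(\intsub(T_{i})) = \subterms(\intsub(X_{I}))\cup\bigcup_{j}\subterms(\intsub(\honterms_{j}))\cup\subterms(\boundvars(\Phi_{i}))$.

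Next I would rule out every contribution except the $\honterms_{j}$ part. The set $\subterms(\boundvars(\Phi_{i}))$ lies entirely in $\qvar$, so the hypothesis $t\notin\qvar$ excludes it. For $X_{I}$: as $X_{I}\subseteq T_{i}\subseteq\constst$ and $X_{I}$ is ground, all its subterms are non-variable elements of $\constst$, hence of $\stnonvars$, and are fixed by $\intsub$; thus $\subterms(\intsub(X_{I}))\subseteq\intsub(\stnonvars)$, which the hypothesis $t\notin\intsub(\stnonvars)$ excludes. Therefore $t\in\subterms(\intsub(s))$ for some $s\in\honterms_{j}$ with $j\le i$.

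The heart of the argument is the case split on $\subterms(\intsub(s))$. Every such subterm is either $\intsub(s')$ for a non-variable structural subterm $s'$ of $s$, or a subterm of $\intsub(x)$ for some variable $x\in\varsof(s)$. In the first case $s'\in\subterms(\honterms_{j})\subseteq\subterms(T_{i})\subseteq\constst$ is non-variable, so $s'\in\stnonvars$ and $t=\intsub(s')\in\intsub(\stnonvars)$, contradicting the hypothesis. Hence we are in the second case: $t\in\subterms(\intsub(x))$ for some $x\in\varsof(s)$. Because $s$ is a public term of $\alpha_{j}$ it avoids $\qvar$, so $x$ is an intruder variable and $x\in\freevars(\alpha_{j})$. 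I expect this case split to be the one step needing genuine care, since it is precisely where the ``either typed, or injected by the intruder'' dichotomy of the typing argument is made formal.

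To finish, I would invoke Observation~\ref{obs:earlierbetaj} to obtain $j'<j\le i$ with $x\in\freevars(\beta_{j'})$. An intruder variable, being itself a $\qvar$-free subterm of $\beta_{j'}$, is contained in some maximal $\qvar$-free subterm, i.e.\ in a public term $u\in\publics(\beta_{j'})=\intterms_{j'}$. Since $x\in\subterms(u)$ we have $\subterms(\intsub(x))\subseteq\subterms(\intsub(u))\subseteq\subterms(\intsub(\intterms_{j'}))$, so $t\in\subterms(\intsub(\intterms_{j'}))$. Taking $k=j'<i$ completes the proof. Note that the off-by-one in which $\honterms_{j}$ actually occur in $T_{i}$ (whether $j<i$ or $j\le i$) is harmless: Observation~\ref{obs:earlierbetaj} always returns a strictly smaller index $j'<j\le i$.
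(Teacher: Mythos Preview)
Your proof is correct and follows essentially the same approach as the paper's own proof: trace $t$ to a variable $x$ in some $\honterms_{j}$, then use Observation~\ref{obs:earlierbetaj} to push $x$ back to an earlier $\intterms_{k}$. You are simply more explicit than the paper about ruling out the $X_{I}$ and $\boundvars(\Phi_{i})$ contributions and about the case split on $\subterms(\intsub(s))$, whereas the paper compresses these into the single line ``then $t \in \subterms(\intsub(y))$ for some $y \in \varsof(u)$.''
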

\begin{proof}
    Consider $t \in \subterms(\intsub(u)) \setminus (\intsub(\stnonvars) \cup \qvar)$ for some $u \in T_{i}$. Then, $t \in \subterms(\intsub(y))$ for some $y \in \varsof(u)$. Since $u \in T_{i}$, there is a $j < i$ such that $u \in \honterms_{j} \cup \qvar$. If $u \in \qvar$, then $u = y = \sigma(y)$ and $t = y$, but we know that $t \not\in \qvar$. Thus $u \not\in \qvar$ and $u \in \honterms_{j}$, i.e. $y \in \varsof(\honterms_{j})$. Now $\xi$ is an interleaving of sessions of $\prot$, and $y \in \varsof(u)$ where $u$ occurs in an honest agent send in a session. Thus by Observation~\ref{obs:earlierbetaj}, there is an earlier intruder send in the same session in which $y$ occurs. This send occurs before $\alpha_{j}$ in $\xi$. Thus there is a $k \leq j$ such that $y \in \varsof(\publics(\beta_{k})) = \varsof(\intterms_{k})$. Thus, $t \in \subterms(\intsub(\intterms_{k}))$.
\end{proof}

\begin{restatable}{lemma}{earlierproof}\label{lem:earlier-proof}
    Suppose $i \leq n$, $t \notin \intsub(\stnonvars) \cup \qvar$ and $\intsub(T_{i}) \DYderives t$ via a normal proof $\pi$ ending in a destructor rule. Then there is an $\ell < i$ such that $\intsub(T_{\ell}) \DYderives t$.
\end{restatable}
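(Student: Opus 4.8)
The plan is to reduce the claim to a ``digging'' argument. Since $\pi$ is normal and ends in a destructor rule, the subterm property gives $t \in \subterms(\intsub(T_{i}))$, and since $t \notin \intsub(\stnonvars) \cup \qvar$, Lemma~\ref{lem:sigmax-in-sigmaTp} yields a $k < i$ (with $k \ge 1$) such that $t \in \subterms(\intsub(\intterms_{k}))$; that is, $t$ is a subterm of $\intsub(p)$ for some public term $p \in \publics(\beta_{k}) = \intterms_{k}$ of the $k$-th received assertion. The whole point of the non-typedness hypothesis is precisely to guarantee that such a $k$ exists, so that $t$ can be ``traced'' back to a term the intruder had to send in at a strictly earlier step.

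Next I would use the run conditions to turn this into a term derivation at step $k-1$. By Definition~\ref{def:protruns}, $\intsub(\knowfunc_{k-1}(I)) \assderives \intsub(\beta_{k})$; since kernels preserve derivability and $\intsub$ commutes with $\dc$, this gives $(\intsub(T_{k-1}); \intsub(E_{k-1})) \assderives \intsub(\beta_{k})$. Noting that $\publics(\intsub(\beta_{k})) = \intsub(\publics(\beta_{k}))$ (applying $\intsub$ preserves maximal subterms avoiding $\qvar$, since $\intsub$ fixes $\qvar$ and sends intruder variables to ground terms), Lemma~\ref{lem:dcpure} then yields $\intsub(T_{k-1}) \DYderives \intsub(p)$ for the very $p$ with $t \in \subterms(\intsub(p))$.

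It then remains to extract a derivation of the subterm $t$ itself from a derivation of $\intsub(p)$. I would establish this as a standalone ``digging lemma'': whenever $\intsub(T_{j}) \DYderives w$ via a normal proof and $t \in \subterms(w)$ is non-typed, then $\intsub(T_{j}) \DYderives t$. Its proof is a lexicographic induction, primary on the index $j$ and secondary on the size of the normal proof $\rho$ of $w$. If $\rho$ ends in a constructor rule, then either $t = w$ (so $\rho$ already derives $t$) or $t$ lies in one immediate premise, and the smaller-proof induction hypothesis applies at the same index $j$. If $\rho$ ends in a destructor rule, the subterm property forces $w \in \subterms(\intsub(T_{j}))$, hence $t \in \subterms(\intsub(T_{j}))$; as $t$ is non-typed, Lemma~\ref{lem:sigmax-in-sigmaTp} supplies a fresh $k' < j$ and a public term $p'$ with $t \in \subterms(\intsub(p'))$ and (as above) $\intsub(T_{k'-1}) \DYderives \intsub(p')$, and the index induction hypothesis at $k'-1 < j$ finishes this case. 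Applying this digging lemma at index $k-1$ (normalizing the derivation of $\intsub(p)$ if necessary) gives $\intsub(T_{k-1}) \DYderives t$, so $\ell = k-1 < i$ works.

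I expect the destructor case of the digging lemma to be the main obstacle, since structural descent into $\rho$ gets stuck whenever a subproof ends in a destructor: its conclusion is merely a subterm of the current knowledge and offers no constructor to recurse through. The non-typedness of $t$ is exactly what breaks the impasse — it lets us re-invoke Lemma~\ref{lem:sigmax-in-sigmaTp} to jump back to a strictly earlier knowledge state $\intsub(T_{k'-1})$, giving a genuine well-founded descent on the step index rather than on proof structure. Care is needed to confirm that the descent bottoms out (here $k, k' \ge 1$, and the $j = 0$ destructor case is vacuous, because by Lemma~\ref{lem:sigmax-in-sigmaTp} no non-typed term can be a subterm of $\intsub(T_{0})$) and that the two interleaved inductions are correctly ordered.
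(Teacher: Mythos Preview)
Your argument is correct and rests on the same ideas as the paper's proof: use the subterm property plus Lemma~\ref{lem:sigmax-in-sigmaTp} to locate $t$ inside some $\intsub(\intterms_{k})$, invoke Lemma~\ref{lem:dcpure} to get $\intsub(T_{k-1}) \DYderives \intsub(p)$, and then extract a derivation of $t$ itself from that proof.

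The one genuine difference is in how the extraction step is organised. You set up an explicit lexicographic induction (your ``digging lemma'') on the pair $(j,\text{proof size})$, recursing down constructor rules at the same index and jumping to a strictly smaller index when you hit a destructor. The paper instead replaces both inductions by minimality choices: it picks the \emph{earliest} index $j$ with $t \in \subterms(\intsub(\intterms_{j}))$, and then, inside the normal proof $\rho$ of $\intsub(T_{j-1}) \vdash \intsub(a)$, the \emph{minimal} subproof $\chi$ whose conclusion contains $t$. Earliestness of $j$ rules out $\chi$ ending in a destructor (that would force $t \in \subterms(\intsub(T_{j-1}))$ and hence, via Lemma~\ref{lem:sigmax-in-sigmaTp}, an earlier index), and minimality of $\chi$ forces $\concof(\chi) = t$. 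This gives the result in one shot with no induction to set up. Your approach is equally valid and perhaps more reusable as a stated lemma; the paper's is shorter and avoids the bookkeeping of a two-parameter induction.
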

\begin{proof}
  	Since $\pi$ ends in a destructor rule, $t \in \subterms(\intsub(T_{i}))$. By Lemma~\ref{lem:sigmax-in-sigmaTp}, there is an $i' < i$ such that $t \in \subterms(\intsub(\intterms_{i'}))$. 
    Let $j$ be the earliest such index, and let $a \in \intterms_{j}$ such that $t \in \subterms(\intsub(a))$. Since $\intsub(T_{j-1};E_{j-1}) \assderives \intsub\iwsub_{j}(\beta_{j})$, and $a \in \intterms_{j} = \publics(\beta_{j})$, it follows by Lemma~\ref{lem:dcpure} that $\intsub(T_{j-1}) \DYderives \intsub\iwsub_{j}(a)$. But $\varsof(a) \cap \dom(\iwsub_{j}) = \emptyset$, so $\intsub(T_{j-1}) \DYderives \intsub(a)$, via a normal proof $\rho$. Consider a minimal subproof $\chi$ of $\rho$ such that $t \in \subterms(\concof(\chi))$. (There is at least one such subproof, namely $\rho$.) If $\chi$ ends in a destructor, then $\concof(\chi) \in \subterms(\intsub(T_{j-1}))$, and hence $t \in \subterms(\intsub(T_{j-1}))$. But by Lemma~\ref{lem:sigmax-in-sigmaTp}, there must be a $k < j-1$ such that $t \in \subterms(\intsub(\intterms_{k}))$, contradicting the fact that $j$ is the earliest such index. So $\chi$ ends in a constructor rule. If $t \neq \concof(\chi)$, then $t \in \subterms(\concof(\chi'))$, for some proper subproof of $\chi$. But this cannot be, since $\chi$ is a minimal proof with this property. Thus, $t = \concof(\chi)$ and $\chi$ is a proof of $\intsub(T_{j-1}) \vdash t$ (and we choose our $\ell$ to be $j-1$).           
\end{proof}

\begin{theorem}\label{thm:rustur}
	For all $t$ and all $i \in \{0,\ldots,n\}$, if $\intsub(T_{i}) \DYderives t$, then there is a typed normal proof $\pi^{*}$ of the same.
\end{theorem}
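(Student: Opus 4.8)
The plan is to prove the statement by a double induction: an outer induction on the index $i$, and, for each fixed $i$, an inner induction on the size (number of rule applications) of a proof. First I would invoke the normalization property to replace the given derivation of $\intsub(T_i) \DYderives t$ by a \emph{normal} proof $\pi$, so that no constructor rule ever feeds the major premise of a destructor. The goal is then to transform $\pi$ into a typed normal proof in the sense of Definition~\ref{def:welltyped}, i.e.\ one in which every subproof either ends in a constructor or has its conclusion in $\intsub(\stnonvars) \cup \qvar$. I would phrase the inner induction so that it produces, for any normal proof of $\intsub(T_i) \vdash s$, a typed normal proof of the same sequent.

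The easy cases are when $\pi$ ends in a constructor: here I would simply apply the inner induction hypothesis to each premise subproof and reattach the constructor. The resulting proof ends in a constructor (so its top-level subproof is automatically typed), the retyped premises are typed, and normality is preserved because a constructor feeding another constructor is harmless. The base case $i = 0$ needs no separate argument: by Lemma~\ref{lem:sigmax-in-sigmaTp} no term outside $\intsub(\stnonvars) \cup \qvar$ can lie in $\subterms(\intsub(T_0))$, and the subterm property forces the conclusion of any subproof ending in a destructor into $\subterms(\intsub(T_0))$, so the problematic case below never arises when $i = 0$.

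The substantive case is when $\pi$ ends in a destructor with conclusion $t$. If $t \notin \intsub(\stnonvars) \cup \qvar$, then $\pi$ is a normal proof ending in a destructor whose conclusion is ``bad'', so Lemma~\ref{lem:earlier-proof} applies and yields some $\ell < i$ with $\intsub(T_\ell) \DYderives t$; the outer induction hypothesis then gives a typed normal proof from $\intsub(T_\ell)$, which (since $T_\ell \subseteq T_i$, and both typedness and normality are independent of the axiom set) is also a typed normal proof from $\intsub(T_i)$, and I simply return it. If instead $t \in \intsub(\stnonvars) \cup \qvar$, I would apply the inner induction hypothesis to the major-premise subproof (deriving $m = \func(t_1,\ldots,t_n)$) to obtain a typed normal proof $\rho^*$ of $m$.

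The main obstacle is precisely this last situation, because $\rho^*$ may end in a constructor even though normality forbids a constructor in the major-premise position. I would resolve it by distinguishing two subcases. If $\rho^*$ ends in a destructor or an axiom, then reattaching the destructor with $\rho^*$ as its (typed) major premise and the inner-induction-retyped minor premises keeps the proof normal, and its conclusion $t$ is typed by assumption, so the whole proof is typed. If $\rho^*$ ends in a constructor building $m = \func(t_1,\ldots,t_n)$, then the destructor merely extracts an immediate subterm $t_i = t$ that this constructor already produced from one of its premise subproofs; I would bypass the destructor entirely and return that premise subproof of $\rho^*$, which derives $t$ and, being a subproof of the typed normal $\rho^*$, is itself typed and normal. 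This constructor-bypass is the normalization step that removes the freshly-created constructor/destructor cut, and it is exactly here that the structural insight behind Lemma~\ref{lem:earlier-proof} --- that non-typed terms are always assembled by the intruder via constructors --- does the real work. The inner induction terminates because each subcase either recurses on strictly smaller subproofs or defers to the outer induction on $i$.
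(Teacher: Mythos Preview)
Your proposal is correct and follows essentially the same approach as the paper: a double induction on $i$ and on proof structure, using Lemma~\ref{lem:earlier-proof} to push ``bad'' destructor conclusions back to an earlier index and the constructor-bypass to eliminate any constructor/destructor cut created by the inner recursion. The only cosmetic differences are that you normalize $\pi$ upfront and test whether $t \in \intsub(\stnonvars)\cup\qvar$ before recursing on the major premise, whereas the paper transforms arbitrary proofs directly and recurses first; both orderings work, and your version is arguably a bit cleaner (just be sure to treat $\rnax$ as a separate base case of the destructor analysis, since it has no major-premise subproof to recurse on).
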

\begin{proof}
    Assume the theorem holds for all $j < i$. We show how to transform any proof $\pi$ of $\intsub(T_{i}) \vdash t$ ending in rule $\rnrule$ into a typed normal proof $\pi^{*}$ of the same by induction on the structure of $\pi$. 
    \begin{itemize}[leftmargin=*]
        \item {$r$ is $\textsf{ax}$:} $t \in \intsub(T_{i}) \subseteq \intsub(\constst)$. 
        If $t \in \intsub(\stnonvars) \cup \qvar$, we take $\pi^{*}$ to be $\pi$ itself. Otherwise, there is a $j < i$ such that $\intsub(T_{j}) \DYderives t$. We can get a typed normal proof $\pi^{*}$ of $\intsub(T_{j}) \vdash t$ and obtain the required result by weakening the LHS.

        \item {$r$ is a constructor:} We can find typed normal equivalents for all immediate subproofs, and apply the same constructor rule to get the desired $\pi^{*}$. 
        
        \item {$r$ is a destructor:} Let $\pi_{1}$ and $\pi_{2}$ be immediate subproofs of $\pi$, with $\concof(\pi_{1}) = s$, and $t$ an immediate subterm of $s$. We can find typed normal equivalents $\pi^{*}_{1}$ and $\pi^{*}_{2}$. If $\pi^{*}_{1}$ ends in a constructor, then we choose $\pi^{*}$ to be the immediate subproof of $\pi^{*}_{1}$ s.t.\ $\concof(\pi^{*}) = t$. 
        
        If $\pi^{*}_{1}$ does not end in a constructor, $s \in \intsub(\stnonvars) \cup \qvar$. Since a destructor rule $\rnrule$ was applied on $s$, $s \notin \qvar$. So $s \in \intsub(\stnonvars)$, and hence $t \in \intsub(\constst)$. If $t \in \intsub(\stnonvars) \cup \qvar$, we obtain a typed normal $\pi^{*}$ by applying $\rnrule$ on $\pi^{*}_{1}$. Otherwise, as with $\rnax$, we get a typed and normal proof $\pi^{*}$ of $\intsub(T_{j}) \vdash t$ for some $j < i$ and apply weakening. \qedhere 
    \end{itemize}
\end{proof}

Having shown that we can always obtain a typed $\DYderives$ proof, we now consider $\eqderives$. We present below an example which will motivate our choices for the definition of a \emph{typed $\eqderives$ proof}. 

Suppose $\intsub(x) = (t_{1}, t_{2})$ for some minimal $x$, and $\intsub(u) = (u_{1}, u_{2})$ for some term $u$. Suppose we also have a proof of $\equals{t_{1}}{u_{1}}$ obtained by applying $\rnproj_{1}$ to a proof of $\equals{\intsub(x)}{\intsub(u)}$, and we want a ``corresponding'' proof, even after zapping. However, $x$ would be zapped to a name, and we cannot apply $\rnproj$ to an atomic value. We would prefer a proof which allows us to preserve its structure even after zapping. To this end, we define a typed $\eqderives$ proof as follows.

\begin{definition}\label{def:welltypedeq}[Typed $\eqderives$ proof]
    A proof $\pi$ of $X; A \vdash \equals{r}{s}$ is \defemph{typed} if for every subproof $\pi'$ with conclusion $X; A \vdash \equals{t}{u}$,
    \begin{itemize}
        \item $\pi'$ contains an occurrence of the $\rncons$ rule, or
        \item $t = u$, or 
        \item $t$ and $u$ are typed terms.
    \end{itemize}
\end{definition}

Intuitively, this definition disallows ``asymmetric'' zapping of the above kind, and allows us to prove the equivalent of Theorem~\ref{thm:rustur} for $\eqderives$ proofs. 

\begin{restatable}{theorem}{rustureq}\label{thm:rustur-eq}
    For $i \leq n$ and $a,b \in \Tterms$, if $\intsub(T_{i}; E_{i}) \eqderives \equals{a}{b}$, then there is a typed normal proof of $\intsub(T_{i}; E_{i}) \vdash \equals{a}{b}$.
\end{restatable}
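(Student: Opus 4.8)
The plan is to show that a normal $\eqderives$ proof is, in fact, \emph{already} typed, so that the theorem follows almost immediately from the normalization half of Theorem~\ref{thm:eqnormsub}. First I would invoke Theorem~\ref{thm:eqnormsub} to replace the given derivation of $\intsub(T_{i}; E_{i}) \eqderives \equals{a}{b}$ by a normal proof $\pi$; its embedded $\DYderives$ subproofs are already normal and need not be retouched, since Definition~\ref{def:welltypedeq} constrains only the \emph{equality} conclusions. It then suffices to check the three-way typedness condition at every equality subproof of $\pi$. Any subproof that contains an occurrence of $\rncons$ already meets the first disjunct of Definition~\ref{def:welltypedeq}, so the entire burden falls on the $\rncons$-free subproofs; note in particular that, by normality condition~7, every subproof ending in $\rnproj$ is $\rncons$-free and hence handled by the claim below.

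The technical core is the following claim, proved by induction on proof structure: if $\rho$ is a $\rncons$-free normal $\eqderives$ proof from $\intsub(T_{i}; E_{i})$ with conclusion $\equals{t}{u}$, then either $t = u$ or both $t$ and $u$ are typed. Since $\rho$ is $\rncons$-free, its last rule is one of $\rnax, \rneq, \rnsymm, \rntrans, \rnproj, \rnprom$, and every immediate subproof is again $\rncons$-free, so the induction hypothesis applies to all of them. The rule $\rnax$ yields an $\intsub$-image of an equality of $E_{i}$, whose sides lie in $\intsub(\constst) \subseteq \intsub(\stnonvars) \cup \bigsub(\constst) \cup \qvar$ (the inclusion established at the start of the proof of Lemma~\ref{lem:projcase}) and are therefore typed; $\rneq$ yields a reflexive equality $\equals{t}{t}$; and $\rnprom$, together with normality condition~2 for $\rnsymm$, yields an equality whose two sides are names or variables, which are atomic and typed (a name $m = \bigsub(m) \in \bigsub(\constst)$, a quantified variable lies in $\qvar$, and an intruder variable $z$ satisfies $\intsub(z) = \bigsub(z) \in \bigsub(\constst)$). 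For $\rntrans$, normality condition~4 forbids reflexive premises, so by the induction hypothesis each premise is between typed terms; the two outer endpoints $t$ and $u$ are thus typed as well.

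The main obstacle is the $\rnproj$ case, and this is where the earlier machinery pays off. Here $\rho$ ends by projecting $\equals{t_{i}}{u_{i}}$ out of a premise $\equals{\func(t_{1},\ldots,t_{r})}{\func(u_{1},\ldots,u_{r})}$ whose proof is $\rncons$-free; by the induction hypothesis this functional equality is either reflexive, in which case $\func(t_{1},\ldots,t_{r}) = \func(u_{1},\ldots,u_{r})$ forces $t_{i} = u_{i}$ and we are done, or it is an equality between two typed terms $p = \func(t_{1},\ldots,t_{r})$ and $q = \func(u_{1},\ldots,u_{r})$. Since $\equals{p}{q}$ is derivable from $\intsub(T_{i}; E_{i})$, Lemmas~\ref{lem:prooftotruth} and~\ref{lem:equndertheta} (instantiated with $\lambda = \bigsub$) give $\bigsub(p) = \bigsub(q)$, so Lemma~\ref{lem:projcase} applies: either $p$ and $q$ are non-zappable and all their immediate components, in particular $t_{i}$ and $u_{i}$, are typed, or $p$ and $q$ are zappable and $p = q$, again forcing $t_{i} = u_{i}$. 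In both cases the conclusion satisfies the typedness condition. It is worth stressing the contrast with the $\DYderives$ case: here no ``chase-back to an earlier index'' in the style of Lemma~\ref{lem:earlier-proof} is required, because the destructor-analogue $\rnproj$ is tamed entirely by Lemma~\ref{lem:projcase}. With the claim established, every $\rncons$-free equality subproof of $\pi$ is reflexive or between typed terms, and every other equality subproof contains $\rncons$; hence $\pi$ is a typed normal proof, as required. (If one additionally wants the $\DYderives$ subproofs typed for the later size bounds, one applies Theorem~\ref{thm:rustur} to each of them, which leaves all equality conclusions untouched.)
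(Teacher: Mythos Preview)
Your proposal is correct and follows essentially the same approach as the paper: normalize via Theorem~\ref{thm:eqnormsub} and then argue by induction on structure that every normal $\eqderives$ proof is already typed, with $\rnproj$ as the only substantive case, resolved through Lemma~\ref{lem:projcase}. Your version is in fact more explicit than the paper's in one respect: you spell out that the hypothesis $\bigsub(p)=\bigsub(q)$ required by Lemma~\ref{lem:projcase} is supplied via Lemmas~\ref{lem:equndertheta} and~\ref{lem:prooftotruth}, a step the paper leaves implicit.
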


\begin{proof}[Proof of Theorem~\ref{thm:rustur-eq}]
By Theorem~\ref{thm:eqnormsub}, we know that every $\eqderives$ proof can be converted to an equivalent normal proof. We can show that every normal $\eqderives$ proof is typed. The only non-trivial case is when the last rule is $\rnproj$. Consider a normal proof $\pi$ of $\intsub(T_{i};E_{i}) \vdash \equals{a}{b}$, whose last rule is $\rnproj$, and whose immediate (typed normal, by IH) subproof is $\pi'$ deriving $\equals{\func(a,c)}{\func(b,d)}$. 
Since $\pi$ is a normal proof ending in $\rnproj$, the $\rncons$ rule does not occur in $\pi$ or $\pi'$. Two cases arise: 
\begin{itemize}
\item $\func(a,c) = \func(b,d)$, in which case $a = b$ and $\pi$ is typed.
\item $\func(a,c)$ and $\func(b,d)$ are both typed terms. By Lemma~\ref{lem:projcase}, either $\func(a,c) = \func(b,d)$ (whence $a = b$), or $a, b, c, d$ are all typed, and thus $\pi$ is typed. \qedhere
\end{itemize}
\end{proof}

\subsection{Small substitutions \texorpdfstring{$\vintsub, \vbigsub$, and $\viwsub_{i}$}{}}

Assume that there is an $\fixedname \in T_{0} \cap \names$ s.t.\ $\fixedname \notin \subterms(\{\alpha_{i}, \beta_{i}\}) \cup \subterms(\rng(\hwsub_{i}) \cup \rng(\iwsub_{i}))$ for all $i$. This can be thought of as a fixed ``spare name'' that does not appear in the run.  We will use this name to formally define a zap operation, as below.

\begin{definition}\label{def:zap}
    For any term $t$, we inductively define the \defemph{zap} of $t$, denoted $\zap{t}$, as follows: 
    \begin{align*}
        \zap{x} &\coloneqq x \\ 
        \zap{n} &\coloneqq \begin{cases}
        	\fixedname & \qquad \hspace{2mm} \mbox{if $n$ is zappable} \\
            	n & \qquad \hspace{2mm} \mbox{otherwise}
        \end{cases} \\
        \zap{\func(t_{1}, t_{2})} &\coloneqq \begin{cases}
            \fixedname & \mbox{if $\func(t_{1}, t_{2})$ is zappable} \\
            \func(\zap{t_{1}}, \zap{t_{2}}) & \mbox{otherwise}
        \end{cases}
    \end{align*}
    For a set of terms $X$, $\zap{X} \coloneqq \{\zap{t} \mid t \in X\}$. For a set of equalities $E$, $\zap{E} \coloneqq \{\equals{\zap{t}}{\zap{u}} \mid \equals{t}{u} \in E\}$.
\end{definition}

\begin{definition}\label{def:vlambda}
    For $\lambda \in \{\intsub, \bigsub, \iwsub_{i} \mid i \leq n\}$, the \defemph{small substitution} $\vlambda$ corresponding to $\lambda$ is defined as $\vlambda(x) \coloneqq \zap{\lambda(x)}$ for all $x \in \vars$. 
\end{definition}

Here are a few examples that illustrate the above definition, for different choices of $\lambda$ and $\constst$. 
\begin{example}\label{ex:zap}
    \phantom{a}
    \begin{enumerate}
        \item 
        Suppose $\constst = \subterms(\{\fixedname, y, (y_{1}, \{y_{2}\}_{k}) \})$, where $y_{1}, y_{2}$ are minimal, and $\iwsub_{2}(y) = (y_{1}, \{y_{2}\}_{k})$. Then $\viwsub_{2}(y) = (y_{1}, \{y_{2}\}_{k})$ and $\vbigsub(y) = (\fixedname, \{\fixedname\}_{k})$.
        \item 
        Suppose $\constst = \subterms(\{\fixedname, y, y_{2}, (y_{1}, x)\})$ and $\iwsub_{2}$ is the same as above, with $x$ minimal and $\intsub(x) = \iwsub_{2}(\{y_{2}\}_{k})$. Then $\viwsub_{2}(y) = (y_{1}, \fixedname)$ and $\vbigsub(y) = (\fixedname, \fixedname)$.
    \end{enumerate} 
\end{example}

Following Definition~\ref{def:vlambda}, we can see that $\vintsub\viwsub_{i}(x) = \zap{\intsub\iwsub_{i}(x)}$ for any $i \le n$ and $x \in \vars$, but this equality need not lift to bigger terms in general. Consider a minimal $x \in \dom(\intsub)$ with $\intsub(x) = t$. So $t$ is ground, and hence $\varsof(t) = \emptyset$. So $\vintsub\viwsub_{i}(t) = t$. However, $\zap{\intsub\iwsub_{i}(t)} = \zap{t} = \fixedname$, since $t$ is zappable. Thus, it is not true that $\vintsub\viwsub_{i}(t) = \zap{\intsub\iwsub_{i}(t)}$ for all possible terms $t$. However, we can show that this holds for all $t \in \constst$. 

\begin{restatable}{lemma}{zaptaunut}\label{lem:zap-taunut}
    For $i \leq n$ and $t \in \constst$, $\vintsub\viwsub_{i}(t) = \zap{\intsub\iwsub_{i}(t)}$. 
\end{restatable}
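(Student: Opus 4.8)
The plan is to prove the identity $\vintsub\viwsub_{i}(t) = \zap{\intsub\iwsub_{i}(t)}$ by structural induction on $t \in \constst$ (which is subterm-closed), after first isolating a general commutation fact for the single substitution $\intsub$ that holds for \emph{all} terms, not just those in $\constst$. Before starting, I would record a stability property for zappability. The note following Definition~\ref{def:bigsub} gives $\bigsub(\lambda(x)) = \bigsub(x)$ for $\lambda \in \{\intsub, \iwsub_{i}\}$ and every variable $x$; since $\bigsub$, $\intsub$, and $\iwsub_{i}$ act homomorphically on terms, this lifts to $\bigsub(\lambda(s)) = \bigsub(s)$ for every term $s$, and hence $\bigsub(\intsub\iwsub_{i}(s)) = \bigsub(s)$. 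Combined with the third bullet of Observation~\ref{obs:tzap-iff-uzap} (zappability depends only on the $\bigsub$-image), this yields: for every term $s$, $s$ is zappable iff $\intsub(s)$ is zappable iff $\intsub\iwsub_{i}(s)$ is zappable.

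Next I would prove a helper identity, say (A): $\vintsub(\zap{w}) = \zap{\intsub(w)}$ for \emph{every} term $w$, by structural induction on $w$. The name and variable cases are immediate from Definitions~\ref{def:zap} and~\ref{def:vlambda}. For $w = \func(w_{1},w_{2})$: if $w$ is zappable, then $\intsub(w)$ is zappable by the stability property, so both sides collapse to $\fixedname$; if $w$ is not zappable, then $\intsub(w)$ is not zappable either, so $\zap{\cdot}$ commutes with $\func$ on both sides, and since $\vintsub$ is a substitution it distributes over $\func$, whence the claim follows from the induction hypotheses on $w_{1}$ and $w_{2}$.

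With (A) established, I would run the main induction on $t \in \constst$. For $t$ a variable, Definition~\ref{def:vlambda} gives $\viwsub_{i}(t) = \zap{\iwsub_{i}(t)}$ directly, and then applying (A) to $w = \iwsub_{i}(t)$ finishes the case. For $t$ a name, $t \in \stnonvars$, so by the first bullet of Observation~\ref{obs:tzap-iff-uzap} $t$ is not zappable, and both sides equal $t$. For $t = \func(t_{1},t_{2}) \in \constst$: since $t \in \stnonvars$, Observation~\ref{obs:tzap-iff-uzap} shows $t$ is not zappable, so by the stability property $\intsub\iwsub_{i}(t)$ is not zappable; hence $\zap{\cdot}$ commutes with $\func$ on the right-hand side, the substitutions distribute over $\func$ on the left-hand side, and the two induction hypotheses on $t_{1}, t_{2} \in \constst$ give the result.

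The main obstacle is the variable case: the witness $\iwsub_{i}(t)$ need not lie in $\constst$ and may contain ground subterms that are zappable ``from the outside'' — exactly the phenomenon flagged by the counterexample preceding the lemma. The restriction $t \in \constst$ rules this out only at the top level (membership in $\stnonvars$ forces non-zappability) and does \emph{not} propagate to witness terms. The entire point of stating helper identity (A) for arbitrary terms is to absorb this difficulty: (A) holds unconditionally precisely because the inner $\zap{\cdot}$ pre-normalizes $w$ before $\intsub$ acts, so no mismatch of the counterexample kind can arise when $\intsub$ is applied afterwards.
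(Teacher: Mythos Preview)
Your proof is correct. The paper states this lemma without proof, so there is nothing to compare against directly; the surrounding text only asserts the variable case (``we can see that $\vintsub\viwsub_{i}(x) = \zap{\intsub\iwsub_{i}(x)}$ for any $x \in \vars$'') and then claims the extension to $\constst$.

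One remark on your helper identity (A): it is more general than strictly necessary. In the paper's setting, for $x \in \dom(\iwsub_{i})$ the witness $\iwsub_{i}(x)$ is derivable from $\intsub(T_{i-1})$, whose only variables lie in $\qvar$ and hence outside $\dom(\intsub)$. Consequently both $\intsub$ and $\vintsub$ act as the identity on $\iwsub_{i}(x)$ and on $\zap{\iwsub_{i}(x)}$, which gives the variable case of the main lemma in one line. Your route via (A) avoids relying on this domain-disjointness fact and is therefore a bit more robust; either argument is fine.
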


We now show, via Lemmas~\ref{lem:dyzetazap} and \ref{lem:eqzetazap}, that small substitutions preserve derivabilities of both terms and equalities. 

\begin{lemma}\label{lem:dyzetazap}
	For $i \leq n$ and any term $t$, 
	if $\intsub(T_{i}) \DYderives t$ then $\vintsub(T_{i}) \DYderives \zap{t}$.
\end{lemma}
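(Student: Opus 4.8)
The plan is to reduce to a typed normal proof and then ``zap'' it termwise by structural induction. First I would invoke Theorem~\ref{thm:rustur} to replace the given derivation by a typed normal proof $\pi$ of $\intsub(T_{i}) \DYderives t$. The whole point of typedness is that every subproof of $\pi$ either ends in a constructor or has its conclusion in $\intsub(\stnonvars) \cup \qvar$; combined with normality (the major premise of a destructor is never produced by a constructor), this will force the major premise of each destructor step to lie in $\intsub(\stnonvars)$, which is exactly the foothold needed to zap.

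Before the induction I would record two bookkeeping facts. \emph{(i)} For every $u \in T_{i}$ we have $\zap{\intsub(u)} = \vintsub(u)$, so that $\zap{\intsub(T_{i})} = \vintsub(T_{i})$. When $u \in \qvar$ (a bound eigenvariable placed into the kernel), this is immediate from Definition~\ref{def:vlambda} since $u$ is a single variable; for every other $u \in T_{i}$ the variables of $u$ are intruder variables, hence disjoint from $\dom(\iwsub_{i}) \subseteq \qvar$, so $\intsub\iwsub_{i}(u) = \intsub(u)$ and $\vintsub\viwsub_{i}(u) = \vintsub(u)$, and Lemma~\ref{lem:zap-taunut} (applicable since $T_{i} \subseteq \constst$) yields the claim. \emph{(ii)} The spare name $\fixedname$ is not zappable (any witness $x$ with $\bigsub(x) = \bigsub(\fixedname)$ is matched by $\fixedname \in \stnonvars$ itself), and $\fixedname \in T_{0} \subseteq T_{i}$, so $\vintsub(\fixedname) = \fixedname \in \vintsub(T_{i})$.

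Then I would prove, by induction on the structure of $\pi$, that $\vintsub(T_{i}) \DYderives \zap{\concof(\pi')}$ for every subproof $\pi'$, casing on its last rule. If it is $\rnax$, then $\concof(\pi') = \intsub(u)$ for some $u \in T_{i}$, and by \emph{(i)} $\zap{\intsub(u)} = \vintsub(u) \in \vintsub(T_{i})$ is derivable by $\rnax$. If it is a constructor producing $s = \func(s_{1},\dots,s_{m})$: when $s$ is zappable, $\zap{s} = \fixedname$ is derivable by \emph{(ii)}; when $s$ is not zappable, $\zap{s} = \func(\zap{s_{1}},\dots,\zap{s_{m}})$ and the IH on the premises lets us reapply the same constructor. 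The crux is the destructor case: typedness together with normality put the major premise $s$ in $\intsub(\stnonvars)$, and such terms are never zappable (if $\bigsub(s) = \bigsub(x)$, then writing $s = \intsub(v)$ with $v \in \stnonvars$ gives $\bigsub(x) = \bigsub(s) = \bigsub(v)$, contradicting minimality of $x$). Hence $\zap{s}$ preserves the outermost symbol of $s$, so zapping commutes with the destructor: the IH supplies derivations of $\zap{s}$ and of the zapped minor premises, and reapplying the same destructor yields $\zap{t'}$ for the conclusion $t'$ (the relevant subterm and key positions are zapped consistently on both the major premise and the minor premises, so the rule instance stays valid). The lemma is the special case $\pi' = \pi$.

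The main obstacle is precisely establishing that the major premises of destructors are non-zappable, so that their top-level structure survives zapping and the destructor can be reapplied without rebuilding the term. This is where the three notions---typed proof, normality, and zappability---must be combined, and it is what makes the inductive step for destructors go through cleanly; the constructor and axiom cases are then routine given facts \emph{(i)} and \emph{(ii)}.
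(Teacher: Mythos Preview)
Your proposal is correct and follows essentially the same route as the paper: invoke Theorem~\ref{thm:rustur} to get a typed normal proof, then do a structural induction whose destructor case hinges on showing that the major premise lies in $\intsub(\stnonvars)$ and is therefore non-zappable. Your bookkeeping fact~\emph{(i)} is in fact a bit more explicit than the paper's one-line appeal to Lemma~\ref{lem:zap-taunut} (you spell out why the $\iwsub_{i}$ disappears on $T_{i}$), and fact~\emph{(ii)} makes explicit why $\fixedname \in \vintsub(T_{i})$, which the paper uses without comment.
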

\begin{proof}
    Let $X$ and $Y$ stand for $\intsub(T_{i})$ and $\vintsub(T_{i})$. Since $X \subseteq \constst$, by Lemma~\ref{lem:zap-taunut}, $\zap{X} = Y$. Let $\pi$ be a typed normal $\DYderives$ proof of $X \vdash t$ (ensured by Theorem~\ref{thm:rustur}). We prove that $Y \DYderives \zap{t}$. Consider the last rule $\rnrule$ of $\pi$. The following cases arise.
    \begin{itemize}[leftmargin=*]
		\item {$\rnrule = \textsf{ax}$:} $t \in X$, and therefore $\zap{t} \in Y$. Thus $Y \DYderives \zap{t}$ by $\textsf{ax}$. 
		\item {$\rnrule$ is a constructor:} Let $t = \func(t_{1}, t_{2})$ and let the immediate subproofs of $\pi$ be $\pi_{1}, \pi_{2}$, with $\concof(\pi_{i}) = t_{i}$ for $i \le 2$. By IH, there is a proof $\varpi_{i}$ of $Y \vdash \zap{t_{i}}$ for each $i \leq 2$. If $t$ is zappable, then $\zap{t} = \fixedname \in Y$ ($\fixedname \in T_{i}$ for all $i$, so $\fixedname \in X$ and $\fixedname \in Y$), and we have $Y \DYderives \zap{t}$ using $\textsf{ax}$. If $t$ is not zappable, then $\zap{t} = \zap{\func(t_{1}, t_{2})} = \func(\zap{t_{1}}, \zap{t_{2}})$, and we can apply $\rnrule$ on the $\varpi_{i}$s to get $Y \DYderives \zap{t}$.
		\item {$\rnrule$ is a destructor:} Let the immediate subproofs of $\pi$ be $\pi_{1}, \pi_{2}$, deriving $t_{1}, t_{2}$ respectively, with $t_{1}$ being the major premise, and $t$ an immediate subterm of $t_{1}$. Since $\pi$ is typed normal, $\pi_{1}$ is also typed and ends in a destructor, so by Definition~\ref{def:welltyped}, $t_{1} \in \intsub(\stnonvars) \cup \qvar$. Since we applied a destructor on $t_{1}$, it is not in $\qvar$. Thus, there is some $u_{1} \in \stnonvars$, with the same outermost operator as $t_{1}$, such that $t_{1} = \intsub(u_{1})$. Hence, $\bigsub(t_{1}) = \bigsub(u_{1})$. 
		
		If $t_{1}$ were zappable, there would be a minimal $x$ such that $\bigsub(x) = \bigsub(t_{1}) = \bigsub(u_{1}) \in \bigsub(\stnonvars)$, which contradicts the minimality of $x$. Thus, $t_{1}$ is not zappable, and $\zap{t_{1}}$ has the same outermost structure as $t_{1}$. By IH, there is a proof $\varpi_{i}$ of $Y \vdash \zap{t_{i}}$ for each $i \leq 2$. Since $\zap{t_{1}}$ is not atomic, we can apply the destructor $\rnrule$ on the $\varpi_{i}$s to get $Y \DYderives \zap{t}$. 
 \qedhere
	\end{itemize}
\end{proof}

\begin{restatable}{lemma}{eqzetazap}\label{lem:eqzetazap}
    For $i \leq n$ and terms $t,u$, 
    if $\intsub(T_{i};E_{i}) \eqderives \equals{t}{u}$ then $\vintsub(T_{i};E_{i}) \eqderives \equals{\zap{t}}{\zap{u}}$.
\end{restatable}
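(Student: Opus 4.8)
The plan is to mirror the structure of the proof of Lemma~\ref{lem:dyzetazap}, now for the equality system $\eqderives$. First I would invoke Theorem~\ref{thm:rustur-eq} to replace the given derivation of $\intsub(T_i;E_i) \eqderives \equals{t}{u}$ by a \emph{typed normal} proof $\pi$ of the same sequent, and use Lemma~\ref{lem:zap-taunut} to identify the new left-hand side with the zapped one, i.e.\ $\vintsub(T_i;E_i) = \zap{\intsub(T_i;E_i)}$. This is legitimate because all terms of $T_i$ and $E_i$ lie in $\constst$ and their variables lie outside $\dom(\iwsub_i)$, so Lemma~\ref{lem:zap-taunut} specializes to $\vintsub(\cdot) = \zap{\intsub(\cdot)}$ on them. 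The proof then proceeds by induction on the structure of $\pi$, with a case analysis on its last rule $\rnrule$.

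Two facts drive the structural cases. First, whenever a subproof of $\pi$ has conclusion $\equals{t'}{u'}$, Lemma~\ref{lem:prooftotruth} (via Lemma~\ref{lem:equndertheta}) gives $\bigsub(t') = \bigsub(u')$, so by Observation~\ref{obs:tzap-iff-uzap} the two sides are \emph{simultaneously} zappable or non-zappable. Second, for a non-zappable compound term $\zap{\func(a,b)} = \func(\zap{a},\zap{b})$, so zapping preserves the outer structure needed to reapply a rule. With these in hand the congruence-style rules are routine: for $\rnsymm$ and $\rntrans$ the IH transfers directly; for $\rncons$ with $t=\func(t_0,t_1)$ and $u=\func(u_0,u_1)$, either both are non-zappable (reapply $\rncons$ to the zapped subproofs, using $\zap{t}=\func(\zap{t_0},\zap{t_1})$) or both are zappable (then $\zap{t}=\fixedname=\zap{u}$, and $\equals{\fixedname}{\fixedname}$ follows by $\rneq$, since $\fixedname$ is the spare name, hence not zappable, and $\fixedname \in \vintsub(T_i)$). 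The base case $\rnrule = \rnax$ holds because $\equals{t}{u} \in \intsub(E_i)$ forces $\equals{\zap{t}}{\zap{u}} \in \vintsub(E_i)$ (again by Lemma~\ref{lem:zap-taunut}); and $\rnrule = \rneq$ follows from Lemma~\ref{lem:dyzetazap}, which preserves the required $\DYderives$ derivation of the common term. I would also note that any equality produced by $\rnprom$ involves only atomic terms (variables or names), so $\zap{\cdot}$ acts componentwise and the case is immediate.

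The main obstacle is the $\rnproj$ case --- indeed, this is precisely the difficulty that the notion of a typed proof was designed to overcome (cf.\ the motivating example preceding Definition~\ref{def:welltypedeq}): naively zapping the major premise $\equals{\func(t_0,t_1)}{\func(u_0,u_1)}$ could collapse it to $\fixedname$, after which $\rnproj$ is no longer applicable. Here I would exploit the typedness of $\pi$: since $\pi$ ends in $\rnproj$, its normal form contains no $\rncons$ in the $\rnproj$ subproof, so the typedness clause forces $\func(t_0,t_1)$ and $\func(u_0,u_1)$ to be typed (or equal). Lemma~\ref{lem:projcase} then splits into two clean subcases: either $\func(t_0,t_1) = \func(u_0,u_1)$, whence $t_i = u_i$ and I derive $\equals{\zap{t_i}}{\zap{t_i}}$ by $\rneq$ (using Lemma~\ref{lem:dcpure} to get $\intsub(T_i) \DYderives t_i$, then Lemma~\ref{lem:dyzetazap}); or all arguments are typed and non-zappable, so $\zap{\func(t_0,t_1)} = \func(\zap{t_0},\zap{t_1})$, the IH supplies a derivation of $\equals{\func(\zap{t_0},\zap{t_1})}{\func(\zap{u_0},\zap{u_1})}$, and a single $\rnproj$ recovers $\equals{\zap{t_i}}{\zap{u_i}}$.

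Finally I would discharge the remaining bookkeeping obligation: the $\rnproj$ applications carry an abstractability side condition stated relative to $\intsub(T_i)$, and this must be re-established relative to $\vintsub(T_i)$ in the new proof. Since that condition amounts to $\DYderives$-derivability of certain subterms of $\func(t_0,t_1)$ and $\func(u_0,u_1)$ from the term set, and Lemma~\ref{lem:dyzetazap} transports exactly these derivations from $\intsub(T_i)$ to $\vintsub(T_i)$ while zapping preserves the relevant positions for non-zappable terms, the side conditions carry over. I expect this verification to be the most delicate part after the $\rnproj$ case itself, but it is ultimately a direct consequence of Lemma~\ref{lem:dyzetazap} together with the position-preservation built into the definition of $\zap{\cdot}$.
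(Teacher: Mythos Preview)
Your proposal is correct and follows essentially the same approach as the paper's proof: pass to a typed normal proof via Theorem~\ref{thm:rustur-eq}, identify $\vintsub(T_i;E_i)$ with the zapped LHS via Lemma~\ref{lem:zap-taunut}, and induct on the proof structure, with the $\rnproj$ case handled exactly as you describe through Lemma~\ref{lem:projcase} and the abstractability side conditions transported by Lemma~\ref{lem:dyzetazap}. If anything, your treatment of the $\rncons$ case via Observation~\ref{obs:tzap-iff-uzap} is slightly cleaner than the paper's appeal to Lemma~\ref{lem:projcase} there, since the conclusion of $\rncons$ need not be typed.
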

\begin{proof}
    Let $(X;A)$ and $(Y;B)$ denote $\intsub(T_{i};E_{i})$ and $\vintsub(T_{i};E_{i})$ respectively. As earlier, using Lemma~\ref{lem:zap-taunut}, $\zap{X} = Y$ and $\zap{A} = B$. Let $\pi$ be a typed normal $\eqderives$ proof of $X;A \vdash \equals{t}{u}$ (guaranteed by Theorem~\ref{thm:rustur-eq}). We prove that $Y;B \eqderives \equals{\zap{t}}{\zap{u}}$. Most of the cases are straightforward, so here we only consider the cases when $\pi$ ends in $\rnproj$ or $\rncons$. 
	\begin{itemize}[leftmargin=*]		
	\item {$\pi$ ends in $\rnproj$}: Let the immediate subproof of $\pi$ be $\pi'$ deriving $X;A \vdash \equals{a}{b}$ where $a = \func(a_{0},a_{1})$, $b = \func(b_{0},b_{1})$, and $t = a_{0}$ and $u = b_{0}$. By IH, there is a proof $\varpi'$ of $Y;B \vdash \equals{\zap{a}}{\zap{b}}$. 
        For $\rnproj$, we need $X \DYderives \{a_{0},a_{1},b_{0},b_{1}\}$. By Lemma~\ref{lem:dyzetazap}, $Y \DYderives \{\zap{a_{0}},\zap{a_{1}},\zap{b_{0}},\zap{b_{1}}\}$. By Lemma~\ref{lem:prooftotruth}, 
$\bigsub(a) = \bigsub(b)$. By normality, $\rncons$ cannot occur in $\pi$. $\pi$ is also typed, so either $a = b$ or $a$ and $b$ are typed. If $a = b$, then $t = u$, and we have a proof of $Y;B \vdash \equals{\zap{t}}{\zap{u}}$ ending in $\rneq$. If $a$ and $b$ are typed, we apply Lemma~\ref{lem:projcase} and the following two cases arise.
        \begin{itemize}
            \item {$a$ and $b$ not zappable:} Then $\zap{a}$ and $\zap{b}$ have the same outermost structure as $a$ and $b$, and $\zap{t} = \zap{a_{0}}$ and $\zap{u} = \zap{b_{0}}$. So we can apply $\rnproj$ on $\varpi'$ to get $Y;B \eqderives \equals{\zap{t}}{\zap{u}}$. 
            \item {$a = b$:} Then $t = u$ as well, and hence $\zap{t} = \zap{u}$. Since $Y \DYderives \{\zap{t}, \zap{u}\}$, $Y;B \eqderives \equals{\zap{t}}{\zap{u}}$ with last rule $\rneq$. 
        \end{itemize}
        \item {$\pi$ ends in $\rncons$}: Let $t = \func(t_{0},t_{1})$ and $u = \func(u_{0},u_{1})$. Let $\pi$ have immediate subproofs $\pi_{0}$ and $\pi_{1}$, each $\pi_{i}$ proving $X;A \vdash \equals{t_{i}}{u_{i}}$. By IH, there are proofs $\varpi_{1}, \varpi_{2}$, each $\varpi_{i}$ proving $Y;B \vdash \equals{\zap{t_{i}}}{\zap{u_{i}}}$. By Lemma~\ref{lem:projcase}, two cases arise.
		\begin{itemize}
            \item {$t$ and $u$ not zappable:} Then $\zap{t} = \func(\zap{t_{1}}, \zap{t_{2}})$ and $\zap{u} = \func(\zap{u_{1}}, \zap{u_{2}})$. So $Y;B \eqderives \equals{\zap{t}}{\zap{u}}$ using $\rncons$ on the $\varpi_{i}$s. 
            \item {$t$ and $u$ zappable:} Then, $\zap{t} = \zap{u} = \fixedname \in Y$, so we have a proof of $Y;B \vdash \equals{\zap{t}}{\zap{u}}$ ending in $\rneq$.  \qedhere
        \end{itemize} 
	\end{itemize}
\end{proof}  

Putting Lemmas~\ref{lem:zap-taunut},~\ref{lem:dyzetazap} and~\ref{lem:eqzetazap} together, we get:
\begin{restatable}{theorem}{stot}\label{thm:stot}
    Let $t, u \in \constst$ and $i \le n$. 
    \begin{itemize}[leftmargin=*]
        \item If $\intsub(T_{i-1}) \DYderives \intsub\iwsub_{i}(t)$ then $\vintsub(T_{i-1}) \DYderives \vintsub\viwsub_{i}(t)$.
        \item If $\intsub(T_{i-1}; E_{i-1}) \eqderives \intsub\iwsub_{i}(\equals{t}{u})$ then $\vintsub(T_{i-1}; E_{i-1}) \eqderives \vintsub\viwsub_{i}(\equals{t}{u})$.
    \end{itemize}
\end{restatable}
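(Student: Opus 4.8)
The plan is to derive Theorem~\ref{thm:stot} as an immediate corollary of the three preceding lemmas, by composition rather than any fresh induction. The key bridging identity is Lemma~\ref{lem:zap-taunut}: for $t \in \constst$ we have $\vintsub\viwsub_{i}(t) = \zap{\intsub\iwsub_{i}(t)}$ (and likewise for $u$). This is exactly what lets me pass between the ``zap of the compound term'' that appears in the conclusions of Lemmas~\ref{lem:dyzetazap} and~\ref{lem:eqzetazap}, and the ``small substitutions applied to the type'' that appears in the statement we are after.

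For the first bullet I would set $s \coloneqq \intsub\iwsub_{i}(t)$, regard it as an ordinary term, and apply Lemma~\ref{lem:dyzetazap} at index $i-1$: from the hypothesis $\intsub(T_{i-1}) \DYderives s$ we obtain $\vintsub(T_{i-1}) \DYderives \zap{s}$. Since $t \in \constst$, Lemma~\ref{lem:zap-taunut} gives $\zap{s} = \zap{\intsub\iwsub_{i}(t)} = \vintsub\viwsub_{i}(t)$, so $\vintsub(T_{i-1}) \DYderives \vintsub\viwsub_{i}(t)$, as required.

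The second bullet is entirely analogous, using Lemma~\ref{lem:eqzetazap} in place of Lemma~\ref{lem:dyzetazap}. Writing $\intsub\iwsub_{i}(\equals{t}{u}) = \equals{\intsub\iwsub_{i}(t)}{\intsub\iwsub_{i}(u)}$, the hypothesis reads $\intsub(T_{i-1};E_{i-1}) \eqderives \equals{\intsub\iwsub_{i}(t)}{\intsub\iwsub_{i}(u)}$. Applying Lemma~\ref{lem:eqzetazap} at index $i-1$ yields $\vintsub(T_{i-1};E_{i-1}) \eqderives \equals{\zap{\intsub\iwsub_{i}(t)}}{\zap{\intsub\iwsub_{i}(u)}}$, and then two applications of Lemma~\ref{lem:zap-taunut} (legitimate since $t,u \in \constst$) rewrite both sides to $\vintsub\viwsub_{i}(t)$ and $\vintsub\viwsub_{i}(u)$, giving $\vintsub(T_{i-1};E_{i-1}) \eqderives \vintsub\viwsub_{i}(\equals{t}{u})$.

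Since all the genuine work has already been discharged in the lemmas --- the typed-proof transformations of Theorems~\ref{thm:rustur} and~\ref{thm:rustur-eq}, the case analysis on zappability via Lemma~\ref{lem:projcase}, and the verification that $\zap{\cdot}$ commutes with the composite substitutions on members of $\constst$ --- there is no real obstacle left at this step. The only point that demands care is the hypothesis $t,u \in \constst$: it is precisely the restriction to types that licenses the bridging identity of Lemma~\ref{lem:zap-taunut}, which (as the discussion immediately preceding that lemma shows, e.g.\ for a minimal ground $\intsub(x)$) can fail for arbitrary terms. I would also remark that invoking the lemmas at index $i-1$ rather than $i$ is harmless, as both are stated uniformly for all indices up to $n$.
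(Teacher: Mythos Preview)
Your proposal is correct and matches the paper's approach exactly: the paper simply states that the theorem follows by ``putting Lemmas~\ref{lem:zap-taunut},~\ref{lem:dyzetazap} and~\ref{lem:eqzetazap} together,'' and you have spelled out precisely how that composition works. Your remarks about why the hypothesis $t,u \in \constst$ is essential (to license Lemma~\ref{lem:zap-taunut}) and about invoking the lemmas at index $i-1$ are accurate and well placed.
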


Having shown that the $\vlambda$s simulate the $\lambda$s, we next show that they allow us a bound on the size of terms therein.
\begin{restatable}{theorem}{vlambdasmall}\label{thm:vlambda-small}
    For $\lambda \in \{\intsub, \bigsub, \iwsub_{i} \mid i \leq n\}$, $\vlambda$ is such that $\dagsize{\vlambda(x)} \le |\stnonvars|$ for all $x \in \dom(\vlambda)$.
\end{restatable}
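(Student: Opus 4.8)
The plan is to bound the number of distinct subterms of $\vlambda(x)$ by exhibiting an injection of $\subterms(\vlambda(x))$ into the zapped type universe $\zap{\stnonvars}$, whose cardinality is at most $|\stnonvars|$. The first ingredient is a purely structural fact about zapping, proved by induction on the term: every subterm of $\zap{t}$ has the form $\zap{s}$ for some $s \in \subterms(t)$ (when a zappable node is encountered, its entire subtree collapses to $\fixedname = \zap{\fixedname}$). Consequently $\subterms(\vlambda(x)) = \subterms(\zap{\lambda(x)}) \subseteq \{\zap{s} \mid s \in \subterms(\lambda(x))\}$, and it suffices to show that for every subterm $s$ of $\lambda(x)$ there is a witness $u_s \in \stnonvars$ with $\zap{s} = \zap{u_s}$; this places $\subterms(\vlambda(x))$ inside $\zap{\stnonvars}$ and yields $\dagsize{\vlambda(x)} \le |\stnonvars|$.

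I would first treat the ground substitutions $\lambda \in \{\intsub, \bigsub\}$, where the argument is cleanest. Since every free variable of the run occurs in $\xi$, we have $x \in \constst$, and $\constst$ is subterm-closed. The pivotal claim is \emph{subterm-closure of the range}: every subterm of $\bigsub(x)$ is of the form $\bigsub(w)$ for some $w \in \constst$. Granting this, take any subterm $s = \bigsub(w)$ of $\bigsub(x)$. If $s$ is zappable, then $\zap{s} = \fixedname = \zap{\fixedname}$ with $\fixedname \in \stnonvars$. If $s$ is not zappable, then $s \in \bigsub(\constst)$ is non-zappable, so by Observation~\ref{obs:tzap-iff-uzap} there is $u_s \in \stnonvars$ with $\bigsub(u_s) = s$; as $s$ is ground, $\zap{s} = \zap{\bigsub(u_s)} \in \zap{\stnonvars}$. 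Both cases give $\subterms(\zap{\bigsub(x)}) \subseteq \zap{\stnonvars}$, and the bound follows for $\bigsub$ (and identically for $\intsub$, using $\intsub(\stnonvars)$).

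The subterm-closure claim is where the earlier machinery is invoked. A subterm of $\bigsub(x)$ is either $\bigsub$ applied to a subterm of $x$ (trivially in $\bigsub(\constst)$ by subterm-closure of $\constst$) or a subterm of $\bigsub(y)$ for a variable $y$ occurring in an intermediate range. For the latter I would induct along the composition $\bigsub = \intsub\iwsub_{1}\hwsub_{1}\cdots\iwsub_{n}\hwsub_{n}$: each range element is $\DYderives$-derivable from the relevant $\intsub(T_{j})$, and by Theorem~\ref{thm:rustur} such derivations can be taken typed and normal, so Lemmas~\ref{lem:sigmax-in-sigmaTp} and~\ref{lem:earlier-proof} guarantee that any subterm produced is either a type (the image of a $\constst$-term) or was freshly built by the intruder via a constructor-ending proof from types, hence is again an image of a $\constst$-term. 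Lemma~\ref{lem:zap-taunut} supplies the compatibility $\vintsub\viwsub_{i}(t) = \zap{\intsub\iwsub_{i}(t)}$ on $\constst$ that lets these typing facts survive the passage to the small substitutions, and Lemma~\ref{lem:projcase} ensures that non-zappable typed structure is preserved coordinatewise. The case $\lambda = \iwsub_{i}$ is handled analogously relative to $\intsub(\constst)$: each subterm of $\iwsub_{i}(x)$ is typed via Theorem~\ref{thm:rustur}, so its zap lies in $\zap{\intsub(\stnonvars)}$; the one delicate point is that $\iwsub_{i}(x)$ may retain quantified variables from $\qvar \cap \constst$, and one must check that these surviving variable-subterms are charged against their associated types rather than inflating the count.

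The step I expect to be the main obstacle is exactly this subterm-closure/typing invariant: showing that no subterm appearing inside $\lambda(x)$ escapes the finite universe $\constst$, despite the intruder being free to inject arbitrary terms. Its resolution is the ``chase-back'' reasoning of Lemmas~\ref{lem:sigmax-in-sigmaTp} and~\ref{lem:earlier-proof}, combined with typedness of $\DYderives$ and $\eqderives$ proofs (Theorems~\ref{thm:rustur} and~\ref{thm:rustur-eq}), which together force every subterm to be an image of some $w \in \constst$. Once this invariant is secured, the counting is immediate: the assignment $s \mapsto \zap{u_s}$ factors $\subterms(\vlambda(x))$ through $\zap{\stnonvars}$, so $\dagsize{\vlambda(x)} \le |\zap{\stnonvars}| \le |\stnonvars|$, as required.
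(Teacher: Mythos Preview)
Your central ``subterm-closure of the range'' claim --- that every subterm $s$ of $\bigsub(x)$ is $\bigsub(w)$ for some $w \in \constst$, or equivalently that $\zap{s} \in \zap{\stnonvars}$ for every such $s$ --- is false. Take an intruder variable $x$ with $\sigma(x) = ((n_{1},n_{2}),n_{3})$ where the $n_{i}$ are names in $T_{0}$ but the pair $(n_{1},n_{2})$ never appears in $\constst$ and no variable maps to it under $\bigsub$. Then $(n_{1},n_{2})$ is a subterm of $\bigsub(x)$, it is not zappable (no variable witnesses it), so $\zap{(n_{1},n_{2})} = (n_{1},n_{2})$, yet this term is not in $\zap{\stnonvars}$. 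Your appeal to typed proofs does not rescue this: being built by a constructor from typed terms does not make a term typed, and Lemmas~\ref{lem:sigmax-in-sigmaTp}--\ref{lem:earlier-proof} only trace non-typed terms back to a constructor-ending proof --- they say nothing about the intermediate subterms produced along the way being images of $\constst$-elements.

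The logical slip is in your ``it suffices'' step. Your structural observation that every subterm of $\zap{t}$ is $\zap{s}$ for some $s \in \subterms(t)$ is correct, but the converse fails: when a subterm of $t$ sits below a zappable position, its zap need not appear in $\zap{t}$ at all. In the example above $x$ is minimal, so $\zap{\bigsub(x)} = \fixedname$ and the offending $(n_{1},n_{2})$ simply never surfaces in $\vbigsub(x)$. You quantified over all subterms of $\lambda(x)$ rather than only those that survive zapping, and that over-quantification is exactly where the counterexample bites.

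The paper's proof exploits this collapse directly and needs none of the typed-proof machinery. It first reduces to $\vbigsub$ via $\dagsize{\vlambda(x)} \le \dagsize{\vbigsub(x)}$, then shows $\subterms(\vbigsub(t)) \subseteq \vbigsub(\stnonvars)$ for $t \in \constst$ by induction on $|\vbigsub(t)|$, using only the defining dichotomy of minimality: if $x$ is minimal then $\vbigsub(x) = \fixedname$, and if $x$ is non-minimal then by definition there is $r \in \stnonvars$ with $\bigsub(x) = \bigsub(r)$, hence $\vbigsub(x) = \vbigsub(r)$ with $|\vbigsub(r)| < |\vbigsub(t)|$ whenever $x$ is a proper subterm of $t$. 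Theorems~\ref{thm:rustur} and~\ref{thm:rustur-eq} and the chase-back lemmas are about \emph{preserving derivations} under zapping; the size bound is a separate, elementary counting argument.
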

\begin{proof}
For each $\lambda$ and any $x$, $\vbigsub(\vlambda(x)) = \vbigsub(x) = \zap{\bigsub(x)}$ (by Definition~\ref{def:vlambda}) and thus, $\dagsize{\vlambda(x)} \leq \dagsize{\vbigsub(x)}$. So it suffices to prove a bound on $\dagsize{\vbigsub(x)}$. We show that for $t \in \constst$, $\subterms(\vbigsub(t)) \subseteq \vbigsub(\stnonvars)$. Note that if $t = x$ is non-minimal, there is an $r \in \stnonvars$ s.t.\ $\vbigsub(t) = \vbigsub(r)$. Thus it suffices to prove the statement for $t$ which is either a minimal variable or in $\stnonvars$.

The proof is by induction on $|\vbigsub(t)|$. 
\begin{itemize}[leftmargin=*]
\item $|\vbigsub(t)| = 1:$ $\vbigsub(t) \in \names$. So $t \in \names$ or $t$ is a minimal variable. If $t \in \names$, $\vbigsub(t) = t\in \names$. Otherwise, $\vbigsub(t) = \fixedname$. In both these cases, $\subterms(\vbigsub(t)) \subseteq \vbigsub(\stnonvars)$.
\item $|\vbigsub(t)| > 1:$ Let $a \in \subterms(\vbigsub(t))$. If $a = \vbigsub(u)$ for some $u \in \subterms(t) \setminus \varsof(t)$, then $a \in \vbigsub(\stnonvars)$. If $a = \vbigsub(x)$ for some minimal $x \in \varsof(t)$, then $a = \fixedname = \vbigsub(\fixedname) \in \vbigsub(\stnonvars)$. If $a \in \subterms(\vbigsub(x))$ for non-minimal $x \in \varsof(t)$, then $x \neq t$, and there is an $r \in \stnonvars$ s.t.\ $\vbigsub(x) = \vbigsub(r)$, and $a \in \subterms(\vbigsub(r))$. Since $|\vbigsub(r)| < |\vbigsub(t)|$, by IH, $\subterms(\vbigsub(r)) \subseteq \vbigsub(\stnonvars)$. Thus $a \in \vbigsub(\stnonvars)$. 
\end{itemize}
Hence, $\dagsize{\vbigsub(t)} \leq |\vbigsub(\stnonvars)| \leq |\stnonvars|$, for $t \in \constst$.
\end{proof}

\subsection{NP algorithm for Insecurity: Sketch}
After guessing a coherent set of sessions and an interleaving of these sessions of length $n$, we guess a small substitution $\vintsub$, for each intruder send $\beta_{i}$ a set $X_{i} \subseteq \eatomsof(\beta_{i})$ and a small substitution $\viwsub_{i}$ whose domain is $\boundvars(\beta_{i}) \setminus \boundvars(X_{i})$. We also guess a sequence of knowledge functions such that the relevant atomic assertions and terms (communicated in the $\vintsub(\beta_{i})$s) are derivable from $\vintsub(\dc(\knowfunc_{i-1}(I)))$. These derivability checks in the $\eqderives$ system can be carried out in time polynomial in the size of the protocol description (using the procedure described in Algorithm~\ref{alg:eqderive}). 

For honest agent derivations, we only deal with derivations of the form $\knowfunc_{i}(u_{i}) \assderives \alpha_{i}$ (without the $\intsub$). This is, in fact, a version of the \emph{passive intruder problem} for assertions. Applying Theorem~\ref{thm:eqs-to-gamma}, we reduce this to checks of the form $(U_{i}; F_{i}) \eqderives \hwsub_{i}(\equals{r}{s})$. It is much simpler to ensure that we can obtain $\hwsub_{i}$s of bounded size, because of the absence of $\intsub$. We can think of this as a version of the \emph{passive intruder problem} for the system with assertions. The following theorem, the proof of which can be found in the Appendix, will help us obtain small $\hwsub_{i}$s.   

\begin{theorem}\label{thm:deriv-witness-bounds}
    If there is a $\mu$ satisfying Theorem~\ref{thm:eqs-to-gamma}, there is a ``small'' $\nu$ satisfying the same conditions, such that $\dagsize{\nu(x)} \le |\subterms(S) \cup \subterms(A\cup\{\alpha\})|$ for all $x \in \dom(\nu)$.
\end{theorem}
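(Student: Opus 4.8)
The plan is to run the zapping/typing machinery of the previous two subsections in the ``passive'' single-instance setting, where there is no intruder substitution. Concretely, I would specialize everything by taking $\intsub$ to be the identity, so that the compound substitution $\bigsub$ of Definition~\ref{def:bigsub} collapses to the witness substitution $\mu$ itself, the anchor set becomes $\constst = \subterms(T) \cup \subterms(E\cup\{\alpha\})$ (with $(T;E) = \dc(S;A)$), and $\stnonvars = \constst \setminus \vars$. Any free variables occurring in the range of $\mu$ (in the honest-agent application these are intruder variables, which are atomic from the agent's viewpoint and are forced derivable from $T$ by condition~\ref{item:dycond}) are treated as additional atomic names, so that $\mu$ is effectively ground over this extended alphabet and Definitions~\ref{def:zappable}--\ref{def:zap} apply verbatim. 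I then set $\nu(x) \coloneqq \zap{\mu(x)}$ for every $x \in \dom(\mu)$, keeping the same set $X$ and the same domain $\dom(\nu) = \dom(\mu) = \boundvars(\alpha)\setminus\boundvars(X)$.

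With this choice, three of the four conditions are nearly immediate. Condition~\ref{item:abscond} refers only to $\dom(\mu)$, $T$, and $\subterms(\alpha)$, none of which change, so it holds for $\nu$ verbatim. For condition~\ref{item:asscond} I first discard the easy hereditary atoms exactly as in the main text: a $\beta \in X$ of the form $\pk(k)\says\delta$ can be produced in the $\eqderives$ system only by $\rnax$, so $\mu(\beta) \in E$ is already an anchor and $\nu$ agrees with $\mu$ there; a $\beta$ of the form $P(\ldots)$ or $t\listmemb\ell$ mentions only names and variables, so $\mu(\beta)$ is already small. For the remaining equality atoms $\beta = \equals{t}{u}$ I invoke the specialized version of Lemma~\ref{lem:eqzetazap}: since $(T;E)\eqderives \equals{\mu(t)}{\mu(u)}$ holds, we get $(T;E)\eqderives \equals{\zap{\mu(t)}}{\zap{\mu(u)}}$, and by the specialized Lemma~\ref{lem:zap-taunut} (valid since $t,u \in \constst$) this is exactly $\nu(\equals{t}{u})$. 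Here I use that $\zap{T} = T$ and $\zap{E} = E$, which follow from Observation~\ref{obs:tzap-iff-uzap} (anchors are never zappable). Condition~\ref{item:dycond} is handled the same way through the specialized Lemma~\ref{lem:dyzetazap}: $T \DYderives \mu(x)$ yields $T \DYderives \zap{\mu(x)} = \nu(x)$. Finally, condition~\ref{item:introcond} is recovered by replaying the $(\Leftarrow)$ construction of Theorem~\ref{thm:eqs-to-gamma} with $\nu$ in place of $\mu$: since the intros-only skeleton is fixed by $X$ and $\alpha$, and its $\rnexintro$ applications need only the derivability of the witnesses (\ref{item:dycond}, now established for $\nu$) and the abstractability side conditions (\ref{item:abscond}), the same tree derives $\alpha$ from $(T;\nu(X))$; crucially, the non-quantified subterms of each $\nu(\beta)$ are anchors and hence untouched by zapping, so abstracting $\nu(x)$ at the positions $\posof{x}{\cdot}$ reconstructs exactly $\alpha$.

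For the size bound I appeal to the specialized version of Theorem~\ref{thm:vlambda-small}, which gives $\dagsize{\nu(x)} \le |\stnonvars|$ for all $x \in \dom(\nu)$. It then remains to observe that $\dc$ does not enlarge the subterm set: $T = S \cup \boundvars(A)$ gives $\subterms(T) \subseteq \subterms(S)\cup\subterms(A)$, and the atoms in $E$ are subformulas of $A$, so $\stnonvars \subseteq \constst \subseteq \subterms(S)\cup\subterms(A\cup\{\alpha\})$, yielding the stated bound $\dagsize{\nu(x)} \le |\subterms(S)\cup\subterms(A\cup\{\alpha\})|$.

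I expect the main obstacle to be expository rather than conceptual: the lemmas of the ``small substitutions'' subsection (Lemmas~\ref{lem:zap-taunut}, \ref{lem:dyzetazap}, \ref{lem:eqzetazap} and Theorem~\ref{thm:vlambda-small}), together with the typed-proof transformations (Theorems~\ref{thm:rustur} and~\ref{thm:rustur-eq}) they rely on, are all stated for the global run with its chain $T_{0}\subseteq\cdots\subseteq T_{n}$ and the specific substitutions $\intsub$ and $\iwsub_{i}$. The work is to re-state and re-verify each of them for a single derivability instance $(T;E)\assderives\alpha$ with $\intsub$ taken to be the identity and $\mu$ (treated as ground over the atomic free variables of $T$) playing the role of $\bigsub$; the proofs go through essentially unchanged once one checks that anchors remain non-zappable and that $T,E$ are therefore fixed by $\zap{\cdot}$. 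The one point needing genuine care is the reuse of the same $X$ and $\dom(\mu)$: this is what makes conditions~\ref{item:introcond} and~\ref{item:abscond} come ``for free'', and it must be justified that zapping the witnesses never forces a change to the decomposition $X$ or to which variables receive witnesses.
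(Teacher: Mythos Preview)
Your plan has a genuine gap. The main-text machinery you want to specialize rests on Lemma~\ref{lem:prooftotruth}: the compound substitution $\bigsub$ makes every equality in the LHS \emph{syntactically true}, i.e.\ $\bigsub(r)=\bigsub(s)$ for each $\equals{r}{s}\in E_{i}$. This is what supplies the hypothesis $\bigsub(t)=\bigsub(u)$ of Lemma~\ref{lem:projcase}, which in turn is used both in Theorem~\ref{thm:rustur-eq} (to show that normal $\eqderives$ proofs are typed) and in the $\rnproj$/$\rncons$ cases of Lemma~\ref{lem:eqzetazap}. In your specialization you set $\bigsub=\mu$ and treat the free variables of $(T;E)$ as fresh names. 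But $\dom(\mu)=\boundvars(\alpha)\setminus\boundvars(X)$ is disjoint from $\varsof(E)$, so $\mu$ acts as the identity on every $\equals{r}{s}\in E$; you would need $r=s$ for each such equality, which is false in general. Concretely, if $E$ contains $\equals{y}{(n,n)}$ with $y\in\boundvars(A)$, then $\mu(y)=y\neq(n,n)=\mu((n,n))$, and the chain Lemma~\ref{lem:prooftotruth} $\Rightarrow$ Lemma~\ref{lem:projcase} $\Rightarrow$ Lemma~\ref{lem:eqzetazap} breaks. So the claim that ``the proofs go through essentially unchanged'' is not correct: there is no truth-making substitution available from $\mu$ alone.

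The paper's proof in the appendix takes a different route precisely to avoid this. Instead of a syntactic ground substitution, it works with \emph{provable} equality: $t\approx u$ iff $(T;E)\eqderives\equals{\mu(t)}{\mu(u)}$, and a variable is declared minimal when no $t\in\constst\setminus\dom(\mu)$ is $\approx$-equivalent to it. One then builds, for each $t\in\constst$, a canonical representative $t^{*}$ containing only minimal variables (this needs a well-founded rank argument and the consistency of $E$), and sets $\nu(x)=\nu_{m}(x^{*})$ where $\nu_{m}$ sends minimal variables to $\fixedname$. The simulation lemma is proved by direct induction on normal $\eqderives$ proofs, using the subterm property to show that intermediate terms already lie in $\constst$ and hence are fixed by the translation; no analogue of Lemma~\ref{lem:prooftotruth} is needed. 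If you want to salvage your approach, you would have to combine $\mu$ with a consistency witness $\lambda$ for $E$ (the domains are disjoint) and redo the typing/zapping with $\bigsub=\lambda\mu$; but that is additional work you have not carried out, and at that point you are essentially rebuilding the appendix argument.
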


In order to check whether $\knowfunc_{i}(u_{i}) \assderives \alpha_{i}$, we need to guess $X \subseteq \eatomsof(\alpha_{i})$ and a small substitution $\hwsub_{i}$ such that the conditions of Theorem~\ref{thm:eqs-to-gamma} are satisfied. (The smallness of $\hwsub_{i}$ is guaranteed by Theorem~\ref{thm:deriv-witness-bounds}.) Each of those conditions can be checked in polynomial time because they only involve $\DYderives$ proofs (checkable in PTIME), $\eqderives$ proofs (also checkable in PTIME), and proofs involving only $\{\rnax,\rnconji, \rnexintro, \rnsays\}$ (also checkable in PTIME). Thus, honest agent derivability checks are in NP.

\section{Discussion and Future Work}\label{sec:disc}

\subsection{Intruder theories for terms}
For terms, we assumed that every operator had constructor and destructor rules, as specified in Figure~\ref{fig:consdest}. Such systems are called \emph{constructor-destructor theories}. While the initial results for the active intruder problem were proved for simple theories by~\cite{RT03}, that work has been extended to much richer theories~\cite{CKRT05, AC06, CRZ05, CS03, Bau05, CKR18, CKR20}. As mentioned in Section~\ref{sec:related}, the extension with assertions that we consider is not subsumed by any known intruder theories. 

Can one generalize the results of this paper to richer intruder theories? We believe that one can, but one needs to modify a few fundamental notions used so far. We list these considerations below.

\begin{itemize}
    \item In the main text, we used $\subterm(t)$ to mean the \emph{syntactic subterms} of $t$. For a general intruder theory, we will need to assume a function ${\cal S}$ which maps finite sets of terms to finite sets, and satisfies $\subterm(X) \subseteq {\cal S}(X)$ for any set $X$. 
    \item To handle the general case, we modify the form of constructors and destructors as follows. In a constructor rule, each immediate subterm of the conclusion is a subterm of one of the premises. In a destructor rule, the conclusion is a subterm of one of the premises. 
    \item We can assume that the intruder theory we consider is local w.r.t.~${\cal S}$. That is, whenever $X$ derives $t$, we have a proof $\pi$ of $X \vdash t$ such that $\termsof(\pi) \subseteq {\cal S}(X \cup \{t\})$, and further, if $\pi$ ends in a destructor rule, $\termsof(\pi) \subseteq {\cal S}(X)$. 
    \item We modify Definition~\ref{def:types} to use ${\cal S}$ instead of $\subterms$. Definitions~\ref{def:zappable}, \ref{def:welltyped}, \ref{def:welltypedeq}, \ref{def:zap}, and \ref{def:vlambda}, on which the proofs in Section~\ref{sec:insecurity} hinge, will stay unchanged, since they only refer to $\constst$ and $\stnonvars$.
    \item We need to prove Theorem~\ref{thm:rustur} for the extended theory before moving onto the $\eqderives$ system. Determining the conditions on the intruder theory which would guarantee this theorem is left for future work.
    \item Now, for proofs in the $\eqderives$ system, there is the following subtlety, which we illustrate by considering the $\eqderives$ theory built on top of the theory for {\sc xor} as presented in~\cite{CKRT05}. In this intruder theory, there are implicit rewrites in the rules for {\sc xor}. For instance, from $a \oplus b$ and $b \oplus c$, we can obtain $a \oplus c$. We would need to carry over these rewrites into the equality rules as well, and in the presence of such rewrites, show that normalization and subterm property hold for the new $\eqderives$ system. 

In particular, for normalization, we need to eliminate subproofs where an instance of $\rncons$ appears as the premise for $\rnproj$. For the basic $\eqderives$ system, one can do this by picking the appropriate subproof of $\rncons$. However, in this new system with {\sc xor},  consider a proof of the following form.
\[
    \begin{prooftree}
        \[ 
        	  T; E \vdash \equals{x}{a \oplus b} \quad T; E \vdash \equals{y}{b \oplus c}
        	  \justifies T; E \vdash \equals{x \oplus y}{a \oplus c} \using \rncons
       \]
       \justifies T; E \vdash \equals{x}{a} \using \rnproj_{1}
    \end{prooftree}
\]

Such a proof cannot easily be normalized, since none of these subproofs has the same conclusion. But such a $\rnproj$ rule should not be allowed to begin with, since implicit rewrites are not injective.\footnote{In the constructor-destructor theories as in Figure~\ref{fig:consdest}, we can see that such implicit rewrites do not occur, and all $\func$s considered are injective.} Thus, proving normalization and the subterm property for any modified $\eqderives$ system built on top of a general intruder theory seems feasible, provided one appropriately tailors the rules -- especially $\rnproj$ -- to avoid any unsound behaviour. This is left for future work. 
\end{itemize}

Thus, we can see that the main change in lifting this result to richer intruder theories lies in showing that Theorem~\ref{thm:rustur} holds. One might also need to restrict the new rules one might introduce to the $\eqderives$ system, and hence mildly modify the proofs of the normalization theorem and Theorem~\ref{thm:rustur-eq}.

\subsection{Constraint solving approach}
An algorithmic approach to the active intruder problem is \emph{constraint solving}~\cite{MS01, CS03}. Rather than merely proving a bound on the substitution size, these papers present the problem as a series of deducibility constraints (involving variables), the solution to which is a substitution under which all the deducibilities actually hold. They also provide rules for constructing such a substitution.

In Section~\ref{sec:insecurity}, for a run, we defined the sequence of sets $(T_{i}; E_{i})$, and sets of atomic formulas $X_{i}$, for $i \leq n$. This can be viewed as a generalized constraint system, where we want to find substitutions under which $(T_{i}; E_{i})$ can derive the equality assertions in $X_{i}$, and $T_{i}$ can derive the public terms of $X_{i}$. It is a worthwhile exercise to adapt the existing constraint solving approaches to solve such generalized constraint systems. We leave this for future work.

\subsection{Full disjunction}
An interesting feature of the language in~\cite{RSS17} is the use of disjunction. While our syntax here uses list membership to express a limited form of disjunction that seems to suffice for many protocols, it would be worthwhile to explore the utility of full disjunction and its effect on the active intruder problem. 

In fact, with disjunction, we know that even the derivability problem becomes more involved. To check if $(S;A) \assderives \gamma$, one can no longer work with a single kernel of $(S;A)$. One can define a notion of ``down-closure''. For each disjunctive formula $\alpha\disj\beta$, one obtains two down-closures -- one containing $\alpha$, and the other $\beta$. In general, many disjunctions could occur in $A$, and there are exponentially many down-closures for any $(S; A)$. Using a left disjunction property similar to those in Lemma~\ref{lem:leftprop} ($\alpha\disj\beta$ derives $\gamma$ iff $\gamma$ is derivable from $\alpha$ and from $\beta$), we check if the kernels of all down-closures of $(S;A)$ derive $\gamma$. Thus, the derivability problem is in $\Pi_{2}$. Some of these down-closures might even contain contradictory assertions, and hence our techniques for the insecurity problem do not seem to directly apply. Exploring these issues is an interesting direction of research and is left for future work. 

\subsection{Adding if-then-else branching to protocols}\label{sec:ifthenelse}
As mentioned earlier, we can add an $A:\assertact~\alpha$ action that allows the role to proceed only if $\alpha$ can be derived using the information that $A$ has at the time. Similarly, we can add an action of the form $A:\denyact~\alpha$, which lets the role proceed only if $\alpha$ \emph{cannot be derived} using $A$'s current knowledge. To simulate an if-then-else branch (by specifying a condition $\alpha$ to be checked and an agent $A$ who will check it), we create two roles, one containing $A:\assertact~\alpha$ followed by the actions in the then branch, and the other containing $A:\denyact~\alpha$ followed by the actions in the else branch. We can easily extend our results to protocols involving such assert and deny actions where the condition being checked is whether or not a predicate holds about some atomic terms (for example, $\elg(V)$ in Section~\ref{sec:foo}). 

The fact that a predicate $P$ holds about some terms $\vec{t}$ can be modelled as the presence of $\vec{t}$ in a global list. We can also extend the model to allow agents (with appropriate access privileges) to add and delete entries from global lists, as considered in tools like Proverif~\cite{Bla16} and in some versions of applied-pi~\cite{ALRR17, KK16}. The technical proofs in our work continue to hold for these extensions.

\subsection{Adding assertions to other models and tools}
It is also useful to add communicable assertions to the widely-used applied pi calculus~\cite{ABF17}. It would be especially interesting to see how this impacts the notion of static equivalence, and then study expressibility and decidability. As mentioned earlier, one can express certain ``equivalence'' properties in a more natural manner with assertions as compared to the terms-only model. Another promising extension is to study which equivalence properties can be expressed as reachability properties in this manner, like in~\cite{GMV22}. These would also help us to extend existing tools~\cite{Cre08, MSCB13, Bla16, CKR18} with assertions.

\bibliographystyle{plain}
\bibliography{ref}

\newpage
\appendix 

\section{Proof of Theorem~\ref{thm:deriv-witness-bounds}}\label{app:decidable-assderives}
We want to check if $(S;A) \assderives \alpha$, where $\boundvars(\alpha) \cap \varsof(S;A) = \emptyset$. Let $(T;E) = \dc(S;A)$. By Theorem~\ref{thm:eqs-to-gamma}, this reduces to checking if there is a substitution $\mu$ with $\dom(\mu) = \boundvars(\alpha)$ s.t. and $X \subseteq \eatomsof(\alpha)$ s.t. 
$\forall{}x \in \boundvars(\alpha): T \DYderives \mu(x)$ and 
for all $\beta \in X$, $(T;E) \assderives \mu(\beta)$. For formulas in $X$ that are not of the form $\equals{t}{u}$, all terms occurring in them are variables or names, so $\mu$ is atomic on variables occurring in them. It therefore suffices to only consider assertions of the form $\equals{t}{u}$. 

So the problem is as follows. There is a set of terms $\constst$ and $(T; E)$ with $\subterms(T) \cup \subterms(E) \subseteq \constst$, and a substitution $\mu$ with $\dom(\mu) \cap \varsof(T;E) = \emptyset$, which satisfies some derivabilities of the form $T \DYderives t$ and $T;E \eqderives \equals{t}{u}$, where $t,u \in \constst$. We seek a small $\nu$ that preserves the above derivabilities. 
To reduce clutter, we use $\dommu$ to refer to $\dom(\mu)$. Let $\stnonvars = \constst \setminus \dommu$. Since $T \DYderives \mu(x)$, all variables occurring in $\mu(x)$ must also be in $\varsof(T)$. But $\varsof(T;E) \cap \dommu = \emptyset$, so $\varsof(\mu(x)) \cap \dommu = \emptyset$. 

Define $t \approx u$ iff $T;E \eqderives \mu(\equals{t}{u})$. It is easy to see that $\approx$ is a partial equivalence relation (on the subset of terms $t$ such that $T \DYderives \mu(t)$).   

We say that $x \in \dommu$ is \emph{minimal} if there is no $t \in \stnonvars$ with $x \approx t$. Let $\vars_{m}$ denote the set of all minimal variables. Our strategy for finding a small $\nu$ is to ``zap'' minimal variables, and propagate the change to (interpretations of) non-minimal variables. To this end, it is convenient to translate every term to an ``equivalent'' one with only minimal variables. The notion of equivalence is based on unifiability under $\mu$. The set of all such terms that are equivalent to terms in $\constst$ is defined as follows. 

\begin{definition}
    $\hatst \coloneqq \{t \mid \varsof(t) \cap \dommu \subseteq \vars_{m}, \text{ either } t \in \vars_{m} \text{ or } \exists{u} \in \stnonvars: t \approx u\}$. 
\end{definition}

\begin{lemma}\label{lem:st-to-hatst} 
    For every $t \in \constst$ with $T \DYderives \mu(t)$, there is $\expplus{t} \in \hatst$ such that: $T \DYderives \mu(\expplus{t})$; $t \approx \expplus{t}$; and for all $x \in \vars_{m}$, $\posof{x}{\expplus{t}} \subseteq \abstractable(T \cup \dommu, \expplus{t})$. 
\end{lemma}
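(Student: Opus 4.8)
The plan is to prove the statement by strong induction on $|\mu(t)|$, the size of the instantiated term $\mu(t)$, carrying along the strengthened hypothesis that, in addition to the three stated conclusions, we also have $T \cup \dommu \DYderives \expplus{t}$. Throughout I use the dichotomy that each $\approx$-class is either \emph{minimal} (it meets no term of $\stnonvars$, so every $\dommu$-variable in it is minimal) or \emph{grounded} (it contains some $u \in \stnonvars$, so every $\dommu$-variable in it is non-minimal); since a compound term or a non-$\dommu$ term of $\constst$ already lies in $\stnonvars$, a minimal class can only be represented by a minimal variable. The construction then branches as follows. If $t$ is a minimal variable, set $\expplus{t} \coloneqq t$. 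Otherwise $t$ lies in a grounded class, and I split on the shape of a normal derivation of $T \DYderives \mu(t)$: if $t = \func(t_{1}, \ldots, t_{k})$ is compound and this derivation ends in a constructor (so each $T \DYderives \mu(t_{i})$), I recurse and set $\expplus{t} \coloneqq \func(\expplus{t_{1}}, \ldots, \expplus{t_{k}})$; in every remaining case (a non-minimal variable, a name, or a compound term reachable only by destructors) I collapse to the fully instantiated image $\expplus{t} \coloneqq \mu(t)$, which is free of $\dommu$-variables. The intuition is that recursion exposes exactly the minimal variables sitting at constructor (hence abstractable) positions, while everything below a destructor step is frozen into a $\dommu$-variable-free chunk.

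Verifying the conclusions is then case-driven. Membership $\expplus{t} \in \hatst$ in the collapse case is immediate: $\mu(t)$ carries no $\dommu$-variables, and since $t$ is grounded we have $t \approx u$ for some $u \in \stnonvars$, whence $\mu(t) \approx u$ by reflexivity of $\approx$ on the derivable term $\mu(t)$ (giving $t \approx \mu(t)$ via $\rneq$) and transitivity; in the recursive case it follows from the induction hypothesis on the children together with $t \approx \expplus{t}$. The equivalence $t \approx \expplus{t}$ is reflexive in the collapse case and, in the recursive case, is obtained by applying $\rncons$ to the child equalities $t_{i} \approx \expplus{t_{i}}$ supplied by the induction hypothesis (crucially, $\rncons$ carries no derivability side condition, so no spurious $\rneq$ obligations arise on unchanged siblings). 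Derivability $T \DYderives \mu(\expplus{t})$ and the strengthened claim $T \cup \dommu \DYderives \expplus{t}$ hold by weakening in the collapse case and by a single constructor application to the children in the recursive case; Lemma~\ref{lem:dcpure} supplies all the ground derivabilities (e.g.\ $T \DYderives \mu(u)$) that the case analysis appeals to.

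The delicate point, and where I expect the real care to be needed, is the abstractability conclusion for minimal variables, the other clauses being comparatively mechanical. In the collapse case it is vacuous, as $\mu(t)$ has no minimal variable. In the recursive case a minimal variable $x$ occurs at a position $i \cdot p$ of $\expplus{t}$ with $x$ at position $p$ of $\expplus{t_{i}}$; to conclude $i \cdot p \in \abstractable(T \cup \dommu, \expplus{t})$ one must show that every prefix-sibling subterm along the path is $T \cup \dommu$-derivable. This is exactly what the strengthened induction hypothesis $T \cup \dommu \DYderives \expplus{t_{j}}$ (for all children $j$) buys us: the root $\expplus{t}$ and all top-level children are derivable by the constructor step, while the positions interior to $\expplus{t_{i}}$ are handled by the induction hypothesis's own abstractability clause, so the required set $\mathbb{Q}_{i\cdot p}$ consists entirely of derivable subterms. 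Packaging this bookkeeping (and checking that the case split is exhaustive and that $|\mu(t_{i})| < |\mu(t)|$ keeps the induction well-founded) via Lemma~\ref{lem:absderiv} is the crux; once it is in place, the three conclusions and the strengthened hypothesis close the induction.
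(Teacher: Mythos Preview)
Your argument is correct and establishes the lemma as stated, but it follows a genuinely different route from the paper. The paper does not induct on $|\mu(t)|$; instead it introduces a relation $\prec$ on $\dommu$ (with $x \prec y$ iff some $r \in \stnonvars$ containing $x$ satisfies $r \approx y$), proves $\prec$ acyclic using consistency of $E$, defines a rank from $\prec$, and inducts on $(\rank(t),|t|)$. For a non-minimal variable $x$ the paper picks $u \in \stnonvars$ with $x \approx u$ and sets $\expplus{x} \coloneqq \expplus{u}$ recursively, whereas you collapse directly to $\mu(x)$. Your approach is more elementary: it avoids the acyclicity argument entirely and does not need consistency of $E$; the case split on the last rule of a normal proof of $T \DYderives \mu(t)$, together with the strengthened invariant $T \cup \dommu \DYderives \expplus{t}$, is a clean way to get the abstractability clause. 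The trade-off is that the paper's construction enjoys an extra property not recorded in the lemma statement: for non-minimal $x$ one has $\expplus{x} = \expplus{u}$ with $u \in \stnonvars$, hence $\nu(x) = \nu_{m}(\expplus{x}) = \nu(u)$, which is exactly what the paper invokes immediately afterwards to argue that $\nu$ is $|\constst|$-bounded. With your $\expplus{x} = \mu(x)$ that identity is lost, so while your proof of the lemma is sound, plugging your $\expplus{}$ into the subsequent smallness argument would require a separate justification.
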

\begin{proof}
    For $x, y \in \dommu$, $x \prec y$ iff $\exists{r}\in \stnonvars[x \in \subterms(r) \text{ and } r \approx y]$. 

    We now show that $\prec$ is acyclic. Towards this, we claim that if $x \prec y$ and $y \prec z$, then there is some term $a$ (not necessarily in $\constst$) s.t.\ $\mu(x)$ is a proper subterm of $\mu(a)$ and $a \approx z$. Extending this reasoning, we see that if $x \prec^{+} x$, we have some term $a$ such that $\mu(x)$ is a proper subterm of $\mu(a)$ and $(T;E) \eqderives \equals{\mu(a)}{\mu(x)}$. But $E$ is consistent, which means that there is some $\lambda$ s.t.\ $\lambda(\mu(a)) = \lambda(\mu(x))$. But this is incompatible with $\mu(x)$ being a proper subterm of $\mu(a)$. Thus $\prec$ is acyclic. 

    We now prove the claim. Suppose $x \prec y$ and $y \prec z$. Then there exists $r,s \in \stnonvars$ such that $x \in \subterms(r)$, $(T;E) \eqderives \equals{\mu(r)}{\mu(y)}$, $y \in \subterms(s)$, and $(T;E) \eqderives \equals{\mu(s)}{\mu(z)}$. Let $a = \replsubtermat{s}{\posof{y}{s}}{r}$. We see that $\mu(x)$ is a proper subterm of $\mu(a)$. From the abstractability conditions satisfied by $\mu$ and the derivability of $\mu(x)$ for all $x \in \dommu$, we can justify the applications of $\rnsubst$ necessary to obtain $(T;E) \eqderives \equals{\mu(a)}{\mu(z)}$ and thus $a \approx z$. 

    Since $\prec$ is acyclic, we can define a notion of \emph{rank} of variables: $\rank(x) = \max\{\rank(y) \mid y \prec^{+} x\} + 1$. For a term $u \in \stnonvars$, we define $\rank(u) = \max\{\rank(x) \mid x \in \varsof(u) \cap \dommu\}$. It is easy to verify that if $u \in \stnonvars$ and $x \approx u$, then $\rank(x) > \rank(u)$. It is also easy to see that if $x \in \vars_{m}$, then $x \in \dommu$ has rank $0$.  

    Having set up this machinery, we prove the lemma by induction on $\delta(t) = (\rank(t), |t|)$. First fix an ordering on $\hatst$. For $\delta(t) = (0,0)$, we have that $t$ is a variable $x$ and $\rank(x) = 0$. We have two cases to consider.
    \begin{itemize} 
        \item {\bf $x \in \vars_{m}$:} Choose $\expplus{x} = x$.
        \item {\bf $x \notin \vars_{m}$:} This means that there is some $u \in \stnonvars$ s.t.\ $x \approx u$. But since $\rank(x) = 0$, $\varsof(u) \cap \dommu = \emptyset$ for each such $u$. Choose $\expplus{x}$ to be the earliest such $u$ (according to the ordering on $\hatst$). Clearly $(T;E) \vdash \equals{\mu(x)}{\mu(\expplus{x})}$, and by Lemma~\ref{lem:dcpure}, $T \DYderives \mu(\expplus{x})$. Finally $\varsof(\expplus{x}) \cap \dommu = \emptyset$, so it is vacuously true that $\posof{y}{\expplus{x}} \subseteq \abstractable(T \cup \dommu, \expplus{x})$ for all $y \in \vars_{m}$. 
    \end{itemize}
    So suppose $\delta(t) > (0,0)$ and that the theorem is true for all $u$ such that $\delta(u) < \delta(t)$. There are two cases to consider:
    \begin{itemize}
        \item {$t$ is a variable, say $x$:} Then $\rank(x) > 0$, and there is $u \in \stnonvars$ s.t.\ $x \approx u$, whence $\rank(u) < \rank(x)$. Pick the earliest such $u \in \hatst$. By IH there is $\expplus{u}$, and we define $\expplus{x} = \expplus{u}$. Since $x \approx u$ and $u \approx \expplus{u}$, we have $x \approx \expplus{x}$, by transitivity.
        \item {$t$ is not a variable:} For each $y \in \varsof(t) \cap \dommu$, there is $\expplus{y}$. We obtain $\expplus{t}$ by replacing each $y$ by $\expplus{y}$. Clearly $\varsof(\expplus{t}) \cap \dommu \subseteq \vars_{m}$. Also since all variables appear in abstractable positions of $t$, we can justify the relevant applications of $\rnsubst$ to justify $t \approx \expplus{t}$. Finally, if $z$ appears in an abstractable position in $r$ and $y$ appears in an abstractable position in $s$, then $z$ appears in an abstractable position in $\replsubtermat{s}{\posof{y}{s}}{r}$. Thus the abstractability part of the statement is also fulfilled.  \qedhere
    \end{itemize}
\end{proof} 

We now define the substitution $\nu$ as follows. Assume that there is some $\fixedname \in T \cap \names$ such that $\fixedname \notin \subterms(E \cup \{\alpha\}) \cup \subterms(\rng(\mu))$.\footnote{Thus $\fixedname$ is a ``spare name'' that does not occur in any of the derivations under consideration.} Let $\nu_{m}$ be the substitution that maps each $x \in \vars_{m}$ to $\fixedname$. For all $x \in \dommu: \nu(x) = \nu_{m}(\expplus{x})$. Notice that for all $x \in \dom(\nu)$, either $\nu(x) = \fixedname$ or there is $u \in \stnonvars$ s.t.\ $\nu(x) = \nu(u)$. Thus we can show that $\nu$ is $|\constst|$-bounded following the proof of Theorem~\ref{thm:vlambda-small}. To complete the proof of Theorem~\ref{thm:deriv-witness-bounds}, we just need to show that $\nu$ preserves derivability. This is proved in Theorem~\ref{thm:nu-simulate-mu}, the main result of this section. But first we state a useful observation. 
\begin{obs}\label{obs:constst-not-mvars}
    \phantom{a}
    \begin{enumerate}
        \item For $x \in \dommu$, if $\mu(x) \in \constst$ then $x \notin \vars_{m}$. 
        \item If $t \in \hatst$ and $\mu(t) \in \constst$, then $\varsof(t) \cap \dommu = \emptyset$ and $\mu(t) = t$.  
    \end{enumerate}
\end{obs}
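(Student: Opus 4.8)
The plan is to prove the second item by reducing it to the first, exploiting the fact that $\constst$ is closed under subterms. This closure is immediate from the defining expression $\constst \coloneqq \bigcup_{i \leq n}\{\subterms(T_{i} \cup U_{i}) \cup \subterms(E_{i} \cup F_{i})\}$, a union of subterm-closed sets. So I would first settle the first item, relying on the two facts recorded just before the observation: for every $x \in \dommu$ we have $T \DYderives \mu(x)$, and $\varsof(\mu(x)) \cap \dommu = \emptyset$ (because variables of $\mu(x)$ lie in $\varsof(T)$, which is disjoint from $\dommu$).

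For the first item, suppose $x \in \dommu$ and $\mu(x) \in \constst$; I want to exhibit a $u \in \stnonvars$ with $x \approx u$, and the natural candidate is $u = \mu(x)$ itself. First I would check $\mu(x) \in \stnonvars$: since $\varsof(\mu(x)) \cap \dommu = \emptyset$, the term $\mu(x)$ cannot be a variable belonging to $\dommu$, so $\mu(x) \in \constst \setminus \dommu = \stnonvars$. Next I would verify $x \approx \mu(x)$, i.e.\ $T; E \eqderives \equals{\mu(x)}{\mu(\mu(x))}$. The point is that $\mu$ is idempotent on $\mu(x)$: because $\dom(\mu) = \dommu$ and $\varsof(\mu(x)) \cap \dommu = \emptyset$, we have $\mu(\mu(x)) = \mu(x)$, so the required sequent collapses to $T; E \eqderives \equals{\mu(x)}{\mu(x)}$, which holds by $\rneq$ since $T \DYderives \mu(x)$. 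Thus $\mu(x) \in \stnonvars$ witnesses $x \approx \mu(x)$, so $x$ is not minimal, i.e.\ $x \notin \vars_{m}$.

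For the second item, let $t \in \hatst$ with $\mu(t) \in \constst$, and suppose toward a contradiction that there is some $x \in \varsof(t) \cap \dommu$. Since $t \in \hatst$ guarantees $\varsof(t) \cap \dommu \subseteq \vars_{m}$, we get $x \in \vars_{m}$. But $x \in \varsof(t)$ means $\mu(x)$ occurs as a subterm of $\mu(t)$, and as $\mu(t) \in \constst$ and $\constst$ is subterm-closed, $\mu(x) \in \constst$. Applying the first item yields $x \notin \vars_{m}$, contradicting $x \in \vars_{m}$. Hence $\varsof(t) \cap \dommu = \emptyset$. Finally, because $\mu$ has domain $\dommu$ and every variable of $t$ lies outside $\dommu$, the substitution $\mu$ fixes $t$ pointwise, so $\mu(t) = t$.

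The step demanding the most care is establishing $x \approx \mu(x)$ in the first item: it uses reflexivity of the partial equivalence relation $\approx$, which is available only because $T \DYderives \mu(x)$, together with the idempotence $\mu(\mu(x)) = \mu(x)$, which itself hinges on $\varsof(\mu(x)) \cap \dommu = \emptyset$. The remainder is bookkeeping, the one structural input being the subterm-closure of $\constst$ that lets the second item feed a subterm of $\mu(t)$ back into the first item.
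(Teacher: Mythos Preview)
Your proof is correct and matches the paper's own argument essentially step for step: for item~1 you take $u=\mu(x)$ as the witness of non-minimality via $\rneq$ and the idempotence $\mu(\mu(x))=\mu(x)$, and for item~2 you use subterm-closure of $\constst$ to push each $\mu(x)$ back into item~1---exactly what the paper does (it asserts ``for every $x\in\varsof(t)\cap\dommu$, $\mu(x)\in\constst$'' without further comment, which is the same subterm-closure step). The only cosmetic point is that you justify subterm-closure by quoting the main-text Definition of $\constst$, whereas this Observation sits in the appendix where $\constst$ is set up abstractly for the passive-intruder problem; but the intended $\constst$ there is also a union of $\subterms(\cdot)$ sets, so the closure holds and the paper's proof relies on it just as yours does.
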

\begin{proof}
    \phantom{a} 
    \begin{enumerate}
        \item Let $\mu(x) = t \in \constst$. Since $\varsof(t) \cap \dommu = \emptyset$, we have that $t \notin \dommu$ and $\mu(t) = t$. Thus $t \in \stnonvars$, and $\equals{\mu(x)}{\mu(t)}$ is derivable using the $\rneq$ rule, i.e., $x \approx t$. Therefore $x \notin \vars_{m}$.
        \item For every $x \in \varsof(t) \cap \dommu$, $\mu(x) \in \constst$. Thus we have $x \notin \vars_{m}$, by the previous part. But since $t \in \hatst$, we have that $\varsof(t) \cap \dommu \subseteq \vars_{m}$. The only conclusion is that $\varsof(t) \cap \dommu = \emptyset$, and thus $\mu(t) = t$. 
        \qedhere
    \end{enumerate}
\end{proof}

\begin{theorem}\label{thm:nu-simulate-mu} 
    \phantom{a}
    \begin{enumerate}
        \item For any $t \in \constst$, if $T \DYderives \mu(t)$ then $T \DYderives \nu(t)$. 
        \item For any $t, u \in \constst$, if $T; E \eqderives \equals{\mu(t)}{\mu(u)}$ then $T; E \eqderives \equals{\nu(t)}{\nu(u)}$. 
    \end{enumerate}
\end{theorem}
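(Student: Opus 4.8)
The plan is to mirror the two derivability-preservation lemmas of the active setting, Lemma~\ref{lem:dyzetazap} and Lemma~\ref{lem:eqzetazap}: passing from $\mu$ to $\nu$ should amount to a ``zap'' of the minimal variables that preserves both $\DYderives$- and $\eqderives$-derivability, with the role played by $\bigsub$ in the main text now played by $\lambda\mu$, where $\lambda$ is the ground consistency witness for $E$ used in the proof of Lemma~\ref{lem:st-to-hatst}. First I would set up a reduction to $\hatst$. Using Lemma~\ref{lem:st-to-hatst}, I replace each $t \in \constst$ by its representative $\expplus{t} \in \hatst$. Since $\nu(x) = \nu_{m}(\expplus{x})$ for $x \in \dommu$ by definition, and since for non-variable $t$ the term $\expplus{t}$ is obtained by substituting $\expplus{x}$ for each domain variable $x$ of $t$, one checks the identity $\nu(t) = \nu_{m}(\expplus{t})$. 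For part~1 this lets me work with $\expplus{t}$ directly; for part~2 I additionally use $t \approx \expplus{t}$ and $u \approx \expplus{u}$, so that by symmetry and transitivity of $\eqderives$ equality proofs the hypothesis $T;E \eqderives \equals{\mu(t)}{\mu(u)}$ transfers to $T;E \eqderives \equals{\mu(\expplus{t})}{\mu(\expplus{u})}$. Hence it suffices to prove both claims for terms all of whose domain variables are minimal, with $\nu_{m}$ in place of $\nu$.

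For part~1 I would prove: if every domain variable of $s$ lies in $\vars_{m}$ and $T \DYderives \mu(s)$, then $T \DYderives \nu_{m}(s)$, by induction on $|s|$ (unlike $\hatst$, this class of terms is closed under subterms). Names and non-domain variables are fixed by both $\mu$ and $\nu_{m}$, and a minimal variable is sent by $\nu_{m}$ to $\fixedname \in T$, which is derivable. For $s = \func(s_{0},s_{1})$ I take a \emph{normal} proof of $T \DYderives \mu(s)$: if it ends in a constructor, then $T \DYderives \mu(s_{0})$ and $T \DYderives \mu(s_{1})$, so the induction hypothesis plus one constructor step gives $T \DYderives \nu_{m}(s)$; if it ends in a destructor, the subterm property forces $\mu(s) \in \subterms(T) \subseteq \constst$, and then Observation~\ref{obs:constst-not-mvars} rules out any minimal variable of $s$ (its $\mu$-image would lie in $\constst$, contradicting minimality), so $s$ has no domain variables, $\nu_{m}(s) = \mu(s)$, and the conclusion is immediate. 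Notice this direct argument supplies exactly the form $T \DYderives a \Rightarrow T \DYderives \zap{a}$ needed later, for the $a$ that arise as $\mu$-images of $\hatst$-subterms, where $\zap{\mu(s)} = \nu_{m}(s)$.

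For part~2 I would induct on a typed normal $\eqderives$ proof of $T;E \eqderives \equals{\mu(\expplus{t})}{\mu(\expplus{u})}$, zapping both sides of every intermediate equality exactly as in Lemma~\ref{lem:eqzetazap}. The rules $\rnax, \rneq, \rnsymm, \rntrans, \rnsubst, \rnprom, \rnlint, \rnlwk$ commute with the zap: for $\rnax$ both sides lie in $\subterms(E) \subseteq \constst$ and are therefore unchanged by Observation~\ref{obs:constst-not-mvars}, while for $\rneq$ the two sides are the \emph{same} term $a$, so their zaps coincide and the instance closes by $\rneq$ once part~1 supplies $T \DYderives \zap{a}$. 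The substantive cases are $\rncons$ and $\rnproj$, where a function term $\func(a_{0},a_{1})$ and its partner $\func(b_{0},b_{1})$ must behave uniformly: either both are zappable, in which case each collapses to $\fixedname$ and the goal is closed by $\rneq$, or both are non-zappable, in which case they retain their head symbol and the same rule re-applies to the zapped subproofs (the $\DYderives$ side conditions of $\rnproj$ surviving by part~1).

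The main obstacle is precisely this dichotomy in the $\rnproj$ case: ruling out \emph{asymmetric} zapping, in which $\func(a_{0},a_{1})$ would collapse to $\fixedname$ while $\func(b_{0},b_{1})$ kept its head symbol, so that no single rule could produce $\equals{\zap{a_{i}}}{\zap{b_{i}}}$. In the active setting this is Lemma~\ref{lem:projcase}, which relies on typed-proof normalization (Definition~\ref{def:welltypedeq} and Theorem~\ref{thm:rustur-eq}). Here I would re-establish the analogue in the passive, $\mu$-only setting: fixing the ground $\lambda$ with $\lambda(\mu(r)) = \lambda(\mu(s))$ for every $\equals{r}{s} \in E$, Lemma~\ref{lem:equndertheta} yields $\lambda\mu(a) = \lambda\mu(b)$ for any $\eqderives$-equal pair, so that zappability is a property of $\lambda\mu$-equivalence classes (the exact analogue of Observation~\ref{obs:tzap-iff-uzap}) and the two immediate arguments of a $\rncons$ or $\rnproj$ are necessarily zappable together. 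Proving that $\lambda\mu$-based typedness holds for normal $\eqderives$ proofs, and hence obtaining the clean projection dichotomy, is the crux; the remaining case analysis is routine bookkeeping.
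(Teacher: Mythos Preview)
Your proposal takes a considerably heavier route than the paper for part~2, and the extra machinery you invoke is unnecessary here.

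The paper's proof avoids defining any zap operation or typed-proof notion in the passive setting. It inducts directly on a normal $\eqderives$ proof of $T;E \vdash \equals{\mu(r)}{\mu(s)}$ for $r,s \in \hatst$, and the key move is to bundle the cases $\rnax$, $\rneq$, and $\rnproj$ together: in all three, the subterm property for normal $\eqderives$ proofs (recall that no subproof ending in $\rnproj$ contains $\rncons$) forces $\mu(r), \mu(s) \in \subterms(T;E) \subseteq \constst$. Then Observation~\ref{obs:constst-not-mvars}(2) immediately gives $\varsof(r,s) \cap \dommu = \emptyset$, so $\nu_m(r) = r = \mu(r)$ and $\nu_m(s) = s = \mu(s)$, and the \emph{original} proof $\pi$ already derives $\equals{\nu_m(r)}{\nu_m(s)}$. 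Your worry about asymmetric zapping under $\rnproj$ thus dissolves: there is nothing to zap. The same subterm reasoning handles the intermediate terms $v_i$ in $\rntrans$ (each appears in a non-$\rncons$ subproof, hence $v_i \in \constst$, hence $v_i \in \hatst$ with $\mu(v_i) = v_i$). Only $\rncons$ is substantive: if $r,s$ both carry the head symbol, recurse on the components; if one (say $r$) is a variable, then $r \in \vars_m$, and minimality of $r$ together with $r \approx s$ forces $s \in \vars_m$ as well (otherwise the $\hatst$ condition gives $s \approx a$ for some $a \in \stnonvars$, and transitivity yields $r \approx a$), so both collapse to $\fixedname$.

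Your plan to port the full zap/typed-proof apparatus from the active case could plausibly be made to work, but you leave its crux (that normal $\eqderives$ proofs are ``$\lambda\mu$-typed'') as a bare assertion, and there is a real alignment issue you do not address: minimality is defined via $\approx$ (provable equality under $\mu$), whereas your proposed zappability uses $\lambda\mu$-equality, which is strictly coarser. In particular, a non-variable $r \in \hatst$ with $r \approx u \in \stnonvars$ could still satisfy $\lambda\mu(r) = \lambda\mu(x)$ for some minimal $x$ without contradicting the minimality of $x$, so $\zap{\mu(r)} = \fixedname$ need not coincide with $\nu_m(r)$. The paper's direct argument sidesteps all of this. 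Your argument for part~1 is fine and is essentially a spelled-out variant of the paper's one-line appeal to abstractability.
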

\begin{proof}
    By Lemma~\ref{lem:st-to-hatst}, it suffices to prove the following. Let $r,s \in \hatst$ such that $\forall{x} \in \vars_{m}$, $\posof{x}{(r,s)} \subseteq \abstractable(T \cup \dommu, (r,s))$. If $T \DYderives \mu(r)$ then $T \DYderives \nu_{m}(r)$; and if $T;E \eqderives \equals{\mu(r)}{\mu(s)}$ then $T;E \derives \equals{\nu_{m}(r)}{\nu_{m}(s)}$. 

    \begin{enumerate}
        \item Suppose $T \DYderives r$ for $r$ as above. Since all positions of variables from $\dommu$ occurring in $r$ are abstractable w.r.t. $T \cup \dommu$, and since $T \cup \{\fixedname\} \DYderives \fixedname$, we can easily prove by induction on the size of terms that $T \cup \fixedname \DYderives \nu_{m}(r)$. 
        \item Suppose $T;E \eqderives \equals{\mu(r)}{\mu(s)}$ for $r,s$ as above. Let $\pi$ be a normal proof of $T;E \vdash \equals{\mu(r)}{\mu(s)}$ with last rule $\rnrule$. We prove the desired statement by induction on the structure of $\pi$. There are the following cases to consider. 
        \begin{itemize}
            \item {$\rnrule \in \{\rnax, \rneq, \rnproj\}$:} Three cases arise: $\equals{\mu(r)}{\mu(s)} \in E$, and thus $\mu(r), \mu(s) \in \constst$. Or $\mu(r) = \mu(s)$ and $T \DYderives \mu(r)$ via a proof ending in ${\sf ax}$ or a destructor rule, and thus $\mu(r), \mu(s) \in \subterms(T) \subseteq \constst$. Or by subterm property for normal $\eqderives$-proofs $\mu(r), \mu(s) \in \subterms(T;E) \subseteq \constst$. Thus $\mu(r), \mu(s) \in \constst$ in all three cases. By Observation~\ref{obs:constst-not-mvars}, $\varsof(r,s) \cap \dommu = \emptyset$. Thus $\nu_{m}(r) = r = \mu(r)$ and $\nu_{m}(s) = s = \mu(s)$. Therefore $\pi$ itself is a proof of $\equals{\nu_{m}(r)}{\nu_{m}(s)}$. 
            
            \item {$\rnrule = \rntrans$:} Suppose the immediate subproofs are $\pi_{1}, \ldots, \pi_{n}$, with each $\pi_{i}$ deriving $\equals{v_{i-1}}{v_{i}}$. Let $\mu(r) = v_{0}$ and $\mu(s) = v_{n}$. Since no $\pi_{i}$ ends in $\rntrans$ and no two adjacent $\pi_{i}$'s end in $\rncons$, each $v_{i}$ (for $0 < i < n$) appears in at least one proof ending in $\rnax$, $\rneq$ or $\rnproj$. Thus, by the subterm property, $v_{i} \in \subterms(T;E) \subseteq \constst$ for $0 < i < n$. Since $\varsof(T;E) \cap \dommu = \emptyset$, it follows that $v_{i} \in \hatst$ and $\mu(v_{i}) = v_{i}$. Thus we can view each $\pi_{i}$ as deriving $\equals{\mu(r_{i-1})}{\mu(r_{i})}$, where $r_{i-1}, r_{i} \in \hatst$ (taking $r_{0}$ and $r_{n}$ to be $r$ and $s$). By IH, there are proofs $\varpi_{1}, \ldots, \varpi_{n}$, with each $\varpi_{i}$ deriving $\equals{\nu_{m}(r_{i-1})}{\nu_{m}(r_{i})}$. By composing them using $\rntrans$, we get a proof of $T;E \vdash \equals{\nu_{m}(r)}{\nu_{m}(s)}$, as desired.

            \item {$\rnrule = \rncons$:} Suppose $r = \func(r_{1}, \ldots, r_{n})$ and $s = \func(s_{1}, \ldots, s_{n})$. Each $r_{i}, s_{i} \in \hatst$, and the immediate subproofs are $\pi_{1}, \ldots, \pi_{n}$, with each $\pi_{i}$ deriving $\equals{\mu(r_{i})}{\mu(s_{i})}$. By IH we have proofs $\varpi_{1}, \ldots, \varpi_{n}$, with each $\varpi_{i}$ proving $\equals{\nu_{m}(r_{1})}{\nu_{m}(s_{1})}$. We can compose them with the $\rncons$ rule to get the desired proof of $\equals{\nu_{m}(r)}{\nu_{m}(s)}$. 

            Suppose, on the other hand, that $r$ is a variable. Since $r \in \hatst$, $r \in \vars_{m}$. Now $s \in \hatst$, so either $s \in \vars_{m}$ or there is $a \in \stnonvars$ with $s \approx a$. But in the second case, $r \approx a$ (by symmetry and transitivity), which cannot happen for a minimal variable $r$. Therefore $s \in \vars_{m}$. And we have $\nu_{m}(r) = \nu_{m}(s) = \fixedname \in T$, so there is a proof of $T, E \eqderives \equals{\nu_{m}(r)}{\nu_{m}(s)}$ ending in $\rneq$. 

            We have a similar argument in case $s$ is a variable, thereby proving the theorem. \qedhere
        \end{itemize}
    \end{enumerate}
\end{proof}

\section{Normalization and subterm property for \texorpdfstring{$\eqderives$}{derivation of equalities}}
\label{app:normalization}

Suppose $E \cup \{\alpha\}$ consist only of atomic formulas and $\pi$ is a proof of $T; E \eqderives \alpha$. We say that $\pi$ is \emph{normal} if the following hold.
	\begin{enumerate}
		\item All $\DYderives$ subproofs are normal. 
		\item The premise of $\rnsymm$ can only be the conclusion of $\rnax$ or $\rnprom$.
		\item The premise of $\rneq$ can only be the conclusion of a destructor rule.
		\item No premise of a $\rntrans$ is of the form $\equals{a}{a}$, or the conclusion of a $\rntrans$.
		\item Adjacent premises of a $\rntrans$ are not conclusions of $\rncons$.
		\item No premise of $\rnlint$ is the conclusion of $\rnlint$ or $\rnlwk$.
		\item No subproof ending in $\rnproj$ contains $\rncons$. 
	\end{enumerate}

A set $E$ of atomic formulas is said to be \emph{consistent} if there is a $\lambda$ s.t.\ $\lambda(t) = \lambda(u)$ for each $\equals{t}{u} \in E$, and $\lambda(t) \in \{t_{1},\ldots,t_{n}\}$ for each $t\listmemb{[t_{1},\ldots,t_{n}]} \in E$. 

\begin{table}[!t]
    \begin{center}
        \bgroup
        \def\arraystretch{1.25}
        \begin{tabular}{|l|c|}
            \hline
            \multirow{2}{*}{R1} 
            & $\rneq(\func(\pi_{1},\pi_{2}))$ \\  
            & $\rncons_{\func}(\rneq(\pi_{1}), \rneq(\pi_{2}))$ \\
            \hline 
            \multirow{2}{*}{R2} 
            & $\rnsymm(\rneq(\pi))$ \\ 
            & $\rneq(\pi)$ \\
            \hline 
            \multirow{2}{*}{R3} 
            & $\rnsymm(\rnsymm(\pi))$ \\
            & $\pi$ \\ 
            \hline 
            \multirow{2}{*}{R4} 
            & $\rnsymm(\rnrule(\pi_{1}, \ldots, \pi_{k}))$ \\ 
            & $\rnrule(\rnsymm(\pi_{1}), \ldots, \rnsymm(\pi_{k}))$ \\ 
            \hline 
            \multirow{2}{*}{R5}
            & $\rntrans(\pi_{1}, \ldots, \pi_{i-1}, \varpi, \pi_{i}, \ldots, \pi_{r-1})$ \\
            & $\rntrans(\pi_{1}, \ldots, \pi_{i-1}, \pi_{i}, \ldots, \pi_{r-1})$ \\ 
            \hline 
            \multirow{2}{*}{R6} 
            & $\rntrans(\pi_{1}, \ldots, \rntrans(\pi^{1}_{i}, \ldots, \pi^{k}_{i}), \ldots, \pi_{r-1})$ \\
            & $\rntrans(\pi_{1}, \ldots, \pi^{1}_{i}, \ldots, \pi^{k}_{i}, \ldots, \pi_{r-1})$ \\
            \hline 
            \multirow{2}{*}{R7} 
            & $\rntrans(\pi_{1}, \ldots, \rncons(\pi^{1}_{i-1}, \pi^{2}_{i-1}), \rncons(\pi^{1}_{i}, \pi^{2}_{i}), \ldots, \pi_{r-1})$ \\
            & $\rntrans(\pi_{1}, \ldots, \rncons(\rntrans(\pi^{1}_{i-1}, \pi^{1}_{i}), \rntrans(\pi^{2}_{i-1}, \pi^{2}_{i})), \ldots, \pi_{r-1})$ \\
            \hline 
            \multirow{2}{*}{R8} 
            & $\rnproj_{j}(\rncons(\pi_{1},\pi_{2}))$ \\
            & $\pi_{j}$ \\
            \hline 
            \multirow{2}{*}{R9} 
            & $\rnproj_{j}(\rntrans(\pi_{1}, \ldots,\pi_{i-1}, \rncons_{\func}(\pi^{1}_{i},\pi^{2}_{i}), \pi_{i+1}, \ldots, \pi_{r-1}))$ \\ 
            & $\rntrans(\rnproj_{j}(\rntrans(\pi_{1}, \ldots,\pi_{i-1})), \pi^{j}_{i}, \rnproj_{j}(\rntrans(\pi_{i+1}, \ldots, \pi_{r-1})))$ \\
            \hline 
            \multirow{2}{*}{R10} 
            & $\rnlint(\pi_{1}, \ldots, \pi_{k-1}, \rnlint(\pi_{k}, \ldots, \pi_{m}), \pi_{m+1}, \ldots, \pi_{n})$ \\
            & $\rnlint(\pi_{1}, \ldots, \pi_{k-1}, \pi_{k}, \ldots, \pi_{m}, \pi_{m+1}, \ldots, \pi_{n})$ \\ 
            \hline 
            \multirow{2}{*}{R11} 
            & $\rnlint(\pi_{1}, \ldots, \rnweak(\pi_{i}), \ldots, \pi_{n})$ \\ 
            & $\rnweak(\pi_{i})$ \\
            \hline
        \end{tabular}
        \egroup
    \end{center}
    \caption{Proof transformation rules. The proof represented by the first line in each row is transformed to the proof represented by the second line. In R4, $\rnrule \in \{\rntrans, \rnproj, \rncons\}$. In R5, $\concof(\varpi)$ is assumed to be of the form $\equals{a}{a}$.}        
    \label{tab:rwrules}
\end{table}

   We next prove normalization for $\eqderives$ proofs (with a consistent LHS). We present proof transformation rules in Table~\ref{tab:rwrules}. To save space, we use \emph{proof terms} -- $\rnrule (\pi_{1}, \ldots, \pi_{n})$ denotes a proof $\pi$ with last rule $\rnrule$ and immediate subproofs $\pi_{1}, \ldots, \pi_{n}$. It is assumed that the derivations are from a consistent $(T;E)$. R1 is applicable when $\func$ is a constructor rule, and ensures that $\DYderives$ subproofs do not end in a constructor rule. R2 and R3 eliminate some occurrences of $\rnsymm$, while R4 pushes $\rnsymm$ up towards the axioms. R5 and R6 ensure that no premise of $\rntrans$ is the conclusion of $\rneq$ or $\rntrans$. R7 ensures that adjacent premises of $\rntrans$ are not the result of $\rncons$. R8 simplifies proofs where $\rnproj$ follows $\rncons$. We will discuss R9 later. R10 ensures that the conclusion of $\rnlint$ is not a premise of $\rnlint$. In R11, $\pi_{i}$ proves an equality $\equals{v}{n}$, and it is weakened to a list membership of the form $v \listmemb \ell'$, but by consistency, even after intersection, the conclusion must be of the form $v\listmemb{\ell}$ where $\lambda(v)$ is an element of $\ell$ for some $\lambda$. Thus we can directly apply weakening to $\pi_{i}$ to get the same conclusion. 
        
    R9 requires some explanation. Let $\pi_{i}$ be the proof $\rncons_{\func}(\pi^{1}_{i}, \pi^{2}_{i})$, and let $\concof(\pi_{j})$ be $\equals{t_{j}}{t_{j+1}}$, for $1 \leq j < r$. We see that $\concof(\rntrans(\pi_{1}, \ldots, \pi_{r-1}))$ is $\equals{t_{1}}{t_{r}}$. Since $\rnproj$ is applied on this, there is some constructor $\gunc$ such that $t_{e} = \gunc(t^{1}_{e},  t^{2}_{e})$ for $e \in \{1, r\}$. Since $\pi_{i}$ ends in $\rncons_{\func}$, we see that $t_{e} = \func(t^{1}_{e}, t^{2}_{e})$ for $e \in \{i, i+1\}$. But $\equals{t_{1}}{t_{i}}$ is provable from $(T;E)$, which is consistent. Therefore it has to be the case that $\func = \gunc$. Thus we see that for all $e \in \{1, i, i+1, r\}$, $t_{e} = \func(t^{1}_{e}, t^{2}_{e})$. So we can rewrite the LHS of R9 to the RHS to get a valid proof. Note that we can apply $\rnproj$ on $\equals{t_{1}}{t_{i}}$ in the transformed proof since all components of $t_{1}$ and $t_{i}$ are abstractable -- for $t_{1}$ this is true because the $\rnproj$ rule was applied to $\equals{t_{1}}{t_{r}}$ in the proof on the LHS; and for $t_{i}$ this follows from the fact that $\pi^{1}_{i}$ (resp.\ $\pi^{2}_{i}$) derives $\equals{t^{1}_{i}}{t^{1}_{i+1}}$ (resp.\ $\equals{t^{2}_{i}}{t^{2}_{i+1}}$), and so by Lemma~\ref{lem:dcpure}, $T \DYderives \{t^{1}_{i}, t^{2}_{i}\}$. For a similar reason, we can apply $\rnproj$ on $\equals{t_{i+1}}{t_{r}}$.

\begin{theorem} If $(T;E) \eqderives \alpha$ then there is a normal proof of $(T;E) \vdash \alpha$ in the $\eqderives$ system.
\end{theorem}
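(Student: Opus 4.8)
The plan is to read the eleven rewrites of Table~\ref{tab:rwrules} as a \emph{proof-rewriting system} and prove two facts: that the rewriting relation is terminating, and that any proof to which no rewrite applies already satisfies all seven clauses of the definition of a normal proof. These together give the theorem: starting from an arbitrary $\eqderives$ proof of $(T;E) \vdash \alpha$, I would rewrite until reaching an irreducible proof, which is then normal. Throughout I use the standing assumption that $(T;E)$ is consistent (as assumed just before the statement); this is exactly what licenses the two semantic rewrites R9 (forcing $\func=\gunc$ in the discussion preceding the theorem, with Lemma~\ref{lem:dcpure} supplying $T \DYderives \{t^1_i,t^2_i\}$ so the new $\rnproj$ applications are enabled) and R11 (validity of the direct weakening). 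I also fold in normalization of the term-level subproofs: every premise of $\rneq$ and $\rnprom$ is a $\DYderives$ derivation, so by the locality assumptions on $\DYderives$ each may be taken normal, and R1 removes the remaining bad pattern in which a constructor feeds the $\DYderives$ premise of an $\rneq$; this yields clauses~(1) and~(3).

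First I would verify, rule by rule, that each rewrite preserves the end-sequent and the validity of the proof, in particular that the side condition of $\rnproj$ (abstractability of all immediate subterms) and of $\rnlwk$ remain satisfied after rewriting; for R9 this is the argument already spelled out before the theorem.

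The heart of the argument, and the step I expect to be the main obstacle, is termination, since several rules do not decrease proof size: R4 pushes one $\rnsymm$ above a branching rule and so duplicates it, R1 turns an $\rneq$ into a $\rncons$ of two $\rneq$s, and R9 replaces one $\rnproj$ over a $\rntrans$ by a $\rntrans$ of two $\rnproj$s. I would therefore use a \emph{lexicographic} measure whose components are ordered so that every size-increasing rule strictly decreases a strictly higher-priority component while being free to inflate lower ones. A workable choice, from highest to lowest priority:
\begin{itemize}
\item $c_1$, the sum over all $\rneq$ nodes of the size of the $\DYderives$ subproof feeding it; R1 strictly decreases $c_1$ (it splits the premise into its immediate subterms), and no other rule creates $\rneq$ nodes or enlarges such subproofs;
\item $c_2$, the sum over all $\rnsymm$ nodes of the number of nodes lying above that $\rnsymm$ (towards the leaves); R2, R3, R4 strictly decrease $c_2$ (R4 moves each copy closer to the leaves, and since the count is internal to each $\rnsymm$'s own subproof the restructuring done elsewhere never raises it);
\item $c_3$, the number of $\rncons$ occurrences lying inside some subproof ending in $\rnproj$; R8 and R9 strictly decrease $c_3$ (R9 deletes the $\rncons$ directly feeding the $\rnproj$, the survivors migrating under the two smaller, already-present $\rnproj$s), and the $\rntrans$- and $\rnlint$-rules do not raise it;
\item $c_4$, a $\rntrans$-nonflatness measure, e.g.\ the multiset of term-sizes at $\rncons$-premises of $\rntrans$ together with counts of trivial or nested $\rntrans$-premises, handled by R5, R6, R7 (crucially R7 replaces $\rncons$s about $\func$-terms by $\rncons$s about their immediate subterms, which are strictly smaller, so it decreases $c_4$ despite creating new $\rntrans$ nodes);
\item $c_5$, a $\rnlint$-nesting measure for R10 and R11.
\end{itemize}
The real work is the finite but delicate bookkeeping of checking, for every rule/component pair, that no rule raises a component of strictly higher priority than the one it targets; R9 (which must sit below $c_1,c_2,c_3$ but above the $\rntrans$-count implicit in $c_4$) and R7 (whose duplication is tamed only by measuring term-size) are the cases that force the particular ordering above.

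Finally I would argue that an irreducible proof is normal, by matching each clause to the absence of a redex. A $\rnsymm$ whose premise is not an $\rnax$- or $\rnprom$-conclusion is an R2/R3/R4 redex (these are the only equality rules that can conclude such a premise), giving clause~(2); an $\rneq$ over a constructor is an R1 redex, giving clause~(3); trivial or nested $\rntrans$ premises are R5/R6 redexes and adjacent $\rncons$ premises an R7 redex, giving clauses~(4) and~(5); $\rnlint$ over $\rnlint$ or $\rnlwk$ are R10/R11 redexes, giving clause~(6). For clause~(7) I would argue by induction on subproofs, using the clauses already established and consistency: a subproof ending in $\rnproj$ derives an equality of two $\func$-terms, so its last rule can only be $\rncons$ (an R8 redex), or a flattened $\rntrans$ one of whose premises is a $\rncons$ (an R9 redex), or a rule ($\rnax$, $\rneq$, $\rnsymm$ restricted by clause~(2)) whose subproofs are already $\rncons$-free; consistency rules out the degenerate chains (a $\rnprom$-introduced name in the middle of a $\rntrans$ whose endpoints are $\func$-terms) that could otherwise bury a $\rncons$ beneath $\rnsubst$. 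Since no redex survives in an irreducible proof, all seven clauses hold, completing the passage from irreducible to normal and hence establishing the theorem.
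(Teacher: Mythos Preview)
Your overall strategy---read the rewrites of Table~\ref{tab:rwrules} as a proof-rewriting system, establish termination, and show irreducible proofs are normal---is exactly the paper's approach. The difference lies in how termination is organized, and here your single-phase lexicographic measure has a genuine gap.

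The problem is the interaction between your component $c_2$ (the $\rnsymm$-depth measure) and the rules R7 and R9. Both R7 and R9 can strictly \emph{increase} the total node count of the proof: R7 replaces two adjacent $\rncons$ premises (together $2+S$ nodes, where $S$ is the sum of the sizes of the four grandchildren) by a single $\rncons$ over two new $\rntrans$ nodes ($3+S$ nodes); R9 similarly adds nodes whenever the discarded branch $\pi^{3-j}_i$ is small. If such a redex sits inside the subproof of some $\rnsymm$ node---say inside $\rnsymm(\rntrans(\ldots,\rncons(\cdot,\cdot),\rncons(\cdot,\cdot),\ldots))$---then applying R7 increases $c_2$. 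But you have R7 and R9 targeting $c_4$ and $c_3$, which sit \emph{below} $c_2$ in your lexicographic order, and neither rule touches $c_1$. So the measure does not decrease, and termination fails. Your parenthetical that ``restructuring done elsewhere never raises $c_2$'' only covers rewrites \emph{outside} the $\rnsymm$'s subproof, not rewrites inside it.

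The paper sidesteps this interaction by phase separation rather than a single global measure: first apply R2--R4 to saturation, so that every $\rnsymm$ sits directly above an $\rnax$ or $\rnprom$; then observe that none of R1, R5--R11 converts a proof ending in $\rnax$ or $\rnprom$ into one that does not, so this property is stable and R2--R4 never re-fire. With $\rnsymm$ thus neutralized, a much simpler three-component measure $(\delta_1,\delta_2,\delta_3) = (\text{total size of }\DYderives\text{ subproofs},\ \text{number of }\rncons\text{ occurrences},\ \text{proof size})$ handles all remaining rules: R1 drops $\delta_1$; R7 and R9 drop $\delta_2$ (each deletes at least one $\rncons$); R5, R6, R8, R10, R11 drop $\delta_3$ without raising $\delta_1$ or $\delta_2$.

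Your clause~(7) argument is also looser than the paper's. You case-split on the rule immediately feeding the $\rnproj$, but the forbidden $\rncons$ could be buried arbitrarily deep. The paper instead takes a \emph{maximal} $\rncons$-ending subproof $\varpi_2$ inside a $\rnproj$-ending subproof $\varpi_1$ and analyzes the rule $\rnrule$ directly above $\varpi_2$: maximality rules out $\rnrule=\rncons$; irreducibility under R4 and R8 rules out $\rnsymm$ and $\rnproj$; and if $\rnrule=\rntrans$, one examines the rule above \emph{that}, obtaining a contradiction via R4, R6, or R9 (with $\rnsubst$ excluded because its conclusion is a list-membership, which cannot lie on a path to $\rnproj$). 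This maximal-subproof device is what makes the argument go through cleanly.
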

\begin{proof}
Let $\pi$ be any proof of $(T; E) \vdash \alpha$ such that all DY subproofs of $\pi$ are normal. 
    Suppose we repeatedly apply the transformations of Table~\ref{tab:rwrules} starting with $\pi$ and reach a proof $\varpi$ on which we can no longer apply any of the rules. Then $\varpi$ satisfies clauses 1 to 6 in the definition of normal proofs (since none of the rewrite rules, in particular R1--R7 and R10--R11, apply to $\varpi$). 
            
    Clause 7 is also satisfied by $\varpi$, for the following reason. Suppose a subproof $\varpi_{1}$ ends in $\rnproj$ and $\varpi_{2}$ is a maximal subproof of $\varpi_{1}$ ending in $\rncons$. $\varpi_{2}$ is a proper subproof of $\varpi_{1}$, so there has to be a subproof of $\varpi_{1}$ of the form $\rho = \rnrule(\cdots\varpi_{2}\cdots)$. Since $\rncons$ appears as the rule above $\rnrule$, a priori, $\rnrule$ can only be one of $\{\rnsymm, \rntrans, \rnproj, \rncons\}$. But since $\varpi_{2}$ is a \emph{maximal subproof} of $\varpi_{1}$ ending in $\rncons$, $\rnrule \neq \rncons$. Since R4 and R8 cannot be applied on $\varpi$, $\rnrule \notin \{\rnsymm, \rnproj\}$. But if $\rnrule = \rntrans$, then $\rho$ is a proper subproof of $\varpi_{1}$. In particular, it is the immediate subproof of some $\rho' = \rnrule'(\cdots\rho\cdots)$. Now $\rnrule'$ cannot be $\rnsubst$, since then $\concof(\rho')$ is a list membership assertion, which cannot occur in a proof ending in $\rnproj$. $\rnrule' \neq \rncons$, as that would violate the maximality of $\varpi_{2}$. $\rnrule' \notin \{\rnsymm, \rntrans, \rnproj\}$, since then one of the rewrite rules R4, R6, R8 would apply to $\varpi$. We have ruled out all possible cases for $\rnrule'$, and thus we are forced to conclude that $\varpi_{2}$ cannot be a subproof of $\varpi_{1}$. Thus, $\rncons$ does not occur in any subproof of $\varpi$ ending in $\rnproj$, and $\varpi$ satisfies all the clauses in the definition of normal proofs. 
    
    We next show that we can always reach a stage where no transformation is enabled. To begin with, apply the rules R2--R4 until the premise of each occurrence of $\rnsymm$ is the conclusion of an $\rnax$ or a $\rnprom$. None of the other rules converts a proof ending in $\rnax$ or $\rnprom$ to one which does not, so the above property is preserved even if we apply the other rules in any order. 
    
    Associate three sizes to an $\eqderives$-proof $\pi$: 
    \begin{itemize}
    \item $\measure_{1}(\pi)$ is the sum of the sizes of the $\DYderives$ subproofs of $\pi$, 
    \item $\measure_{2}(\pi)$ is the number of $\rncons$ rules that occur in $\pi$, and 
    \item $\measure_{3}(\pi)$ is the size of the proof $\pi$ (number of nodes in the proof tree). 
    \end{itemize}
    
    We also define $\measure(\pi) \coloneqq (\measure_{1}(\pi), \measure_{2}(\pi), \measure_{3}(\pi))$.

    We now show that if $\pi'$ is obtained from $\pi$ by one application of any of the transformation rules other than R2--R4, $\measure(\pi') < \measure(\pi)$.
    \begin{itemize}
        \item If R1 is applied, $\measure_{1}(\pi') < \measure_{1}(\pi)$ and so $\measure(\pi') < \measure(\pi)$.
        \item If R7 or R9 is applied, we have $\measure_{1}(\pi') \leq \measure_{1}(\pi)$ and $\measure_{2}(\pi') < \measure_{2}(\pi)$. Therefore, $\measure(\pi') < \measure(\pi)$.
        \item If R5, R6, R8, R10 or R11 is applied, we have that $\measure_{i}(\pi') \leq \measure_{i}(\pi)$ for $i \in \{1,2\}$ and $\measure_{3}(\pi') < \measure_{3}(\pi)$. So $\measure(\pi') < \measure(\pi)$.
    \end{itemize}
    Thus, once we apply R2--R4 till they can no longer be applied, we cannot have an infinite sequence of transformations starting from any $\pi$. Hence, every proof $\pi$ can be transformed into a normal proof $\varpi$ with the same conclusion. 
\end{proof}

We state and prove the subterm property next. 

\begin{theorem}
For any normal proof $\pi$ of $T; E \eqderives \alpha$, \\
$\termsof(\pi) \subseteq \subterms(T) \cup \subterms(E\cup\{\alpha\})$, and \\
$\listsof(\pi) \subseteq \listsof(E \cup \{\alpha\}) \cup \{[n] \mid n \in \subterms(T) \cup \subterms(E\cup\{\alpha\})\}$.
If $\pi$ does not contain $\rncons$, then $\termsof(\pi) \subseteq \subterms(T) \cup \subterms(E)$ . Also, if $\pi$ does not end in $\rnweak$ and does not end in $\rnlint$, then $\listsof(\pi) \subseteq \listsof(E) \cup \{[n] \mid n \in \subterms(T) \cup \subterms(E)\}$.
\end{theorem}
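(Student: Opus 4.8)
The plan is to prove a slightly stronger, per-subproof statement by simultaneous structural induction on the normal proof $\pi$, relative to its own conclusion $c$ rather than the global $\alpha$. Concretely, I would show for every normal $\pi$ with conclusion $c$: (i) $\termsof(\pi)\subseteq\subterms(T)\cup\subterms(E)\cup\subterms(c)$; (ii) $\listsof(\pi)\subseteq\listsof(E)\cup\listsof(\{c\})\cup\{[n]\mid n\in\subterms(T)\cup\subterms(E)\cup\subterms(c)\}$; together with the two refined claims, namely (iii) if $\pi$ has no $\rncons$ then $\termsof(\pi)\subseteq\subterms(T)\cup\subterms(E)$, and (iv) if $\pi$ ends in neither $\rnweak$ nor $\rnlint$ then $\listsof(\pi)\subseteq\listsof(E)\cup\{[n]\mid n\in\subterms(T)\cup\subterms(E)\}$. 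The theorem follows by instantiating $c=\alpha$, since $\subterms(\alpha)$ and $\listsof(\{\alpha\})$ are absorbed into $\subterms(E\cup\{\alpha\})$ and $\listsof(E\cup\{\alpha\})$.

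For the term part, the base cases $\rnax$ (conclusion in $E$) and $\rneq$ (whose $\DYderives$ premise ends in a destructor by normality, so its term lies in $\subterms(T)$ by the $\DYderives$ subterm property) are immediate, and $\rnsymm$ simply inherits from its $\rnax$/$\rnprom$ premise (normality condition 2). The rule $\rncons$ is where genuinely new (non-hypothesis) terms appear, but they are exactly the built-up terms $\func(\cdot,\cdot)$ occurring in $c$, so they stay inside $\subterms(c)$; this is also why claim (iii) must exclude $\rncons$. For $\rnproj$ I would use normality condition 7: the subproof contains no $\rncons$, so claim (iii) applies to its immediate premise and forces both projected components into $\subterms(T)\cup\subterms(E)$. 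The delicate rule is $\rntrans$: here I would use normality conditions 4 and 5 to argue that each intermediate term $t_i$ sits adjacent to a premise that does not end in $\rncons$, then case-split on that premise's last rule; a premise ending in $\rncons$ is ruled out for atomic $t_i$, and the remaining possibilities ($\rnax$, $\rnproj$, $\rnsymm$/$\rnprom$) each place $t_i$ in $\subterms(T)\cup\subterms(E)$. Underlying several of these cases is an auxiliary fact, proved in the same induction, that any atomic term (name or variable) occurring as a side of a derived equality lies in $\subterms(T)\cup\subterms(E)$: names are never manufactured, and a normal $\DYderives$ proof of a name ends in $\rnax$ or a destructor, hence exhibits it as a subterm of $T$.

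For the list part, the crucial observation is that a list membership can only be consumed by $\rnprom$, $\rnsubst$, or $\rnlint$. Singleton lists are safe: their single element is either drawn from a list in $E$ or shown $\DYderives$-derivable (in $\rnweak$, via $T\DYderives n_i$) and hence a subterm of $T$, so every $[n]$ that appears has $n\in\subterms(T)\cup\subterms(E)$. For $\rnlint$ I would invoke normality condition 6: its premises end in neither $\rnlint$ nor $\rnweak$, so claim (iv) applies to them, while the conclusion's intersection list lies in $\listsof(\{c\})$, closing claim (ii). $\rnsubst$ preserves the list unchanged, and the substituted-out term is atomic (hence controlled by the auxiliary fact), so the list stays equal to the conclusion's.

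The hard part will be the list bookkeeping, specifically making claim (iv) survive $\rnsubst$ when its list premise was produced by $\rnlint$ or $\rnweak$: a multi-element intersection could in principle float up to an intermediate $\rnsubst$ node. I would resolve this by tracking that a multi-element (non-$E$) list can only propagate through $\rnsubst$ chains---never into $\rnprom$ (which demands a singleton) nor, by normality condition 6, into a fresh $\rnlint$---and so must ultimately terminate at the conclusion, whence it equals $\listsof(\{c\})$; the consistency of $(T;E)$ is what guarantees these intersections are nonempty and keeps $\rnweak$-generated lists aligned with the conclusion (exactly the content behind transformation rule R11). Getting the interplay of normality conditions 6 and 7 with the consistency hypothesis right, so that the simultaneous induction on (i)--(iv) never becomes circular, is the main obstacle.
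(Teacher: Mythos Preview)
Your plan is essentially the paper's: structural induction on the last rule, with claim~(iii) invoked at $\rnproj$ via normality condition~7, claim~(iv) invoked at $\rnlint$ via condition~6, and the $\rntrans$ case handled by showing each intermediate term lies in a premise containing no $\rncons$. The paper packages the $\rntrans$ reasoning into two auxiliary facts it states up front---(F1) a premise of $\rntrans$ in a normal proof that does not end in $\rncons$ contains no $\rncons$ anywhere, and (F2) any normal proof of a list-membership assertion contains no $\rncons$---and your argument reconstructs these implicitly. Your ``auxiliary fact'' about atomic sides of equalities corresponds to the paper's observation in the $\rnsubst$ case that the syntax forces the head of a list membership into $\names\cup\vars$.

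Your proposed resolution of the ``hard part'', however, does not go through as written. You claim that a multi-element non-$E$ list, once produced by $\rnlwk$, can never reach a fresh $\rnlint$ ``by normality condition~6''; but condition~6 only forbids $\rnlwk$ or $\rnlint$ as an \emph{immediate} premise of $\rnlint$, so the pattern $\rnlwk\to\rnsubst\to\rnlint$ is still normal, and along it your claim~(iv) fails at the $\rnsubst$ node. The paper's proof shares exactly this gap: its $\rnsubst$ case says only ``the result follows from IH'', and its $\rnlint$ case then applies~(iv) to premises that may end in $\rnsubst$. A straightforward repair is to add normalization rules pushing $\rnlwk$ and $\rnlint$ downward past $\rnsubst$ (e.g.\ replace $\rnsubst(\rnlwk(\pi),\pi')$ by $\rnlwk(\rntrans(\rnsymm(\pi'),\pi))$, and distribute $\rnsubst$ into the premises of $\rnlint$); in the resulting normal form no $\rnsubst$ has an $\rnlwk$- or $\rnlint$-headed major premise, and then condition~6 really does block the flow you describe.
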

We implicitly use the following easily provable facts. 
\begin{enumerate}[label=(F\arabic*)]
\item \label{item:f1} If a normal proof $\pi$ ends in $\rntrans$ and an immediate subproof $\varpi$ does not end in $\rncons$, then $\rncons$ does not occur in $\varpi$. 
\item \label{item:f2} If a normal proof $\pi$ derives a list membership assertion, $\rncons$ does not occur in $\pi$.
\end{enumerate}

\begin{proof}
    Let $\rnrule$ be the last rule of $\pi$. We have the following cases. We mention $\listsof(\pi)$ only in cases where the rules involve lists. 
    \begin{itemize}
        \item $\rnrule = \rnax$: $\alpha \in E$, so $\termsof(\pi) \subseteq \subterms(E)$ and $\listsof(\pi) \subseteq \listsof(E)$. 
        \item $\rnrule = \rneq$: $\alpha$ is $\equals{t}{t}$ and $T \DYderives t$. Since $\pi$ is a normal proof whose $\DYderives$ subproofs are also normal, $T \DYderives t$ via a proof ending in a destructor rule, and by subterm property for $\DYderives$, it follows that $t \in \subterms(T)$. Thus $\termsof(\pi) = \{t\} \subseteq \subterms(T)$.
        \item $\rnrule = \rnsymm$: $\termsof(\pi) = \termsof(\pi')$, where $\pi'$ is the immediate subproof, and the statement follows by IH.
        \item $\rnrule = \rncons$: $\alpha$ is $\equals{\func(t_{1},t_{2})}{\func(u_{1},u_{2})}$, and for $i \in \{1,2\}$, there is a subproof $\pi_{i}$ with conclusion $\equals{t_{i}}{u_{i}}$. By IH, $\termsof(\pi_{i}) \subseteq \subterms(T \cup \{t_{i},u_{i}\}) \cup \subterms(E) \subseteq \subterms(T) \cup \subterms(E \cup \{\alpha\})$ for $i \in \{1,2\}$. Thus $\termsof(\pi) \subseteq \subterms(T) \cup \subterms(E \cup \{\alpha\})$.
        \item $\rnrule = \rntrans$: Suppose the subproofs of $\pi$ are $\pi_{1}$ through $\pi_{k-1}$ with conclusions $\equals{t_{1}}{t_{2}}$ through $\equals{t_{k-1}}{t_{k}}$ respectively, and $\alpha = \equals{t_{1}}{t_{k}}$. Since $\pi$ is a normal proof, no two adjacent premises of $\rnrule$ are obtained by $\rncons$, and no premise of $\rnrule$ is obtained by $\rntrans$. 
        The following cases arise.
        \begin{itemize}
            \item $r \in \{t_{1},t_{k}\}$. In this case, $r \in \subterms(\alpha)$.
            
            \item $r \in \termsof(\pi_{i})$, where $\pi_{i}$ does not end in $\rncons$. By~\ref{item:f1}, $\rncons$ does not occur in $\pi_{i}$. By IH, $r \in \subterms(T) \cup \subterms(E)$. 
            
            \item $r \in \termsof(\pi_{i})$, where $\pi_{i}$ ends in $\rncons$, and $1 < i < k-1$. Both $\pi_{i-1}$ and $\pi_{i+1}$ end in a rule other than $\rncons$, by normality of $\pi$. So, by~\ref{item:f1}, $\rncons$ does not occur in $\pi_{i-1}$ and $\pi_{i+1}$, and $t_{i}, t_{i+1} \in \termsof(\pi_{i-1}) \cup \termsof(\pi_{i+1}) \subseteq \subterms(T) \cup \subterms(E)$ (by IH on $\pi_{i-1}$ and $\pi_{i+1}$). So, by applying IH on $\pi_{i}$, we get $r \in \subterms(T) \cup \subterms(E \cup \{\equals{t_{i}}{t_{i+1}}\}) \subseteq \subterms(T) \cup \subterms(E)$.
            
            \item $r \in \termsof(\pi_{1})$, where $\pi_{1}$ ends in $\rncons$. By normality of $\pi$, we see that $\pi_{2}$ ends in a rule other than $\rncons$. So $\rncons$ does not occur in $\pi_{2}$. By IH on $\pi_{2 }$, $t_{2} \in \termsof(\pi_{2}) \subseteq \subterms(T) \cup \subterms(E)$. By IH on $\pi_{1}$, $r \in \subterms(T\cup\{t_{1}, t_{2}\}) \cup \subterms(E) \subseteq \subterms(T) \cup \subterms(E \cup \{\alpha\})$.

            \item $r \in \termsof(\pi_{k-1})$, where $\pi_{k-1}$ ends in $\rncons$. The proof is similar to the above. 
        \end{itemize}        
        \item $\rnrule = \rnproj$: Let $\alpha = \equals{t}{u}$, got from a proof $\pi'$ with conclusion $\equals{a}{b}$. Since $\pi$ is normal, $\rncons$ does not occur in $\pi$ (or in $\pi'$). By IH, $a,b \in \termsof(\pi') \subseteq \subterms(T) \cup \subterms(E)$. Since $t, u \in \subterms(\{a,b\})$, we have $\termsof(\pi) \subseteq \subterms(T) \cup \subterms(E)$.
        \item $\rnrule = \rnprom$: $\alpha$ is $\equals{t}{u}$, and the immediate subproof $\pi'$ proves $t\listmemb{[u]}$. $\pi'$ does not contain $\rncons$, and so by IH, $\termsof(\pi) = \termsof(\pi') \subseteq \subterms(T) \cup \subterms(E)$. Note that $\listsof(\pi) \subseteq \listsof(\pi') \cup \{[u]\}$, so the statement about lists is also true. 
        \item $\rnrule = \rnweak$: Let $\pi'$ be the immediate subproof. The result follows from IH and the fact that $\listsof(\pi) = \listsof(\pi') \cup \listsof(\alpha)$.
        \item $\rnrule = \rnlint$: All terms in the conclusion appear in some proper subproof, so the statement on terms follows by IH. None of the subproofs ends in $\rnlint$ or $\rnweak$ (and does not contain $\rncons$). Thus $\listsof(\pi') \subseteq \listsof(E) \cup \{[n] \mid n \in \subterms(T)\cup\subterms(E)]$, for every subproof $\pi'$. It follows that $\listsof(\pi) \subseteq \listsof(E \cup \{\alpha\}) \cup \{[n] \mid n \in \subterms(T) \cup \subterms(E \cup \{\alpha\})\}$. 
        
        \item $\rnrule = \rnsubst$: Let the major premise be $t\listmemb{\ell}$ and the minor premise be $\equals{t}{u}$. Both $t,u$ are from $\vars\cup\names$, and thus are in $\subterms(T) \cup \subterms(E)$. The result follows from IH.
        \item $\rnrule = \rnsays$: Let the major premise be $\beta$ and the minor premise be $\sk_{a}$. Since $T \DYderives \sk_{a}$, $\sk_{a} \in \subterms(T)$. And $\termsof(\pi) \subseteq \subterms(T) \cup \subterms(E) \cup \subterms(\beta) \cup \{\pk_{a}\} \subseteq \subterms(T) \cup \subterms(E \cup \{\alpha\})$.
        \qedhere
    \end{itemize}
\end{proof}

\end{document}